\documentclass[a4paper,abstracton,10pt]{article}

\usepackage{bm}
\usepackage[a4paper]{geometry}
\usepackage{booktabs}
\usepackage{stmaryrd}
\usepackage{pgfplots}
\usepackage{hyperref}
\usepackage{calc}
\usepackage{enumitem}
\usepackage{ulem}
\usepackage{mathtools}
\pgfplotsset{compat=newest}
\pgfplotsset{plot coordinates/math parser=false}

\usepackage{pifont}
\usepackage[numbers]{natbib}
\usepackage{color}

\definecolor{darkgreen}{rgb}{0.1,0.5,0.1}
\definecolor{darkblue}{rgb}{0.1,0.1,0.9}
\newcommand{\tocheck}[1]{{\color{darkgreen}{#1}}}

\newcommand{\sign}{\mathsf{sign}}
\newcommand{\spinv}{\mathsf{spinv}}

\newcommand{\E}{\mathbb{E}}

\newcommand{\range}{\mathcal{R}}

\newcommand{\ginvset}{\mathcal{G}}

\newcommand{\T}{\top}

\newcommand{\nullspace}{\ensuremath{\mathcal{N}}}

\newcommand{\vectorize}{\mathrm{vec}}

\newcommand{\ginv}[2]{\ensuremath{\mathsf{ginv}_{#1}(#2)}}

\newcommand{\ind}[1]{\mathbb{I}_{#1}}

\newcommand{\proj}{\ensuremath{\mathsf{proj}}}

\newcommand{\prob}{\mathbb{P}}

\newcommand{\dist}{\mathsf{dist}}

\newcommand{\erfc}{\mathrm{erfc}}

\newcommand{\setS}{{\cal S}}

\newcommand{\var}{\mathsf{Var}}


\usepackage{amsmath}
\usepackage{amssymb}
\usepackage{amsthm}
\usepackage{latexsym}
\usepackage{verbatim}
\usepackage{graphicx}
\usepackage{mdframed}

\definecolor{shade}{rgb}{0.93,0.93,0.93}

\newmdtheoremenv [backgroundcolor=shade, %
innertopmargin = -4pt , %
innerbottommargin =2pt , %
innerleftmargin = 1pt , %
innerrightmargin = 1pt, %
splittopskip = \topskip, %
skipbelow= 6pt, %
skipabove=6pt, %
topline=false,bottomline=false,leftline=false,rightline=false,]{theorem}{Theorem}[section]

\newmdtheoremenv[backgroundcolor=shade,%
innertopmargin = -4pt,%
innerbottommargin =2pt,%
innerleftmargin = 1pt,%
innerrightmargin = 1pt,%
splittopskip = \topskip,%
skipbelow= 6pt,%
skipabove=6pt,%
topline=false,bottomline=false,leftline=false,rightline=false,]{lemma}{Lemma}[section]

\newmdtheoremenv [backgroundcolor=shade, %
innertopmargin = -4pt , %
innerbottommargin =2pt , %
innerleftmargin = 1pt , %
innerrightmargin = 1pt, %
splittopskip = \topskip, %
skipbelow= 6pt, %
skipabove=6pt, %
topline=false,bottomline=false,leftline=false,rightline=false,]{corollary}{Corollary}[section]

\newmdtheoremenv [backgroundcolor=shade, %
innertopmargin = -4pt , %
innerbottommargin =2pt , %
innerleftmargin = 1pt , %
innerrightmargin = 1pt, %
splittopskip = \topskip, %
skipbelow= 6pt, %
skipabove=6pt, %
topline=false,bottomline=false,leftline=false,rightline=false,]{example}{Example}[section]

\newmdtheoremenv [backgroundcolor=shade, %
innertopmargin = -4pt , %
innerbottommargin =2pt , %
innerleftmargin = 1pt , %
innerrightmargin = 1pt, %
splittopskip = \topskip, %
skipbelow= 6pt, %
skipabove=6pt, %
topline=false,bottomline=false,leftline=false,rightline=false,]{definition}{Definition}[section]

\newmdtheoremenv [backgroundcolor=shade, %
innertopmargin = -4pt , %
innerbottommargin =2pt , %
innerleftmargin = 1pt , %
innerrightmargin = 1pt, %
splittopskip = \topskip, %
skipbelow= 6pt, %
skipabove=6pt, %
topline=false,bottomline=false,leftline=false,rightline=false,]{remark}{Remark}[section]

\newmdtheoremenv [backgroundcolor=shade, %
innertopmargin = -4pt , %
innerbottommargin =2pt , %
innerleftmargin = 1pt , %
innerrightmargin = 1pt, %
splittopskip = \topskip, %
skipbelow= 6pt, %
skipabove=6pt, %
topline=false,bottomline=false,leftline=false,rightline=false,]{proposition}{Proposition}[section]

{\begin{list}               
    {$\bullet$ \hfill}{
        \setlength{\leftmargin}{\parindent}
        \setlength{\parsep}{0.04\baselineskip}
        \setlength{\itemsep}{0.5\parsep}
        \setlength{\labelwidth}{\leftmargin}
        \setlength{\labelsep}{0em}}
    }
{\end{list}}

\providecommand{\cref}[1]{Chapter~\ref{chap:#1}}

\providecommand{\R}{\ensuremath{\mathbb{R}}}
\providecommand{\C}{\ensuremath{\mathbb{C}}}
\providecommand{\N}{\ensuremath{\mathbb{N}}}

\providecommand{\abs}[1]{\left|#1\right|}
\providecommand{\norm}[1]{\left\lVert#1\right\rVert}
\providecommand{\bignorm}[1]{\bigg\lVert#1\bigg\rVert}

\providecommand{\set}[1]{\left\lbrace#1\right\rbrace}

\providecommand{\bydef}{\overset{\mathrm{def}}{=}}

\providecommand{\parder}[2]{{\partial{#1} \over \partial{#2}}}
\providecommand{\parderr}[2]{{\partial^2{#1} \over \partial{#2}^2}}

\providecommand{\di}{\ensuremath{\text{d}}}
\providecommand{\e}{\ensuremath{\mathrm{e}}}

\renewcommand{\vec}[1]{\ensuremath{\mathbf{#1}}}
\providecommand{\mat}[1]{\ensuremath{\mathbf{#1}}}

\providecommand{\wt}[1]{\ensuremath{\widetilde{#1}}}


\providecommand{\mA}{\mat{A}} \providecommand{\mB}{\mat{B}}

\providecommand{\mI}{\mat{I}}

\providecommand{\mM}{\mat{M}}

\providecommand{\mW}{\mat{W}}

 \providecommand{\mG}{\mat{G}}

\providecommand{\mX}{\mat{X}}

\providecommand{\va}{\vec{a}} \providecommand{\vb}{\vec{b}}
 
\providecommand{\ve}{\vec{e}} 
 
\providecommand{\vg}{\vec{g}}
\providecommand{\vh}{\vec{h}} 
  
 \providecommand{\vp}{\vec{p}}
 
\providecommand{\vs}{\vec{s}}

\providecommand{\vu}{\vec{u}} \providecommand{\vw}{\vec{w}}

\providecommand{\vx}{\vec{x}} \providecommand{\vy}{\vec{y}}
\providecommand{\vz}{\vec{z}} 
 \providecommand{\vzero}{\vec{0}}
 \providecommand{\vv}{\vec{v}}





\DeclareMathOperator*{\argmin}{arg\,min}

\title{Beyond Moore-Penrose\\Part II: The Sparse Pseudoinverse}
\author{Ivan Dokmani\'c and R\'emi Gribonval}
\date{}

\begin{document}

\maketitle
\normalem


\begin{abstract}
This is the second part of a two-paper series on norm-minimizing generalized inverses. In Part II we focus on generalized inverses that are minimizers of entrywise $\ell^p$ norms, with the main representative being the \emph{sparse pseudoinverse} for $p = 1$. We are motivated by the idea to replace the Moore-Penrose pseudoinverse by a sparser generalized inverse which is in some sense well-behaved. Sparsity means that it is faster to multiply by resulting matrix; well-behavedness means that we do not lose much with respect to the least-squares performance of the MPP.

We first address questions of uniqueness and non-zero count of (putative) sparse pseudoinverses. We show that a sparse pseudoinverse is generically unique, and that it indeed reaches optimal sparsity for almost all matrices. We then turn to proving our main stability result: finite-size concentration bounds for the Frobenius norm of $\ell^p$-minimal inverses for $1 \leq p \leq 2$. Our proof is based on tools from convex analysis and random matrix theory, in particular the recently developed convex Gaussian min-max theorem. Along the way we prove several folklore facts about sparse representations and convex programming that were to the best of our knowledge thus far unproven.
\end{abstract}

\section{Introduction}
\label{sec:introduction}

A generalized inverse (or a pseudoinverse) of a rectangular matrix is a matrix that has some properties of the usual inverse of a regular square matrix. We call $\mX \in \C^{n \times m}$ a generalized inverse of $\mA \in \C^{m \times n}$ if it holds that $\mA \mX \mA = \mA$; we denote the set of all generalized inverses of $\mA$ by $\mathcal{G}(\mA)$.

The most common generalized inverse is the famous Moore-Penrose pseudoinverse (MPP). Some of the many uses of the MPP are to compute the least-squares fit to an overdetermined system of linear equations, or to find the shortest solution of an underdetermined system; many other related results are presented in Part I.

However, the MPP is only one out of infinitely many possible generalized inverses. By relaxing the requirements needed to get the MPP we free up degrees of freedom that can be optimized to promote other interesting properties. Our study focuses on a particular strategy to obtain alternative pseudoinverses: among all pseudoinverses, the MPP has the smallest Frobenius norm\footnote{It also minimizes a number of other norms (see Part I).}, so it seems natural to ask what happens if we replace the Frobenius norm by other matrix norms.

A generalized inverse of $\mA \in \C^{m \times n}$, $m < n$ that minimizes some arbitrary (quasi)norm $\nu$ is defined as
\begin{align}
  \ginv{\nu}{\mA} &\bydef \argmin_{\mX} \ \norm{\mX}_{\nu} \ \ \text{subject to} \ \ \mX \in \ginvset(\mA). \nonumber
\end{align}
Strictly speaking, $\ginv{\nu}{\mA}$ is a set, since the corresponding optimization program may have more than one solution. This, however, will not be the case for most studied norms and matrices (see, for example, Section \ref{sec:spinv_is_unique_and_optimal}) so we write ``$=$'' instead of ``$\in$'' and warn the reader when extra care is advised. We also note that if $\norm{\,\cdot\,}_\nu$ is a bona fide norm, then $\ginv{\nu}{\mA}$ involves solving a convex program. At least in principle, this means that it can actually be computed in the sense that any first-order scheme leads to a global optimum.

In this Part II, we specifically focus on entrywise $\ell^p$ norms\footnote{For brevity, we loosely call ``norm''  any quasi-norm such as $\ell^{p}$, $p<1$, as well as the ``pseudo-norm'' $\ell^{0}$.}  which are straightforward extensions of vector $\ell^p$ norms. Concretely, for $\mM \in \C^{m \times n}$ and $0 < p < \infty$ we have
\(
  \norm{\mM}_p \bydef \norm{\mathrm{vec}(\mM)}_p = \left( \sum_{ij} \abs{m_{ij}}^p \right)^{1/p},
\)
with $\mathrm{vec}(\,\cdot\,)$ denoting the concatenation of the columns of the argument matrix.
A particular entrywise norm is the Frobenius norm associated to $p = 2$ and mininimizing it gives the MPP. But our main motivation to look at entrywise norms is rather the case $p = 1$, which, as we show, leads to sparse pseudoinverses. 

The motivation is that applying a sparse pseudoinverse requires less computation than applying a full one
\cite{Dokmanic:2013bo,Li:2013cx,Casazza:ev,Krahmer:2012cy}. We could take advantage of this fact if we knew how to compute a sparse pseudoinverse that is in some sense stable to noise. Ignoring the last requirement, finding the sparsest pseudoinverse may be formulated as
\begin{equation}
    \label{eq:intro_l0}
    \ginv{0}{\mA} \bydef \argmin_{\mX}~\norm{\vectorize(\mX)}_{0}\ \text{subject to} \ \mX \in \ginvset(\mA),
\end{equation}
where $\norm{\, \cdot \,}_{0}$ counts the total number of non-zero entries
in a vector and $\vectorize(\cdot)$ transforms a matrix into a vector by stacking its columns. The non-zero count gives the naive complexity of applying
$\mX$ or its adjoint to a vector (for an illustration, see Figure \ref{fig:spinv}).

Any optimally sparse generalized inverse of $\mA$ in the sense of the $\ell^{0}$ norm \eqref{eq:intro_l0} is by definition in the set $\ginv{0}{\mA}$. When $\mA \in \C^{m \times n}$, $m < n$, has full column rank the condition $\mX \in \ginvset(\mA)$ is equivalent to $\mA\mX=\mI_{m}$, and computing an element from this set can be expressed columnwise, as a collection of $\ell^{0}$ minimization problems
\[
  [\ginv{0}{\mA}]_i = \argmin_{\mA \vx = \ve_i} \norm{\vx}_{\ell^0}.
\]
Even though optimization problems of this kind are NP-hard \cite{natarajan95:_spars,DMA97}, we will see that for most matrices $\mA$ finding \emph{a} solution is trivial and not very useful: just invert any full-rank $m \times m$ submatrix and set the rest to zero. What we mean by ``not useful'' is that the resulting matrix is in general poorly conditioned. 
On the other hand, the vast literature on conditions of equivalence between solutions of $\ell^{0}$ and $\ell^{1}$ minimization (see, e.g. \cite{Donoho:2006ci}) suggests that $\ell^{1}$ minimization is a good computationally tractable proxy. In fact, we will show that the sparsest pseudoinverse can be computed by convex optimization; that is, a convex program generically provides \emph{a} minimizer of the $\ell^{0}$ norm. Moreover, we will show that this minimizer is generically unique. This motivates the definition of the specific notation
\begin{equation}
  \label{eq:l1_instead_l0}
  \spinv(\mA) \bydef \ginv{1}{\mA} = \argmin_{\mX}~\norm{\vectorize(\mX)}_1\ \text{subject to} \ \mX \in \ginvset(\mA).
\end{equation}
Not only is \eqref{eq:l1_instead_l0} computationally tractable but we will show  that it leads to well-behaved matrices that are indeed sparse. Intuitive reasoning is as follows: an $m \times n$ matrix $\mA$ is generically full rank, hence the constraint $\mX \in \ginvset(\mA)$ is generically equivalent to $\mA \mX = \mI_{m}$. As this is a shorthand notation for $m^2$ linear equations, it constrains $m^2$ degrees of freedom. The matrix $\mX$ has $nm$ entries, leaving us with $nm - m^2$ degrees of freedom, which will (hopefully) be set to zero by $\ell^1$ minimization.

\begin{figure}
\centering
\includegraphics[scale=.8]{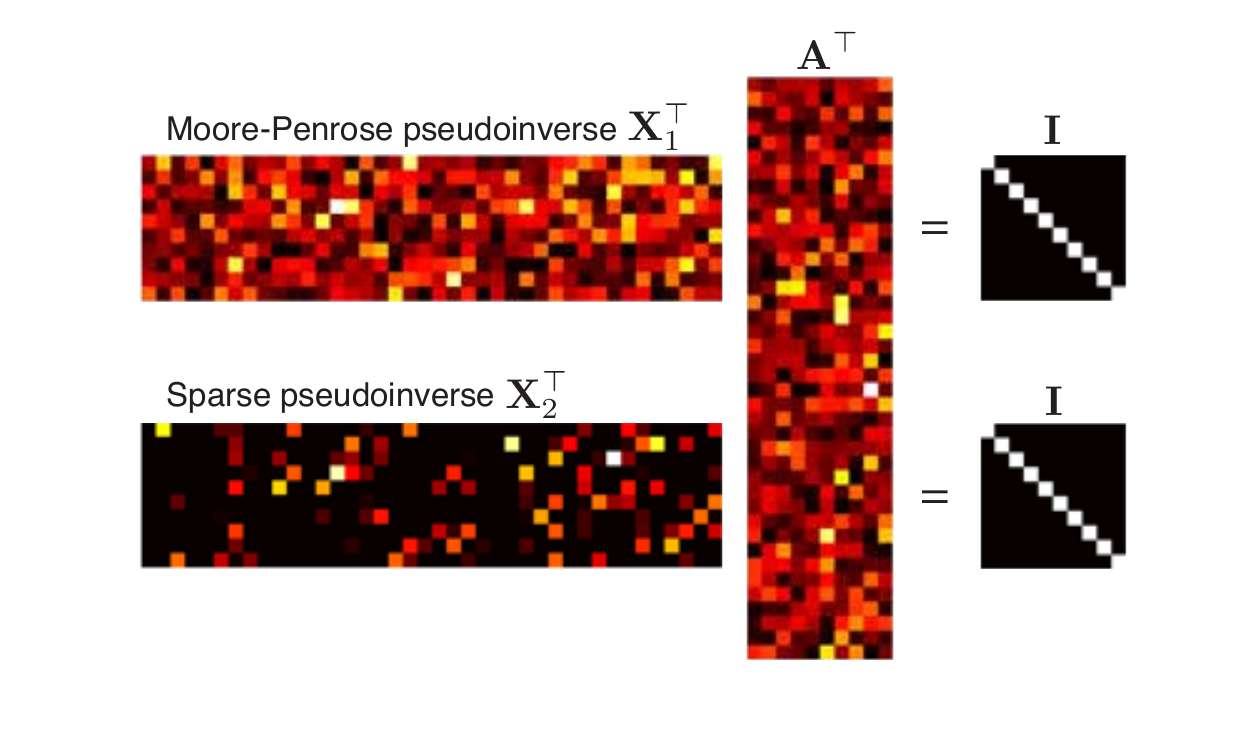}
\caption{Illustration of the sparse pseudoinverse: $\mA$ is a realization of a $10 \times 30$ iid Gaussian matrix; the MPP $\mX_1 = \mA^\dag$ is a full matrix; the sparse pseudoinverse $\mX_2 = \spinv(\mA)$ has only $10^2$ non-zeros among $10\times30$ entries.}
\label{fig:spinv}
\end{figure}

\subsection{Prior Art}

Several earlier work address sparsity of generalized inverses and dual frames \cite{Casazza:ev, Krahmer:2012cy, Li:2013cx}. These works concentrate on existence results and explicit constructions of sparse frames and sparse dual frames with prescribed spectra. Krahmer, Kutyniok, and Lemvig \cite{Krahmer:2012cy} establish sharp bounds on the sparsity of dual frames, showing that generically, for $\mA \in \C^{m \times n}$, the sparsest dual has $mn - m^2$ zeros. Li, Liu, and Mi \cite{Li:2013cx} provide bounds on the sparsity of duals of Gabor frames which are better than generic bounds. They also introduce the idea of using $\ell^p$ (for $p = 1)$ minimization to compute these dual frames, and  show that under certain conditions, the $\ell^1$ minimization yields the sparsest possible dual Gabor frame. Further examples of non-canonical dual Gabor frames are given by Perraudin \emph{et al.}, who use convex optimization to derive dual Gabor frames with more favorable properties than the canonical one \cite{Perraudin:2014we}, particularly in terms of time-frequency localization.

Our stability results fit squarely within random matrix theory. However, a perusal of the corresponding rich literature indicates that results on finite-size concentration bounds for norms of pseudoinverses are scarce. One notable exception is \cite{Halko:2011kg} which gives the expected Frobenius norm of the MPP and a corresponding upper bound on a probability of large deviation; we note that our techniques allow us to get a finite-size concentration result for the MPP for a range of deviations complementary to the one considered in \cite{Halko:2011kg}. It is also worth noting that there are a number of results for square matrices \cite{Rudelson:2008gv,Vershynin:2014jv}.

Finally, the sparse pseudoinverse was previously studied in \cite{Dokmanic:2013bo}, where it was shown empirically that the minimizer is indeed a sparse matrix, and that it can be used to speed up the resolution of certain inverse problems.

\subsection{Our Contributions and Paper Outline}

We address fundamental questions of uniqueness and stability of the \emph{sparse pseudoinverse}. First, in Section \ref{sec:spinv_is_unique_and_optimal}, we show that minimizing the entrywise $\ell^1$ norm of the generalized inverse generically results in a maximally sparse inverse. By exploiting connections between basis pursuit and basis pursuit denoising/lasso we also show that the optimum is generically unique. We then prove in Section~\ref{sub:stability} that unlike other, simpler strategies that yield a sparsest generalized inverse, $\ell^1$ minimization produces a \emph{good}, well-conditioned matrix. 

We measure well-behavedness by the Frobenius norm, 
which arises naturally in the context of linear inverse problems with noisy measurements: for white noise the output MSE is directly proportional to the Frobenius norm. Our main result is a 
characterization of the typical Frobenius norm of the $\ell^p$-minimal generalized inverse for random Gaussian matrices for $1 \leq p \leq 2$. As a corollary we show the the Frobenius norm of the sparse pseudoinverse $\ginv{1}{\mA}$ is indeed controlled.

The proof uses the \emph{convex Gaussian min-max theorem} \cite{Thrampoulidis:2016vo,Oymak:2013vm,Thrampoulidis:2015vf} which has thus far found major use in quantifying the performance of regularized M-estimators such as the lasso \cite{Thrampoulidis:2016vo}. Unlike these previous applications, we give finite-size concentration bounds rather than ``in probability'' asymptotic results.

Along the way we collect a number of useful byproducts related to matrix analysis, sparse representations, and convex programming. For example, we prove that basis pursuit generically has a unique minimizer---a folklore fact mentioned repeatedly in the literature, but of which we could find no proof. The same goes for the fact that $\ell_p$ minimization for $0 \leq p \leq 1$ has a sparse solution which is to the best of our knowledge yet another piece of thus far unproven folklore.


\section{A Sparse Pseudoinverse is (Generically) Unique and Optimally Sparse}

\label{sec:spinv_is_unique_and_optimal}

As the section title promises, we first settle the questions of uniqueness and non-zero count of a ``sparse'' pseudoinverse $\spinv(\mA)$. We rely on two groups of facts: first, existing results on sparse dual frames which characterize maximal levels of sparsity attainable by generalized inverses \cite{Krahmer:2012cy,Casazza:ev,Li:2013cx}, and second, new results on constrained $\ell^p$ minimization which we put forward in Section \ref{sub:vector_facts}.

Our main result in this section can be informally stated as
\begin{theorem}[Informal]
For a generic $\mA \in \R^{m \times n}$ with $m < n$, $\spinv(\mA)$ is unique, $\spinv(\mA) \in \ginv{0}{\mA}$, and it has exactly $m$ non-zeros per column.
\end{theorem}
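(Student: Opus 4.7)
My plan is to reduce the matrix-level problem to $m$ independent vector-level basis pursuit problems and invoke the constrained $\ell^p$-minimization results from Section~\ref{sub:vector_facts} together with the existing sparse dual frame results of \cite{Krahmer:2012cy}.

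First, I would observe that for $\mA\in\R^{m\times n}$ with $m<n$ the set of $m\times n$ matrices with less than full row rank is a proper Zariski-closed (hence Lebesgue-null) subset of $\R^{m\times n}$, so generically $\rank(\mA)=m$ and therefore $\ginvset(\mA)=\{\mX:\mA\mX=\mI_m\}$. Under this reduction, the objective $\|\mX\|_1=\sum_{i=1}^{m}\|\vx_i\|_1$ and the constraint $\mA\mX=\mI_m$ decouple across the columns $\vx_i$ of $\mX$, so the program defining $\spinv(\mA)$ splits into the $m$ independent basis pursuit problems
\begin{equation*}
\vx_i^\star\in\argmin_{\vx\in\R^n}\ \|\vx\|_1\ \text{subject to}\ \mA\vx=\ve_i,\qquad i=1,\dots,m.
\end{equation*}
Thus any statement about each column reduces to a statement about a standard BP problem with right-hand side $\ve_i$.

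Next I would apply the vector-level facts promised in Section~\ref{sub:vector_facts}: for a generic matrix $\mA$, the BP problem with right-hand side $\vb$ in general position admits a \emph{unique} minimizer, and this minimizer is $m$-sparse (one of the ``folklore'' facts the paper promises to establish). Applying this to each of the $m$ canonical right-hand sides $\ve_1,\dots,\ve_m$ gives us $m$ measure-zero exceptional sets in $\R^{m\times n}$; their finite union remains measure-zero, so outside it, each $\vx_i^\star$ is uniquely determined and supported on exactly $m$ entries. Uniqueness of every column then yields uniqueness of $\spinv(\mA)$, and the per-column nonzero count of $m$ gives a total of $m^2$ nonzeros in $\spinv(\mA)$.

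Finally, to conclude $\spinv(\mA)\in\ginv{0}{\mA}$, I would invoke the Krahmer–Kutyniok–Lemvig bound cited in the introduction \cite{Krahmer:2012cy}: for generic $\mA\in\R^{m\times n}$, every generalized inverse contains at most $mn-m^2$ zeros, equivalently at least $m^2$ nonzeros, with equality achievable. Since our $\spinv(\mA)$ attains this lower bound of $m^2$ nonzeros, it belongs to $\ginv{0}{\mA}$, completing all three claims. The main obstacle—and what really must be proved in Section~\ref{sub:vector_facts}—is the vector-level uniqueness/sparsity statement for BP at a generic $\mA$; the matrix-level theorem above is then a clean column-wise packaging of that result, combined with the compatibility of a finite number of ``generic'' conditions (one per right-hand side $\ve_i$) and the known maximal-sparsity ceiling on generalized inverses.
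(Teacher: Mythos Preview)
Your proposal is correct and follows essentially the same route as the paper: decouple $\spinv$ into $m$ per-column basis pursuit problems, apply the vector-level results of Section~\ref{sub:vector_facts} (Lemmas~\ref{lem:zeros_in_l1} and~\ref{le:uniqueL1}) to each right-hand side $\ve_i$, take the finite union of exceptional sets, and invoke the Krahmer--Kutyniok--Lemvig bound (Lemma~\ref{lem:krahmer_measure_zero}) to certify that the resulting $m^2$-nonzero inverse is indeed $\ell^0$-optimal. The only cosmetic difference is that the paper first isolates explicit general-position hypotheses in Theorem~\ref{th:spinvmsparse} and then passes to the generic statement in Corollary~\ref{th:spinvmsparsegen}, whereas you work directly at the generic level.
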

We begin by proving two preliminary results about vectors.

\subsection{Vector Facts}
\label{sub:vector_facts}

First, we show that the set of $\ell^{p}$ minimizers for $0 \leq p \leq 1$ always contains an $m$-sparse point. While this is known folklore, we could only find a proof for $p=1$ \cite[Section 1.4]{Elad:2010wo} so we give one for $0 \leq p \leq 1$ below, courtesy of Simon Foucart:
\begin{lemma}
  \label{lem:zeros_in_l1}
  Consider $0 \leq p \leq 1$ and let $\mA \in \R^{m \times n}$ with $m<n$ and $\vy \in \range(\mA)$. Then the
  set of minimizers of $\norm{\vz}_{p}$ subject to the
  constraint that $\mA \vz = \vy$ always contains a point with at most $m$ non-zero entries.
\end{lemma}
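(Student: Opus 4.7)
The plan is a null-space perturbation argument. A minimizer $\vz^{*}$ of $\norm{\vz}_{p}$ on the affine set $\{\vz : \mA\vz = \vy\}$ exists for every $0 \le p \le 1$: for $p = 0$ the functional is integer-valued and bounded below, while for $0 < p \le 1$ the $\ell^{p}$ (quasi)norm is continuous and coercive on $\R^{n}$ and the feasible set is a non-empty closed affine subspace. Set $S \bydef \mathrm{supp}(\vz^{*})$ and assume, for the sake of either a contradiction (when $p \neq 1$) or an iteration step (when $p = 1$), that $|S| > m$. Since $\rank(\mA) \le m < |S|$, the columns of $\mA$ indexed by $S$ are linearly dependent, so there exists $\vh \neq \vzero$ with $\mathrm{supp}(\vh) \subseteq S$ and $\mA\vh = \vzero$. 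The line $t \mapsto \vz^{*} + t\vh$ is then entirely feasible, and I would analyse the scalar function $f(t) \bydef \norm{\vz^{*} + t\vh}_{p}^{p}$ (with the obvious meaning when $p = 0$) on the open interval $I = (t_{-}, t_{+})$ on which no coordinate of $\vz^{*} + t\vh$ changes sign. The endpoints of $I$ are finite and non-zero because $\vh$ is supported on $S$ while $\vz^{*}$ is non-zero on $S$, and crossing either endpoint makes at least one additional coordinate vanish.

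Next I would split into three regimes according to $p$. For $p = 0$, $f$ is constant on $I$ and strictly drops at the endpoint, producing a feasible vector with strictly smaller $\ell^{0}$ norm than $\vz^{*}$---contradicting minimality, so $|S| \le m$. For $0 < p < 1$, each term $\abs{z_{i}^{*} + t h_{i}}^{p}$ with $i \in S$ is strictly concave in $t$ on $I$ since $x \mapsto \abs{x}^{p}$ is strictly concave on each side of $0$; as $\vh$ is non-zero on at least one index of $S$, the sum $f$ is strictly concave on $I$. But a strictly concave function on an open interval cannot have an interior local minimum, whereas $t = 0$ must be one by global minimality of $\vz^{*}$---again a contradiction, so $|S| \le m$. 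For $p = 1$, $f$ is affine on $I$ with slope $\sum_{i \in S}\sign(z_{i}^{*})\, h_{i}$; minimality forces this slope to vanish (otherwise a small move in one direction would strictly decrease $\norm{\vz}_{1}$), so $f$ is constant on $I$ and the endpoint $\vz^{*} + t_{+}\vh$ is another minimizer whose support is a strict subset of $S$; I then iterate this reduction.

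The main obstacle, to the small extent there is one, is the $p = 1$ regime, where the perturbation does not yield a contradiction but only an equally good minimizer with strictly smaller support. Termination of the iteration is nonetheless immediate: each step reduces $|S|$ by at least one, so after at most $|S| - m \le n - m$ steps I reach a minimizer supported on at most $m$ indices. Everything else is elementary one-variable analysis combined with the rank bound used to construct $\vh$, so I do not expect any further complication.
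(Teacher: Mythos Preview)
Your argument is correct and shares the null-space perturbation skeleton with the paper, but the execution differs in two respects. First, the paper selects upfront a minimizer with the smallest $\ell^{0}$ norm among all $\ell^{p}$-minimizers; this converts your $p=1$ iteration into a one-shot contradiction, since moving to the endpoint would yield a minimizer of strictly smaller support than one already chosen to be support-minimal. Second, for $0 < p \le 1$ the paper bounds $\norm{\vz+\epsilon\vh}_{p}^{p}$ via H\"older's inequality with exponents $1/p$ and $1/(1-p)$, obtaining an upper bound linear in $\epsilon$ with slope $\sigma = \sum_{j}\sign(z_{j})h_{j}/\abs{z_{j}}^{1-p}$; this is precisely the tangent-line bound that your concavity argument delivers directly, and indeed $p\sigma = f'(0)$ in your notation. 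Your route is more elementary---no H\"older needed---and for $0 < p < 1$ it actually yields the slightly stronger conclusion that \emph{every} minimizer is $m$-sparse, not just some minimizer. One minor slip: both endpoints of $I$ need not be finite (if all ratios $-z_{i}^{*}/h_{i}$ share a sign, one side of $I$ is unbounded), but at least one endpoint is finite since $\vh$ is nonzero on $S$, and that is all your argument uses.
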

\begin{proof} 
Let $\mathcal{Z}(\vy)$ be the (non-empty, closed) set of all solutions to $\vy=\mA \vz$ that reach the minimum $\ell^{p}$ norm solution. Let $\vz \in \mathcal{Z}(\vy)$ be an element of this set reaching the smallest $\ell^{0}$ norm: $\norm{\vz}_{0} \leq \norm{\vz'}_{0}$ for all $\vz' \in \mathcal{Z}(\vy)$.
For the sake of contradiction, assume that $\norm{\vz}_{0} > m$ and denote by $S$ the support of $\vz$. By assumption the size of $S$ exceeds $m$, the number of rows of $\mA$, hence the columns of $\mA$ indexed by $S$ are linearly dependent and there exists a nonzero vector $\vh$ of the null space of $\mA$ that is supported in $S$. Considering $\vz' = \vz+\epsilon \vh$ we have $\mA \vz'  = \vy$. 
There is a critical $\epsilon_{o}>0$ and $s \in \{\pm 1\}$ such that: $\sign(\vz+\epsilon \vh) = \sign(\vz)$ when $\abs{\epsilon} < \epsilon_{o}$; and $\norm{\vz + s\epsilon_{o} \vh}_{0} < \norm{\vz}_{0}$ (corresponding to the cancellation of at least one component of $\vz$ by adding $s \epsilon_{o} \vh$). In the remainder of the proof we take $s=1$ with no loss of generality.

For $p = 0$, the contradiction follows immediately. For $0 < p \leq 1$, we let $q \bydef 1/p$ and $q^{*} = 1/(1-p)$ so that $1/q+1/q^*=1$. Then by H{\"o}lder's inequality, when $\abs{\epsilon} < \epsilon_{o}$
\begin{eqnarray*}
 \norm{\vz'}_{p}^{p} 
 &=& \sum_{j \in S} \tfrac{\abs{z'_{j}}^{p}}{\abs{z_{j}}^{p(1-p)}} \cdot \abs{z_{j}}^{p(1-p)}
 \leq 
\left(\sum_{j \in S} \left(\tfrac{\abs{z'_{j}}^{p}}{\abs{z_{j}}^{p(1-p)}}\right)^{q}\right)^{1/q} \cdot \left(\sum_{j \in S} \left(\abs{z_{j}}^{p(1-p)}\right)^{q^*}\right)^{1/q^*}\\
&=&
\left(\sum_{j \in S} \tfrac{\abs{z'_{j}}}{\abs{z_{j}}^{1-p}}\right)^{p} \cdot \left(\sum_{j \in S} \abs{z_{j}}^{p}\right)^{1-p}
=
\left( \sum_{j \in S} \sign(z_{j}+\epsilon h_{j})\,\tfrac{z_{j}+\epsilon h_{j}}{\abs{z_{j}}^{1-p}}\right)^{p} \cdot \norm{\vz}_{p}^{p(1-p)}\\
&=&
\left( \sum_{j \in S} \sign(z_{j})\,\tfrac{z_{j}+\epsilon h_{j}}{\abs{z_{j}}^{1-p}}\right)^{p} \cdot \norm{\vz}_{p}^{p(1-p)}
=
\Big( \norm{\vz}_{p}^{p} + \epsilon \underbrace{\sum_{j \in S} \sign(z_{j})\,\tfrac{h_{j}}{\abs{z_{j}}^{1-p}}}_{\bydef \sigma}\Big)^{p} \cdot \norm{\vz}_{p}^{p(1-p)}
\end{eqnarray*}
Hence

\begin{itemize}
  \item if $\sigma \neq 0$,
  choosing $\abs{\epsilon} < \epsilon_{o}$ with $\sign(\epsilon) = -\sign(\sigma)$ 
  we obtain $\norm{\vz'}_p^{p} < \norm{\vz}_p^{p}$; a contradiction.
  \item if $\sigma = 0$,
  choosing $\epsilon = s \epsilon_{o}$, by continuity we get $\norm{\vz'}_{p}^{p} \leq \norm{\vz}_{p}^{p}$ (hence $\vz' \in \mathcal{Z}(\vy)$), yet $\norm{\vz'}_{0} < \norm{\vz}_{0}$; another contradiction.
\end{itemize}
\end{proof}

For the second fact, related to uniqueness, we need the notion of general position. To this end we quote two definitions from~\cite{Tibshirani:2013ce}.
\begin{definition}
  The $m \times n$ matrix $\mA \in \R^{m \times n}$ has columns {\em in general position} if for any $k < m$ the affine span of any $k+1$ vectors $\pm \va_{i_{1}}, \ldots, \pm \va_{i_{k+1}}$ does not contain any vector $\pm \va_{i}$, $i \notin \{i_{1},\ldots,i_{k+1}\}$.
\end{definition}
\begin{definition}
  The $m \times n$ matrix $\mA \in \R^{m \times n}$ is {\em in general position} with respect to the vector $\vy \in \R^{m}$ if for any $k < m$, $\vy$ does not belong to the linear span of any $k$ columns of $\mA$.
\end{definition}

We now show that for a generic $\mA \in \R^{m \times n}$ and $\vy \in \R^{m}$, the problem
\begin{equation}
    \label{eq:l1exact}
    \min_{\vx:\mA \vx = \vy} \norm{\vx}_1
\end{equation}
has a unique minimizer. Similarly to Lemma \ref{lem:zeros_in_l1}, this fact seems to be a piece of folklore. But although it has been mentioned in  the literature (e.g. \cite{Dossal:2012cd}) we could find no
proof, so we produce one based on the properties of the lasso.

\begin{lemma} \label{le:uniqueL1}
  Assume that $\mA \in \R^{m \times n}$ has columns in general position, and that $\mA$ is in general position with respect to $\vy \in \R^{m}$.     Then~\eqref{eq:l1exact} has a unique solution.
\end{lemma}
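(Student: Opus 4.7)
The plan is to exhibit a strict dual certificate for basis pursuit, mirroring Tibshirani's analysis of the lasso but carried out in the constrained $\ell^{1}$ setting. The two general position hypotheses play distinct roles: $\mA$ in general position with respect to $\vy$ pins down the size of a minimizer's support and the invertibility of the corresponding column submatrix, while column general position yields the strict inequality in the dual certificate outside that support.

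First, I invoke Lemma~\ref{lem:zeros_in_l1} with $p=1$ to select a minimizer $\vx^{\star}$ of \eqref{eq:l1exact} whose support $S$ has at most $m$ elements. Since $\vy = \mA_{S}\vx^{\star}_{S}$ lies in the span of the columns indexed by $S$, general position of $\mA$ with respect to $\vy$ forces $|S|\ge m$, hence $|S|=m$; the same hypothesis further forces these $m$ columns to span $\R^{m}$ (otherwise $\vy$ would live in the span of fewer than $m$ columns). Consequently $\mA_{S}\in\R^{m\times m}$ is invertible.

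Next, I construct the candidate dual vector $\vw \bydef (\mA_{S}^{\T})^{-1}\sign(\vx^{\star}_{S})$, which by construction satisfies $(\mA^{\T}\vw)_{j} = \sign(x^{\star}_{j})$ for every $j\in S$. Optimality of $\vx^{\star}$ in basis pursuit combined with invertibility of $\mA_{S}$ forces $\vw$ to be the unique dual multiplier satisfying the subdifferential KKT condition, and in particular $\|\mA^{\T}\vw\|_{\infty}\le 1$. The crux is to strengthen this to a \emph{strict} inequality off support. Suppose for contradiction that $(\mA^{\T}\vw)_{i}=1$ for some $i\notin S$. Setting $\vb \bydef \mA_{S}^{-1}\va_{i}$ and $\lambda_{j}\bydef b_{j}\sign(x^{\star}_{j})$ for $j\in S$, we find $\sum_{j\in S}\lambda_{j} = (\mA^{\T}\vw)_{i} = 1$ and
\[
  \va_{i}\ =\ \mA_{S}\vb\ =\ \sum_{j\in S} b_{j}\va_{j}\ =\ \sum_{j\in S}\lambda_{j}\bigl(\sign(x^{\star}_{j})\va_{j}\bigr),
\]
so $\va_{i}$ lies in the affine span of the $m$ signed columns $\{\sign(x^{\star}_{j})\va_{j}\}_{j\in S}$, contradicting column general position at $k=m-1$. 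The symmetric case $(\mA^{\T}\vw)_{i}=-1$ places $-\va_{i}$ in the same affine span and is ruled out identically.

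Finally, strict dual feasibility yields uniqueness. For any feasible $\vx = \vx^{\star}+\vh$ with $\mA\vh=0$, applying the coordinatewise subgradient inequality for $|\cdot|$ and using the identity $\sum_{j\in S}\sign(x^{\star}_{j})h_{j} + \sum_{j\notin S}(\mA^{\T}\vw)_{j}h_{j} = \vw^{\T}\mA\vh = 0$ yields
\[
  \|\vx\|_{1} - \|\vx^{\star}\|_{1}\ \ge\ \sum_{j\notin S}|h_{j}|\bigl(1 - |(\mA^{\T}\vw)_{j}|\bigr).
\]
Each factor $1-|(\mA^{\T}\vw)_{j}|$ is strictly positive by the previous step, so the right-hand side vanishes only when $h_{j}=0$ for every $j\notin S$; invertibility of $\mA_{S}$ then forces $\vh=0$. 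The main obstacle is this third step: marshalling column general position to rule out the boundary cases $(\mA^{\T}\vw)_{i}=\pm 1$, which is the geometric incarnation of strict dual feasibility and is precisely the place where the Tibshirani-style affine-span condition, rather than plain linear independence, is indispensable.
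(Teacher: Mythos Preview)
Your proof is correct, and it takes a genuinely different route from the paper's. The paper argues indirectly via the squared lasso: it invokes Fuchs's construction to map any extreme-point minimizer of \eqref{eq:l1exact} (with support $J$, $\mA_{J}$ full column rank) to a lasso solution for small $\lambda$, then uses Tibshirani's result that columns in general position force the lasso equicorrelation set $I$ to be unique with $|I|\le m$; combining with general position with respect to $\vy$ (which gives $|J|\ge m$) yields $J=I$, so all extreme points share the same support and the solution set collapses to a point. Your argument bypasses the lasso entirely: you pick one $m$-sparse minimizer, write down the \emph{only} candidate dual certificate $\vw=(\mA_{S}^{\T})^{-1}\sign(\vx^{\star}_{S})$, observe that KKT necessity gives $\|\mA^{\T}\vw\|_{\infty}\le 1$, and then upgrade this to strict inequality off $S$ by noting that a boundary case $(\mA^{\T}\vw)_{i}=\pm 1$ produces an affine combination $\pm\va_{i}=\sum_{j\in S}\lambda_{j}(\sign(x^{\star}_{j})\va_{j})$ with $\sum\lambda_{j}=1$, violating column general position at $k=m-1$. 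Your approach is shorter and entirely self-contained within basis pursuit; the paper's route has the virtue of making the connection to the well-studied lasso uniqueness theory explicit, but at the cost of the Fuchs detour and the extreme-point analysis of the solution set.
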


\begin{proof}[Proof of Lemma~\ref{le:uniqueL1}]
  The proof is a combination of a sufficient condition for uniqueness of the squared lasso \cite{Fuchs:2004dd, Tibshirani:2013ce}:
  \begin{equation}
    \label{eq:squarelasso}
    \min_{\vx} ~ \tfrac{1}{2} \norm{\vy - \mA \vx}_2^2 + \lambda \norm{\vx}_1
  \end{equation}
  with a technique to map solutions of \eqref{eq:l1exact} to those of
  \eqref{eq:squarelasso} for small $\lambda$ \cite{Fuchs:2004dd}. We start by writing out the KKT optimality conditions for the problem
  \eqref{eq:squarelasso}:
  \begin{equation}
    \label{eq:lassokkt}
    \mA^\T(\vy - \mA \vx^*) = \lambda \vh,
    ~\text{where}~
    h_i \in
    \begin{cases}
      \set{\sign(x^*_i)}, &x_i^* \neq 0, \\
      [-1, 1], &x_i^* = 0
    \end{cases}.
  \end{equation}
  Next, the so-called equicorrelation set $I = I(\vy,\mA,\lambda)$ and equicorrelation sign $\vs = \vs(\vy,\mA,\lambda)$ are defined as
  \begin{equation}
    I 
    \bydef \set{i : \abs{\va_i^\T(\vy - \mA \vx^*)} = \lambda},
    \qquad
    \vs \bydef \sign(\mA^\T_{I}(\vy-\mA \vx^{*})). 
  \end{equation}
  The set $I$ and the vector $\vs$ are unique in the sense that all solutions $\vx^{*}$ of \eqref{eq:squarelasso} have the same equicorrelation set and equicorrelation sign. This follows from the strict convexity of the quadratic loss which, combined with the convexity of the $\ell^{1}$ norm implies (see, e.g.,  \cite[Lemma 1]{Tibshirani:2013ce}) that all solutions lead to the same $\mA \vx^{*}$. From these definitions and \eqref{eq:lassokkt} we see that the solutions $\vx^*$ of \eqref{eq:squarelasso} are characterized by
  \begin{eqnarray}
    \label{eq:lassozerocompl}
    \vx_{I^c}^* &=& \vzero,\\
    \label{eq:lassozero}
    \mA_{I}^\T(\vy-\mA_{I}\vx_{I}^*) &=& \lambda \vs.
  \end{eqnarray}
  Since $\mA$ has columns in general position, as shown in \cite[Lemma 3]{Tibshirani:2013ce}, the matrix $\mA_{I}$ has a trivial nullspace for any $\vy$ and $\lambda>0$, hence 
  \begin{equation}
    |I| \leq m. \label{eq:CardEquicorrelationSet}
  \end{equation}
  It follows that any solution $\vx^*$ of \eqref{eq:squarelasso} can be written as 
  \begin{equation}
    \label{eq:solsquarelasso}
    \vx^*_I = \mA_{I}^\dagger [\vy - \lambda (\mA_{I}^\T)^\dagger \vs ], 
  \end{equation}
  and  the solution  is indeed unique, with $\vs = \sign(\vx_{I}^{*})$. 

  Note that the solution~\eqref{eq:solsquarelasso} of~\eqref{eq:squarelasso} in general does not satisfy $\mA \vx^* = \vy$. We now use a construction by Fuchs \cite[Theorem 2]{Fuchs:2004dd} to relate solutions of~\eqref{eq:squarelasso} to those of~\eqref{eq:l1exact}. 

  Denote $\mathcal{X}$ the convex set of solutions of~\eqref{eq:l1exact}. It is classical exercise in convex analysis \cite[Section 1.4]{Elad:2010wo} to show that this set contains at least one vector whose support $J$ is associated to linearly independent columns of $\mA$, i.e., $\mA_{J}$ has a trivial null space (see also Lemma~\ref{lem:zeros_in_l1} applied to $p=1$). Hence, the set \[\mathcal{J} \bydef \{J = \text{support}(\vx^{\sharp}):\ \vx^{\sharp} \in \mathcal{X},\ \nullspace(\mA_{J}) = \{0\}\}
  \]
  is  non empty. Moreover, if \eqref{eq:l1exact} has a non-unique set of minimizers (i.e., if $\mathcal{X}$ is {\em not} a singleton), then $\mathcal{J}$ contains at least {\em two} distinct sets $J_{1} \neq J_{2}$. To see this, first observe that 
  $\mathcal{X}$ is convex and compact so it is the convex hull of its extreme points.
  Let $\vx_e$ be an element of $\mathcal{X}$ with support $J$, and assume $\mA_J$ has a non-trivial nullspace. Similarly as in Lemma \ref{lem:zeros_in_l1}, there exists a vector $\vh \in \nullspace(\mA)$ entirely supported in $J$. Then, for any $\epsilon$, $\vx_e + \epsilon \vh$ is a feasible point for the optimization \eqref{eq:l1exact} and there exists a critical $\epsilon_0 > 0$ such that for all $\abs{\epsilon} < \epsilon_0$, $\sign(\vx_e + \epsilon \vh) = \sign(\vx_e) \bydef \vs$. For $\abs{\epsilon} < \epsilon_0$, reasoning as in Lemma \ref{lem:zeros_in_l1} we can compute   $\norm{\vx_e + \epsilon \vh}_1 = \norm{\vx_e}_1 + \epsilon \sum_{i=1}^n s_i h_{i}$.
  If $\sigma \bydef \sum_{i=1}^n s_i h_{i} \neq 0$ then by setting $\epsilon = - \sign(\sigma) \epsilon_0 / 2$  we get $\norm{\vx_e + \epsilon \vh}_1 = \norm{\vx_e}_1 - \tfrac{1}{2} \epsilon_0 \abs{\sigma} < \norm{\vx_e}_1$ which is a contradiction since $\vx_e + \epsilon \vh$ is feasible.  It therefore must hold that $\sigma = 0$, meaning that $\norm{\vx_e + \epsilon \vh}_1 = \norm{\vx_e - \epsilon \vh}_1 = \norm{\vx_e}_1$ when $\abs{\epsilon} < \epsilon_{0}$, or in other words, $(\vx_e \pm \epsilon \vh) \in \mathcal{X}$. As 
  \[
    \vx_{e} =  \tfrac{1}{2} (\vx_e + \epsilon \vx_k) + \tfrac{1}{2} (\vx_e - \epsilon \vx_k),
  \]
  the element $\vx_e$ can be written as a convex combination (with non-zero coefficients) of two distinct points in $\mathcal{X}$ which shows that it is not an extreme point of $\mathcal{K}$. By contraposition, this establishes that any extreme point $\vx$ of $\mathcal{X}$ has a support $J(\vx)$ such that $\mA_{J(\vx)}$ has full column rank, that is, $J(\vx) \in \mathcal{J}$. 

  If $\mathcal{J}$ is a singleton with one element $J$ then $\mathcal{X}$ has a single extreme point with non-zero entries $\mA_J^\dag \vy$. Consequently $\mathcal{X}$ is a singleton as the convex hull of that single extreme point. By contraposition, if $\mathcal{X}$ is not a singleton, it has multiple extreme points and $\mathcal{J}$ is not a singleton.

  Consider any $J \in \mathcal{J}$ and $\vx^\sharp$ the corresponding solution of \eqref{eq:l1exact}, characterized by $J  = \text{support}(\vx^{\sharp})$, $\vx^{\sharp}_{J} = \mA_{J}^{\dagger} \vy$ and $\vx^{\sharp}_{J^{c}} = \vzero$. For any $\lambda >0$ we construct $\vx^\star$ (note the $\star$ instead of $*$) as follows:
  \begin{equation}
    \label{eq:lassomapping}
    \vx^\star_J \bydef \vx^{\sharp}_J - \lambda (\mA_J^\T \mA_J)^{-1} \sign(\vx^{\sharp}_J),~
    \vx^\star_{J^c} \bydef \vzero
  \end{equation}
  There is a critical value $\lambda_{J} > 0$ such that for $\lambda < \lambda_{J}$ we have $\sign(\vx^\star_J) = \sign(\vx^{\sharp}_J)$, so that 
  \begin{eqnarray*}
    \mA_{J}^\T(\vy-\mA\vx^{\star})
    &=& \mA_{J}^\T(\mA_{J} \vx^{\sharp}_{J}-\mA_{J} \vx^{\star}_{J}) = \lambda \mA_{J}^\T\mA_{J}(\mA_J^\T \mA_J)^{-1} \sign(\vx^{\sharp}_J) 
    = \lambda \sign(\vx^{\sharp}_J)
    = \lambda \sign(\vx^{\star}_J).
  \end{eqnarray*}
  This shows that $J \subseteq I(\vy,\mA,\lambda)$. Since $\mA$ is in general position with respect to $\vy$ and $\vy \in \range(\mA_{J})$, we have $|J| \geq m$. We also know that, since $\mA$ has columns in general position, $|I| \leq m$ by~\eqref{eq:CardEquicorrelationSet}. Hence $I = J$. This shows that $\mathcal{J}$ is made of a {\em single} support set $J$, which matches the equicorrelation set $I(\vy,\mA,\lambda)$ for any small enough $\lambda$. As a consequence, $\mathcal{X}$ is a singleton and~\eqref{eq:l1exact} has a unique solution.
\end{proof}

\subsection{Generic Uniqueness and Optimal Sparsity of Spinv}

In the previous subsection we proved two facts about (vector) $\ell^1$ minimization: that the optimizer is generically sparse and that it is unique. We now combine these results with known facts about optimal sparsity of dual frames. Let us first state a simple matrix consequence of Lemma \ref{lem:zeros_in_l1}:

\begin{lemma} 
  \label{le:zeros_in_spinv}
  Let $\mA \in \R^{m \times n}$, $m<n$, and $0 \leq p \leq 1$. Then there exists $\mX \in
  \ginv{p}{\mA}$ with at most $m^{2}$ nonzero entries (more precisely, with at
  most $m$ nonzero elements per column). In particular, there is such an $\mX \in \spinv(\mA) = \ginv{1}{\mA}$.
\end{lemma}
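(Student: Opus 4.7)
The plan is to reduce the matrix problem to $m$ decoupled columnwise vector problems and then invoke Lemma~\ref{lem:zeros_in_l1} on each. First fix any $\mX^0 \in \ginv{p}{\mA}$; such a minimizer exists because $\ginvset(\mA)$ is non-empty (it contains the Moore-Penrose pseudoinverse) and $\|\cdot\|_p$ attains its infimum over it, by compactness of $\ell^p$ sublevel sets for $0<p\le 1$, and trivially for $p=0$ since $\|\cdot\|_0$ is integer-valued. Denote the columns of $\mX^0$ by $\vx_1^0,\ldots,\vx_m^0$ and set $\vu_k \bydef \mA\vx_k^0 \in \range(\mA)$.

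The key observation is that every matrix $\mX = [\vx_1,\ldots,\vx_m]$ whose columns independently satisfy $\mA \vx_k = \vu_k$ automatically belongs to $\ginvset(\mA)$, because $\mA\mX = \mA\mX^0$ implies $\mA\mX\mA = \mA\mX^0\mA = \mA$. Moreover the entrywise objective is columnwise separable,
\[
  \|\mX\|_p^p = \sum_{i,j}|x_{ij}|^p = \sum_{k=1}^m \|\vx_k\|_p^p,
\]
with the usual convention $0^0 = 0$ when $p=0$. Consequently the auxiliary problem $\min\{\|\mX\|_p : \mA\mX = \mA\mX^0\}$ has feasible set contained in $\ginvset(\mA)$ and containing $\mX^0$, so its minimum equals $\|\mX^0\|_p$ and any of its solutions lies in $\ginv{p}{\mA}$. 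By separability it decouples into $m$ independent subproblems $\min\{\|\vx_k\|_p : \mA\vx_k = \vu_k\}$, each exactly of the form addressed by Lemma~\ref{lem:zeros_in_l1} since $\vu_k \in \range(\mA)$ and $m<n$.

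Applying Lemma~\ref{lem:zeros_in_l1} columnwise produces vectors $\vx_k^*$ with $\|\vx_k^*\|_p \le \|\vx_k^0\|_p$ and at most $m$ non-zero entries. Assembling $\mX^* \bydef [\vx_1^*,\ldots,\vx_m^*]$ yields a matrix in $\ginvset(\mA)$ with $\|\mX^*\|_p \le \|\mX^0\|_p$; since $\mX^0$ was already a minimizer, $\mX^* \in \ginv{p}{\mA}$, and $\mX^*$ has at most $m$ non-zeros per column for a total of at most $m^2$. Specialising to $p=1$ gives the claim for $\spinv(\mA)$. I do not anticipate any real obstacle: once one notices that freezing the product $\mA\mX$ to the value $\mA\mX^0$ turns the two-sided constraint $\mA\mX\mA=\mA$ into independent columnwise constraints $\mA\vx_k=\vu_k$, the rest is bookkeeping on top of Lemma~\ref{lem:zeros_in_l1}.
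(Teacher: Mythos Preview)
Your proof is correct and follows the same columnwise-decoupling strategy as the paper, applying Lemma~\ref{lem:zeros_in_l1} to each column. The only substantive difference is that the paper's proof tacitly replaces the constraint $\mX \in \ginvset(\mA)$ by $\mA\mX = \mI_m$ (writing $[\ginv{p}{\mA}]_j = \argmin_{\mA\vx=\ve_j}\|\vx\|_p$), which is equivalent only when $\mA$ has full row rank; your device of fixing some $\mX^0 \in \ginv{p}{\mA}$ and then minimizing over the smaller affine set $\{\mX : \mA\mX = \mA\mX^0\} \subset \ginvset(\mA)$ handles the rank-deficient case as well, so your argument is in fact slightly more complete than the paper's while costing only a couple of extra lines.
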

\begin{remark}
  One consequence of Lemma~\ref{le:zeros_in_spinv} is that $\spinv(\mA)$ is in general {\em not} the MPP $\mA^{\dagger}$.
\end{remark}

\begin{proof}
Minimization for $\ginv{p}{\mA}$ can be decoupled into $\ell^p$ minimizations
for every column,
\begin{equation}
  \label{eq:spinv_columnwise}
  \ginv{p}{\mA}_j = \argmin_{\mA \vx = \ve_j} \, \norm{\vx}_{p}.
\end{equation}
By Lemma \ref{lem:zeros_in_l1}, the set of minimizers for every column of
$\ginv{p}{\mA}$ (the set of solutions to \eqref{eq:spinv_columnwise}) contains a point with at most 
$m$ nonzeros. Because $\ginv{p}{\mA}$ has $m$ columns, there exists an element of $\ginv{p}{\mA}$ with at most $m^{2}$ non-zero entries.
\end{proof}

A corollary of Lemma~\ref{le:zeros_in_spinv} is that generically,
$\spinv(\mA)$ is {\em the} sparsest pseudoinverse of $\mA$. To see this, we invoke a
result by Krahmer, Kutyniok and Lemvig \cite{Krahmer:2012cy}:
\begin{lemma}[{\cite[Theorem 3.6]{Krahmer:2012cy}}]
    \label{lem:krahmer_measure_zero}
    Let $\mathcal{F}(m, n)$ be the set of full rank matrices in $\R^{m \times
    n}$, and denote by $\mathcal{N}(m, n)$ the subset of matrices in
    $\mathcal{F}(m, n)$ whose sparsest generalized inverse has $m^2$
    non-zeros. Then 
    \begin{enumerate}[label=(\roman*)]
        \item Any matrix in $\mathcal{F}(m, n)$ is arbitrarily close to a
        matrix in $\mathcal{N}(m, n)$,
        \item The set $\mathcal{F}(m, n) \ \setminus \ \mathcal{N}(m, n)$ has
        measure zero.
    \end{enumerate}
\end{lemma}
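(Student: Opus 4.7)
The plan is to leverage Lemma~\ref{le:zeros_in_spinv}, which already gives the upper bound of $m^{2}$ non-zeros for the sparsest generalized inverse, and complement it with a matching generic lower bound. Since $\mA$ has full rank and $m<n$, $\mA$ has full row rank, so $\mX \in \ginvset(\mA)$ is equivalent to $\mA\mX=\mI_{m}$. This decouples columnwise into $\mA\vx_{j}=\ve_{j}$ for $j=1,\ldots,m$, and a column $\vx_{j}$ with support $S_{j}$ satisfying $|S_{j}|<m$ exists if and only if $\ve_{j} \in \range(\mA_{S_{j}})$. So the sparsest generalized inverse has strictly fewer than $m^{2}$ non-zeros if and only if $\ve_{j} \in \range(\mA_{S})$ for some index $j$ and some $S\subseteq\{1,\ldots,n\}$ with $|S|\leq m-1$.

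Second, I would isolate the generic condition $(\calP)$: for every $j \in \{1,\ldots,m\}$ and every $S\subseteq\{1,\ldots,n\}$ with $|S|\leq m-1$, $\ve_{j} \notin \range(\mA_{S})$. Equivalently, $\mA$ is in general position with respect to each standard basis vector $\ve_{j}$ in the sense of the second definition in Section~\ref{sub:vector_facts}. Under $(\calP)$, every feasible $\vx_{j}$ has at least $m$ non-zeros, hence any generalized inverse has at least $m\cdot m = m^{2}$ non-zeros in total. Combined with Lemma~\ref{le:zeros_in_spinv}, this forces the sparsest generalized inverse to have exactly $m^{2}$ non-zeros, so $(\calP)$ implies $\mA \in \mathcal{N}(m,n)$.

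Third, I would show that $(\calP)$ holds on a Zariski-open, full-measure subset of $\mathcal{F}(m,n)$. For each fixed pair $(j,S)$ with $|S|=k\leq m-1$, the condition $\ve_{j}\in\range(\mA_{S})$ is equivalent to $\rank([\mA_{S}\mid\ve_{j}]) \leq k$, which is a polynomial condition on the entries of $\mA$: the simultaneous vanishing of all $(k+1)\times(k+1)$ minors of $[\mA_{S}\mid\ve_{j}]$. These polynomials are not identically zero (one can exhibit a single matrix with some non-vanishing minor, e.g.\ a perturbation of $[\mI_{m}\mid\mathbf{0}]$), so the zero set is a proper algebraic subvariety of $\R^{m\times n}$. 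The finite union over the $m\cdot\sum_{k<m}\binom{n}{k}$ choices of $(j,S)$ is still a finite union of proper algebraic subvarieties, hence has Lebesgue measure zero and empty interior. Since $\mathcal{F}(m,n)$ is itself a Zariski-open full-measure subset of $\R^{m\times n}$, part (ii) follows. For part (i), the complement of this bad set inside $\mathcal{F}(m,n)$ is Zariski-open and non-empty, hence dense, so any $\mA\in\mathcal{F}(m,n)$ lies in the closure of $\mathcal{N}(m,n)$.

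The main obstacle I anticipate is the bookkeeping to certify that the relevant minor polynomials are not identically zero on $\mathcal{F}(m,n)$—this is what makes the algebraic set proper and therefore Lebesgue-negligible. Once one writes down an explicit witness matrix (for instance $[\mI_{m}\mid \mC]$ with a generic $\mC$) for each $(j,S)$, the measure-zero and density conclusions follow immediately from the standard fact that the zero set of a non-zero real polynomial has Lebesgue measure zero.
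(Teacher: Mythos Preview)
The paper does not prove this lemma; it is quoted verbatim as \cite[Theorem~3.6]{Krahmer:2012cy} and used as a black box. So there is no ``paper's own proof'' to compare against.

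Your argument is correct and self-contained. The column decomposition $\mA\vx_{j}=\ve_{j}$, the identification of the obstruction as $\ve_{j}\in\range(\mA_{S})$ for some $|S|<m$, and the observation that this is a finite union of proper algebraic subvarieties (hence measure-zero and nowhere dense) are exactly the right ingredients. One small remark: you invoke Lemma~\ref{le:zeros_in_spinv} for the upper bound, but since the statement concerns the \emph{sparsest} generalized inverse, you only strictly need the lower bound; the upper bound follows anyway from Lemma~\ref{lem:zeros_in_l1} with $p=0$ applied columnwise, or simply by inverting any full-rank $m\times m$ submatrix. Your witness argument for non-triviality of the minor polynomials is fine: for each fixed $(j,S)$ with $|S|=k<m$, any $k$ columns span a subspace of dimension at most $k<m$, so one can always place them so as to miss $\ve_{j}$, giving a non-vanishing $(k{+}1)\times(k{+}1)$ minor.
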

In particular, the sparsest generalized inverse of many random matrices (e.g., iid Gaussian) will have $m^2$ non-zeros with probability 1.

Lemmata~\ref{le:zeros_in_spinv} and \ref{lem:krahmer_measure_zero} imply that for almost all matrices the set of sparsest generalized inverses intersects the set of $\ell^1$-norm minimizing generalized inverses:

\begin{theorem}\label{th:spinvmsparseprevious}
    The set of matrices $\mA$ in $\mathcal{F}(m, n)$ such that $\spinv(\mA)$ does not
    intersect $\ginv{0}{\mA}$ has measure zero.
\end{theorem}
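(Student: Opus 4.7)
The plan is to combine the two lemmata immediately preceding the theorem. Lemma~\ref{le:zeros_in_spinv} guarantees, for \emph{every} full-rank $\mA$, that the set $\spinv(\mA) = \ginv{1}{\mA}$ contains at least one element with at most $m^{2}$ non-zero entries. Lemma~\ref{lem:krahmer_measure_zero} guarantees, for \emph{almost every} full-rank $\mA$, that \emph{no} generalized inverse has fewer than $m^{2}$ non-zeros. Intersecting these two facts immediately pins down an $\ell^{1}$ minimizer whose non-zero count equals the minimum possible non-zero count over $\ginvset(\mA)$, hence which also minimizes the $\ell^{0}$ count.

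More concretely, I would argue as follows. Let $\mA \in \mathcal{N}(m,n)$, so that by definition every $\mX \in \ginvset(\mA)$ satisfies $\norm{\vectorize(\mX)}_{0} \geq m^{2}$, and any $\mX$ achieving $m^{2}$ non-zeros lies in $\ginv{0}{\mA}$. By Lemma~\ref{le:zeros_in_spinv} applied with $p=1$, there exists $\mX^{\sharp} \in \spinv(\mA)$ with $\norm{\vectorize(\mX^{\sharp})}_{0} \leq m^{2}$. Combining the two inequalities forces $\norm{\vectorize(\mX^{\sharp})}_{0} = m^{2}$, so $\mX^{\sharp} \in \ginv{0}{\mA}$ and therefore $\spinv(\mA) \cap \ginv{0}{\mA} \neq \emptyset$.

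To conclude, the set of matrices for which $\spinv(\mA) \cap \ginv{0}{\mA} = \emptyset$ is contained in $\mathcal{F}(m,n) \setminus \mathcal{N}(m,n)$, which has Lebesgue measure zero by item~(ii) of Lemma~\ref{lem:krahmer_measure_zero}. I don't expect any real obstacle here: the only subtlety is to notice that Lemma~\ref{le:zeros_in_spinv} provides \emph{an} $\ell^{1}$ minimizer rather than characterizing the whole set $\spinv(\mA)$, which is exactly what the theorem statement asks for (``does not intersect''). Uniqueness of $\spinv(\mA)$ is not needed at this stage; it is the object of a separate argument based on Lemma~\ref{le:uniqueL1}.
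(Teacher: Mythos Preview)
Your proposal is correct and matches the paper's approach exactly: the paper does not spell out a proof but simply states that the theorem is implied by Lemmata~\ref{le:zeros_in_spinv} and~\ref{lem:krahmer_measure_zero}, which is precisely the combination you carry out. Your observation that only the existence of \emph{an} $m^{2}$-sparse $\ell^{1}$ minimizer is needed (not uniqueness) is also on point.
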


With all these results in hand we can prove a stronger claim which is our main result. Namely, the $\spinv$ is generically unique and $m^2$-sparse.
\begin{theorem} \label{th:spinvmsparse}
Assume that $\mA \in \R^{m \times n}$ has columns in general position, and that $\mA$ is in general position with respect to the canonical basis vectors $\ve_{1},\ldots,\ve_{m}$. Then the sparse pseudoinverse $\spinv(\mA)$ of $\mA$ contains a single matrix whose columns are all exactly $m$-sparse.
\end{theorem}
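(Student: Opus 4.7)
The plan is to exploit the columnwise decoupling of the $\spinv$ program together with the two vector facts proved in Section~\ref{sub:vector_facts}. Just as in the proof of Lemma~\ref{le:zeros_in_spinv}, the constraint $\mA\mX = \mI_m$ splits across the columns of $\mX$, giving
\[
    \spinv(\mA)_j = \argmin_{\mA\vx = \ve_j} \norm{\vx}_1 \qquad (j=1,\ldots,m),
\]
so both uniqueness and the nonzero count can be checked one column at a time.

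First, I would handle uniqueness. Since $\mA$ is assumed to have columns in general position and to be in general position with respect to each canonical basis vector $\ve_j$, Lemma~\ref{le:uniqueL1} applies to every one of the $m$ column problems above and yields a single minimizer. Concatenating these gives a single matrix, so $\spinv(\mA)$ is a singleton.

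Next I would pin down the sparsity. The upper bound follows immediately: by Lemma~\ref{lem:zeros_in_l1} applied to $\vy=\ve_j$, each column problem admits a minimizer with at most $m$ non-zeros, and by the uniqueness just established this minimizer \emph{is} the $j$th column of $\spinv(\mA)$. For the matching lower bound I would use general position with respect to $\ve_j$ directly: if some feasible $\vx$ had $k<m$ non-zero entries, then $\ve_j=\mA\vx$ would lie in the linear span of $k$ columns of $\mA$, contradicting the hypothesis. Hence every feasible $\vx$, and in particular $\spinv(\mA)_j$, has at least $m$ non-zeros. Combining the two bounds gives exactly $m$ non-zeros per column.

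I do not expect any genuine obstacle here; the real work was done in Lemmata~\ref{lem:zeros_in_l1} and~\ref{le:uniqueL1}, and the only point requiring a little care is making sure that the two general position hypotheses in the statement are precisely what Lemma~\ref{le:uniqueL1} needs for each of the $m$ right-hand sides $\ve_1,\ldots,\ve_m$ simultaneously, which is immediate from the wording of the theorem.
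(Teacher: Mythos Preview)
Your proposal is correct and follows essentially the same approach as the paper: columnwise decoupling followed by Lemma~\ref{le:uniqueL1} for uniqueness and Lemma~\ref{lem:zeros_in_l1} for the sparsity upper bound. In fact you are slightly more explicit than the paper's proof, which is terse and does not spell out the lower bound; your direct argument from general position with respect to $\ve_j$ (that fewer than $m$ nonzeros would place $\ve_j$ in the span of fewer than $m$ columns) is exactly the right way to close that gap.
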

\begin{corollary}\label{th:spinvmsparsegen}
Except for a measure-zero set of matrices, the sparse pseudoinverse $\spinv(\mA)$ of $\mA \in \R^{m \times n}$ contains a single matrix whose columns are all exactly $m$-sparse.
\end{corollary}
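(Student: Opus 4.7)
The plan is to first reduce the matrix problem to $m$ independent vector problems, apply the two vector facts from Section \ref{sub:vector_facts} column-by-column, and then show that the two hypotheses of Theorem \ref{th:spinvmsparse} hold outside a measure-zero set, which immediately gives the corollary.

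The first step is the same column-wise decoupling used in the proof of Lemma \ref{le:zeros_in_spinv}: any element of $\spinv(\mA)$ solves, column-by-column, the problem
\begin{equation*}
    \spinv(\mA)_{\cdot,j} \,=\, \argmin_{\mA\vx = \ve_j}\, \norm{\vx}_1, \qquad j=1,\ldots,m.
\end{equation*}
Each of these is an instance of \eqref{eq:l1exact} with $\vy = \ve_j$. The hypothesis that $\mA$ has columns in general position together with the hypothesis that $\mA$ is in general position with respect to $\ve_j$ are precisely the assumptions of Lemma \ref{le:uniqueL1}, which therefore yields a \emph{unique} minimizer for each column. Stacking these unique columns produces the single matrix claimed by the theorem.

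For the sparsity count, I would combine two bounds. The upper bound $\le m$ non-zeros per column is immediate: Lemma \ref{lem:zeros_in_l1} at $p=1$ guarantees that the column-$j$ solution set contains a vector with at most $m$ non-zeros, and since that set is a singleton by the previous step, the unique minimizer itself has at most $m$ non-zero coordinates. The matching lower bound $\ge m$ is exactly what the general-position assumption with respect to $\ve_j$ encodes: by Definition of in-general-position-w.r.t.-$\vy$, $\ve_j$ is not in the linear span of fewer than $m$ columns of $\mA$, so every feasible $\vx$ (in particular the minimizer) must activate at least $m$ columns. Together these give exactly $m$ non-zeros per column, completing the proof of Theorem \ref{th:spinvmsparse}.

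The corollary then amounts to checking that the two genericity hypotheses of Theorem \ref{th:spinvmsparse} are simultaneously satisfied outside a Lebesgue-null subset of $\R^{m\times n}$. ``Columns in general position'' fails iff, for some $k < m$, some subset of $k+1$ column indices, some choice of signs, and some extra column index, a certain affine-dependence determinant vanishes; each such event is the zero set of a non-trivial polynomial in the entries of $\mA$, hence has measure zero, and there are only finitely many such events. Likewise, ``$\mA$ in general position with respect to $\ve_j$'' fails iff $\ve_j$ lies in the linear span of some $k < m$ columns of $\mA$, which is again a finite union of proper algebraic subsets and thus measure zero; taking the union over $j=1,\ldots,m$ and over the two kinds of obstructions keeps the exceptional set measure zero.

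The main obstacle I anticipate is the lower-bound argument: one has to make sure that the general-position-with-respect-to-$\ve_j$ hypothesis is phrased correctly to force \emph{every} feasible $\vx$ (not merely some preferred one) to have support of size at least $m$, and to verify that this asymmetric condition is indeed generic in $\mA$ despite the $\ve_j$ being fixed. Both checks reduce to observing that the ``bad'' loci are zero sets of polynomials that do not vanish identically (exhibit a single $\mA$ where they are non-zero, e.g., a random Gaussian matrix), which is standard but worth writing out carefully.
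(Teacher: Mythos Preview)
Your proposal is correct and follows essentially the same approach as the paper: decouple into columns, apply Lemma~\ref{le:uniqueL1} for uniqueness and Lemma~\ref{lem:zeros_in_l1} for the upper sparsity bound, then argue that the two general-position hypotheses are generic. The paper's proofs of Theorem~\ref{th:spinvmsparse} and Corollary~\ref{th:spinvmsparsegen} are each one sentence long and rely on exactly these ingredients; your version simply spells out the details, in particular the lower bound $\geq m$ on the support size (which the paper uses implicitly, inside the proof of Lemma~\ref{le:uniqueL1}, via the line ``since $\mA$ is in general position with respect to $\vy$ \ldots we have $|J|\ge m$'') and the polynomial-locus argument for genericity (which the paper replaces by the one-line observation that both conditions hold almost surely under any absolutely continuous law).
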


\begin{proof}[Proof of Theorem~\ref{th:spinvmsparse}]
  To finish the proof of Theorem \ref{th:spinvmsparse}, we apply Lemma \ref{lem:zeros_in_l1} and Lemma \ref{le:uniqueL1} to the sparse pseudoinverse by setting $\vy = \ve_i$ to obtain $\vx_i$, $i \leq 1 \leq m$.
\end{proof}
\begin{proof}[Proof of Corollary~\ref{th:spinvmsparsegen}]
  When $\mA$ is a random matrix drawn from a probability distribution which is absolutely continuous with respect to the Lebesgue measure, we have with probability one that: a) its columns are in general position; b) it is in general position with respect to any finite collection of vectors $\{\vy_{1},\ldots,\vy_{\ell}\}$. We can thus apply Theorem~\ref{th:spinvmsparse}.
\end{proof}

\section{Numerical Stability}
\label{sub:stability}

\begin{figure}[t!]
\centering
\includegraphics{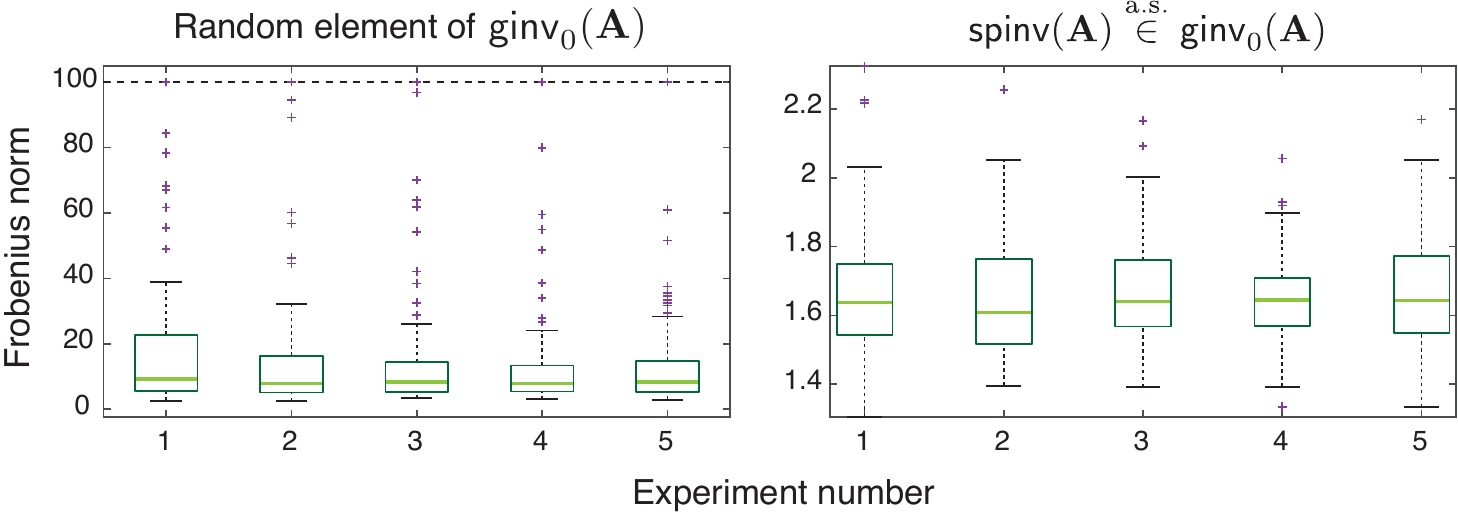}
\caption{Frobenius norm of a random element of $\ginv{0}{\mA}$ for $\mA$ a random Gaussian matrix of size $20 \times 30$ and a sparse pseudoinverse $\spinv(\mA)$ (generically in $\ginv{0}{\mA}$; cf. Corollary \ref{th:spinvmsparsegen}) of the same matrix. The random element of $\ginv{0}{\mA}$ is computed by selecting $m$ columns of $\mA$ at random and inverting the obtained submatrix. The plot shows results of 5 identical experiments, each consisting of generating a random $20 \times 30$ matrix and computing the two inverses 100 times. In the first experiment, the outlier norms extend to 2000 so they were clipped at 100. Green lines denote medians, and boxes denote the second and the third quartile.}
\label{fig:minor}
\end{figure}

For matrices in general position (that is, for most matrices), there is a simpler way to obtain a generalized inverse with the minimal number $m^2$ of non-zeroes than solving for $\spinv$---just invert any full-rank $m \times m$ submatrix of $\mA$. But there is a good reason why minimizing the $\ell^1$ norm is a better idea than inverting a submatrix: it gives much better conditioned matrices.

To understand precisely what we mean by better conditioned, consider an overdetermined inverse problem $\vy = \mB \vx + \vz$, where $\mB \in \C^{n \times m}$ is full rank with $m < n$, and $\vz$ is white noise. For any matrix $\mW \in \ginvset(\mB)$ we have that
\begin{equation}
    \E_{z}[\norm{\mW \vy - \vx}_2^2] = \E_{\vz}[\norm{\mW \vz}_2^2] \, \propto\, \norm{\mW}_F^2. 
\end{equation}
Thus the influence of noise on the output is controlled by the Frobenius norm of $\mW$ which makes it clear that it is desirable to use generalized inverses with small Frobenius norms. The right figure of merit is how much larger the Frobenius norm of our generalized inverse is than that of the MPP which attains the smallest one.

Alas, Figure \ref{fig:minor} shows that for a simple inversion of an $m \times m$ submatrix, this Frobenius norm can be quite large. Worse, as figure shows, the variance of the norm of this inverted minor is large (note that the ordinate axis was clipped). On the contrary, if we carefully select a particular element of $\ginv{0}{\mA}$, namely the sparse pseudoinverse, then it appears that we get a well-controlled Frobenius norm; in other words we get a well-conditioned generalized inverse. 
The goal of this section is to make the last statement rigorous by developing concentration results for the Frobenius norm of $\spinv(\mA)$ when $\mA$ is an iid Gaussian random matrix. In particular, we will prove the following result:

\begin{theorem}
    \label{thm:frob_of_l1}
    Let $\mA \in \R^{m \times n}$ be a standard iid Gaussian matrix, $1 \leq m \leq n$. Define
    $\delta \bydef (m-1) / n \in (0, 1)$ and for $1 \leq p \leq 2$ define the function
    \begin{eqnarray}
    \label{eq:DefExpSquaredDistance}
    D(t) = D_{p}(t;n) &\bydef& \tfrac{1}{n} \left[\E_{\vh \sim {\cal N}(0, \mI_n)} \,
    \dist(\vh, \norm{\,\cdot\,}_{p^{*}} \leq t)\right]^{2} \in (0, 1),
    \end{eqnarray}
    Let $t^* = t^{*}_{p}(\delta,n)$  be the unique solution of $\delta = D(t) -\tfrac{t}{2} D'(t)$ on $(0,\infty)$ and denote 
    \[
    \alpha^* = \alpha^*_{p}(\delta, n) \bydef \sqrt{\frac{D(t^{*})}{\delta(\delta - D(t^{*}))}}. 
    \]
    If there exist $\gamma(\delta)$ and $ N(\delta)$ such that $-t^{*}D'_{p}(t^{*};n) \geq \gamma(\delta) >0$ for all $n \geq N(\delta)$, then for any $n \geq \max(2/(1-\delta),N(\delta))$ we have: for any $0<\epsilon \leq 1$
    \begin{equation}
        \label{eq:stability_main_bound}
        \prob \left[ \abs{ \tfrac{n}{m}\norm{\ginv{p}{\mA}}_F^2 -  (\alpha^*)^2} \geq \epsilon(\alpha^*)^2 \right] \leq \   \frac{n}{C_{1}\epsilon}\e^{-C_{2}n\epsilon^{4}},
    \end{equation}
    where the constants $C_{1},C_{2}>0$ may depend on $\delta$ but not on $n$ or $\epsilon$.
\end{theorem}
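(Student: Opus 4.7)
The plan is to combine a columnwise decomposition with the convex Gaussian min-max theorem (CGMT). Since the constraint $\mA\mX = \mI_{m}$ decouples over columns and $\norm{\mX}_p^p = \sum_j \norm{\vx_j}_p^p$, the optimizer satisfies $\vx_j^* = \argmin_{\mA \vx = \ve_j} \norm{\vx}_p$ and $\norm{\ginv{p}{\mA}}_F^2 = \sum_{j=1}^m \norm{\vx_j^*}_2^2$. Left-rotation invariance of the standard Gaussian distribution on $\R^{m \times n}$ makes each $\norm{\vx_j^*}_2^2$ marginally identically distributed (even though the columns are coupled through $\mA$). The entire analysis therefore reduces to a concentration inequality for a single column $n \norm{\vx_1^*}_2^2$ around $(\alpha^*)^2$; a union bound over the $m \leq n$ columns then yields the stated Frobenius-norm bound, naturally producing the prefactor $n/(C_1 \epsilon)$.

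For the single column I write the problem as $\min_\vx \sup_{\vu \in \R^m} \{\norm{\vx}_p + \vu^\T(\mA\vx - \ve_1)\}$. The unbounded inner sup is not directly amenable to CGMT, so I first use standard high-probability estimates on $\sigma_{\min}(\mA)$ and on the LP dual $\max_{\norm{\mA^\T\vu}_{p^*} \leq 1}\vu^\T \ve_1$ to argue that the optimal $\vu$ lies in a ball of radius $R$ except on an event of probability $\e^{-\Omega(n)}$. On the good event I apply CGMT to obtain the auxiliary optimization
\[
\min_{\vx} \sup_{\norm{\vu}_2 \leq R} \bigl\{\norm{\vx}_p + \norm{\vx}_2 \, \vg^\T \vu + \norm{\vu}_2 \, \vh^\T \vx - \vu^\T \ve_1 \bigr\}
\]
with independent $\vg \sim \mathcal{N}(0, \mI_m)$ and $\vh \sim \mathcal{N}(0, \mI_n)$, and then scalarize.

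The scalarization proceeds as follows: maximizing over the direction and magnitude of $\vu$ (and letting $R \to \infty$) turns the AO into $\min_\vx \norm{\vx}_p$ subject to $\vh^\T \vx + \norm{\norm{\vx}_2 \vg - \ve_1}_2 \leq 0$. Parametrizing by $\alpha = \norm{\vx}_2$ and dualizing the linear constraint with a scalar $\mu \geq 0$, the inner minimization collapses---via a Moreau/conjugate calculation of $\min_\vw[\norm{\vw}_p + \mu \vh^\T \vw + \lambda \norm{\vw}_2^2]$---to $-\mu \, \dist(\vh, (1/\mu) B_{p^{*}})$. Substituting $t = 1/\mu$ and replacing $\norm{\alpha\vg - \ve_1}_2$ and $\dist(\vh, tB_{p^*})$ by their concentrated values yields a deterministic scalar min-max in $(\alpha, t)$ whose first-order conditions are exactly $\delta = D(t) - \tfrac{t}{2} D'(t)$ and $(\alpha^*)^2 = D(t^*)/[\delta(\delta - D(t^*))]$. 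A sanity check at $p = 2$: then $p^* = 2$ and $D(t) \approx (1 - t/\sqrt n)_{+}^{2}$, yielding $t^* = \sqrt n(1-\delta)$ and $(\alpha^*)^{2} = 1/(1-\delta)$, which matches the Marchenko--Pastur expectation $(n/m)\E \norm{\mA^\dagger}_F^{2}$.

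The hypothesis $-t^* D'(t^*) \geq \gamma(\delta) > 0$ gives strict local strong convexity of the scalar AO around $(\alpha^*, t^*)$, which is essential for passing from value-concentration to optimizer-concentration. CGMT plus Gaussian concentration of Lipschitz functionals of $(\vg, \vh)$ controls the AO \emph{value} at rate $\e^{-cn\epsilon^2}$; local quadratic growth then converts this into concentration of the AO \emph{optimizer}, hence of $n\norm{\vx_1^*}_2^2$, at the slower rate $\e^{-cn\epsilon^4}$. The main obstacle is making the CGMT reduction \emph{non-asymptotic}: standard CGMT statements (e.g.\ in \cite{Thrampoulidis:2016vo}) are in-probability asymptotic, so I must quantitatively track the truncation radius $R$---large enough to be irrelevant with probability $1 - \e^{-\Omega(n)}$ yet small enough that the Lipschitz constants of the AO in $(\vg, \vh)$ remain controlled---and verify the local Hessian bound uniformly in $n$ via the standing assumption $\gamma(\delta) > 0$. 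Propagating these quantitative constants through the AO analysis to obtain the explicit rate $\e^{-C_2 n \epsilon^4}$ with $C_1, C_2$ depending only on $\delta$ is the principal novel technical burden.
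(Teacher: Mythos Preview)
Your approach is essentially the paper's. Once you truncate the Lagrange multiplier to $\norm{\vu}_2 \leq R$, the inner sup collapses to $R\norm{\mA\vx - \ve_1}_2$, so your compactified PO is exactly the non-squared lasso $\min_\vx \norm{\mA\vx - \ve_1} + \tfrac{1}{R}\norm{\vx}_p$ that the paper starts from; their $\lambda$ is your $1/R$, and their Lemma~\ref{lem:lasso_satisfies_equality} (lasso $=$ equality constraint for small $\lambda$, via $\sigma_{\min}(\mA^\T)$) is precisely your ``optimal $\vu$ lies in a ball of radius $R$''. From there both routes apply CGMT and scalarize to the same two-scalar problem, then concentrate, use strong convexity from the hypothesis $-t^*D'(t^*)\geq\gamma(\delta)$ to pass from value- to optimizer-concentration with the $\epsilon^4$ rate, and union-bound over columns.

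Two points need tightening. First, CGMT (Theorem~\ref{thm:cgmt}) requires \emph{both} constraint sets compact; you truncate $\vu$ but say nothing about $\vx$. The paper also truncates $\norm{\vx}\leq K$ and later removes this restriction (Lemma~\ref{lemma:bounded_equals_unbounded}); you will need the same step. Second, your scalarization sketch is wrong as written: the claim that $\min_\vw[\norm{\vw}_p + \mu\vh^\T\vw + \lambda\norm{\vw}_2^2]$ equals $-\mu\,\dist(\vh, (1/\mu)B_{p^*})$ fails already for $p=2$ (the left side is $-\tfrac{(\mu\norm{\vh}-1)_+^2}{4\lambda}$, not $-(\mu\norm{\vh}-1)_+$). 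The clean route---the one the paper takes---is to write $\norm{\vx}_p = \max_{\norm{\vw}_{p^*}\leq 1}\vw^\T\vx$ variationally \emph{before} minimizing over $\vx$ on the sphere $\norm{\vx}=\alpha$; the inner min is then linear in $\vx$ and gives $-\tfrac{\alpha}{\sqrt{n}}\norm{\beta\vh - \lambda\vw}_2$, and the subsequent max over $\vw$ yields $\dist(\beta\vh, \lambda B_{p^*})$ directly, without Moreau machinery.
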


The statement of Theorem \ref{thm:frob_of_l1} may be difficult to parse at first glance. Luckily, it allows us to obtain more explicit results for the two most interesting cases: $p=1$ and $p=2$. For $p=2$ we get a result on the Frobenius norm of the Moore-Penrose pseudoinverse. It  is complementary to a known large deviation bound \cite[Proposition A.5; Theorem A.6]{Halko:2011kg} obtained by a completely different technique.
\begin{corollary}[$p=2$]\label{cor:P2}
With the notations of the above theorem, for $0<\delta<1$ and $n \geq N(\delta)$ we have: for any $0<\epsilon \leq 1$
\begin{equation}
        \prob \left[ \abs{ \tfrac{n}{m}\norm{\mA^{\dagger}}_F^2 -  (\alpha^*)^2} \geq \epsilon(\alpha^*)^2 \right] \leq \   \frac{n}{C_{1}\epsilon}\e^{-C_{2}n\epsilon^{4}},
    \end{equation}
    where the constants $C_{1},C_{2}>0$ may depend on $\delta$ but not on $n$ or $\epsilon$, and $\alpha^{*}=\alpha^{*}_{2}(\delta;n)$ with
    \begin{equation}
    \lim_{n \to \infty} \alpha_{2}^{*}(\delta;n) = \frac{1}{\sqrt{1-\delta}}.
    \end{equation}
\end{corollary}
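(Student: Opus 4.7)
The plan is to apply Theorem~\ref{thm:frob_of_l1} directly with $p=2$, verify its hypotheses in that case, and compute the limit of $\alpha_{2}^{*}(\delta;n)$. Since the $\ell^{2}$-minimal generalized inverse is (uniquely) the Moore-Penrose pseudoinverse $\mA^{\dagger}$, the conclusion of the theorem becomes exactly the bound on $\|\mA^{\dagger}\|_{F}^{2}$ claimed in the corollary, so only the two items above require work.

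Making $D_{2}(t;n)$ explicit: because $p^{*}=2$, the distance from $\vh$ to the closed $\ell^{2}$-ball of radius $t$ is $(\|\vh\|_{2}-t)_{+}$, so $D_{2}(t;n) = E_{n}(t)^{2}/n$ with $E_{n}(t)\bydef\E(\|\vh\|_{2}-t)_{+}$. Setting $\sigma_{n}(t)\bydef\prob(\|\vh\|_{2}\geq t) = -E_{n}'(t)$, the defining equation $\delta = D(t)-\tfrac{t}{2}D'(t)$ rearranges to
\[
\delta n \;=\; E_{n}(t)^{2} \,+\, t\, E_{n}(t)\, \sigma_{n}(t).
\]
Uniqueness of $t^{*}\in(0,\infty)$ follows from the convexity of $D$ (an immediate consequence of $\sigma_{n}$ being decreasing), which forces $g(t)\bydef D(t)-\tfrac{t}{2}D'(t)$ to be strictly decreasing on $(0,\infty)$, going from $g(0^{+})=(\E\|\vh\|_{2})^{2}/n = 1-\Theta(1/n)$ down to $0$; for $n$ large enough the intermediate value $\delta$ is hit exactly once.

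The asymptotic analysis is cleanest under the substitution $t=\tau\sqrt{n}$. Standard chi-distribution concentration ($\|\vh\|_{2}$ has mean $\sqrt{n-1/2}\,(1+O(1/n))$ and fluctuations of order one) gives, uniformly on compact subsets of $(0,1)\cup(1,\infty)$,
\[
\frac{E_{n}(\tau\sqrt{n})}{\sqrt{n}} \;\longrightarrow\; (1-\tau)_{+}, \qquad \sigma_{n}(\tau\sqrt{n}) \;\longrightarrow\; \mathbf{1}_{\tau<1}.
\]
Dividing the fixed-point equation by $n$ and passing to the limit reduces it to $\delta = (1-\tau)^{2}+\tau(1-\tau) = 1-\tau$, so $\tau^{*}\to 1-\delta$ and $D(t^{*})\to\delta^{2}$. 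Substituting into the definition of $\alpha^{*}$ yields $\alpha_{2}^{*}(\delta;n)^{2} \to \delta^{2}/[\delta(\delta-\delta^{2})] = 1/(1-\delta)$, as desired.

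The remaining hypothesis follows from the same computation: $-t^{*}D'(t^{*}) = \tfrac{2}{n}\, t^{*}\, E_{n}(t^{*})\,\sigma_{n}(t^{*}) \to 2\delta(1-\delta)>0$, so for $n$ large enough we may take, say, $\gamma(\delta)=\delta(1-\delta)$ and a corresponding threshold $N(\delta)$. The main technical obstacle is to upgrade the two pointwise limits above into quantitative non-asymptotic estimates that make $N(\delta)$ explicit; I would handle this with Laurent--Massart type chi-tail inequalities applied to $E_{n}$ and $\sigma_{n}$, together with the already-established monotonicity of $g$ (which turns a bound on $g(\tau^{*}\sqrt{n})-\delta$ into a bound on $\tau^{*}-(1-\delta)$). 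Once these estimates are in place the corollary is a direct consequence of Theorem~\ref{thm:frob_of_l1}.
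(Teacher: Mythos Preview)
Your approach is essentially the same as the paper's. The paper's proof of the corollary simply invokes Lemma~\ref{le:boundp2}, whose proof (Step~8 in the combined proof of Lemmas~\ref{le:genericboundTp}--\ref{le:boundp1}) does exactly what you outline: write $D_{2}(t;n)=\big(\tfrac{1}{\sqrt n}\E(\|\vh\|_{2}-t)_{+}\big)^{2}$, substitute $t=\tau\sqrt n$, use concentration of $\|\vh\|_{2}/\sqrt n$ around $1$ to obtain the limits $(1-\tau)^{2}$ and $\tau(1-\tau)$ for $D$ and $-\tfrac{t}{2}D'$, deduce $\tau^{*}\to 1-\delta$, $D(t^{*})\to\delta^{2}$, and hence $\alpha^{*}_{2}\to(1-\delta)^{-1/2}$ and $-t^{*}D'(t^{*})\to 2\delta(1-\delta)>0$.

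Two minor remarks. First, the corollary only asserts the \emph{existence} of $N(\delta)$, so once you have established $\lim_{n\to\infty}-t^{*}D'(t^{*};n)=2\delta(1-\delta)>0$ the hypothesis of Theorem~\ref{thm:frob_of_l1} follows immediately; you do not actually need to make $N(\delta)$ explicit. Where quantitative bounds \emph{are} needed is in rigorously passing from pointwise convergence of $\tau\mapsto D(\tau\sqrt n)-\tfrac{\tau\sqrt n}{2}D'(\tau\sqrt n)$ to convergence of the root $t^{*}/\sqrt n$, and your plan (chi-tail inequalities plus monotonicity of $g$) is exactly how the paper proceeds, via explicit upper and lower bounds on $F_{1}(\tau;n)$ and $F_{2}(\tau;n)$. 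Second, your value $2\delta(1-\delta)$ for the limit of $-t^{*}D'(t^{*})$ is correct; the paper's Lemma~\ref{le:boundp2} states $(1-\delta)\delta$, but its own proof computes $-\tfrac{t^{*}}{2}D'(t^{*})\to(1-\delta)\delta$, consistent with your factor of two.
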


\begin{remark}
This corollary covers ``small'' deviations $(0<\epsilon \leq 1)$. In contrast, the result of \cite{Halko:2011kg} establishes that $\mathbb{E}\ \tfrac{n}{m}\norm{\mA^{\dagger}}_F^2 = \tfrac{n}{n-m-1} = \tfrac{1}{1-\delta}$ and that for any $\tau \geq 1$,
\[
\prob \left[ \tfrac{n}{m}\norm{\mA^{\dagger}}_F^2 \geq \tfrac{12 n}{n-m} \tau\right] \leq 4 \tau^{-(n-m)/4} = 4 e^{-n \tfrac{(1-m/n) \log \tau}{4}}.
\]
For large $n$ we have $\tfrac{12 n}{n-m} \tau \approx \tfrac{12\tau}{1-\delta} \approx 12\tau (\alpha^{*}_{2})^{2}$ and $\tfrac{(1-m/n) \log \tau}{4} \approx \tfrac{(1-\delta) \log \tau}{4}$, hence this provides a bound for $1+\epsilon := 12 \tau \geq 12$ with an exponent $\log \tau$, of the order of $\log \epsilon$ (instead of $\epsilon^{4}$ we get for $0<\epsilon \leq 1$). Furthermore, unlike  \cite{Halko:2011kg}, we give a two-sided bound; that is, we also show that the probability of $\tfrac{n}{m}\norm{\mA^\dag}_F^2$ being much smaller than $(\alpha^*)^2$ is exponentially small.
\end{remark}

The most interesting corollary is for $p = 1$. 

\begin{corollary}[$p=1$]\label{cor:P1}
With the notations of the above theorem, for $0<\delta<1$ and $n \geq N(\delta)$ we have: for any $0<\epsilon \leq 1$
\begin{equation}
        \prob \left[ \abs{ \tfrac{n}{m}\norm{\spinv(\mA)}_F^2 -  (\alpha^*)^2} \geq \epsilon(\alpha^*)^2 \right] \leq \     \frac{n}{C_{1}\epsilon}\e^{-C_{2}n\epsilon^{4}},
    \end{equation}
    where the constants $C_{1},C_{2}>0$ may depend on $\delta$ but not on $n$ or $\epsilon$, and $\alpha^{*}=\alpha^{*}_{1}(\delta;n)$ with
    \begin{equation}
    \lim_{n \to \infty} \alpha_{1}^{*}(\delta;n)
    = \sqrt{\frac{1}{\sqrt{\tfrac{2}{\pi}} \e^{-\tfrac{(t_1^*)^2}{2}} t^{*}_{1} - \delta(t_1^*)^2} -\tfrac{1}{\delta}}
    \end{equation}
    $t^*_{1} = \sqrt{2} \cdot \erfc^{-1}(\delta)$.
\end{corollary}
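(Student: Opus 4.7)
\medskip

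\noindent\emph{Proof proposal.} The plan is to specialize Theorem~\ref{thm:frob_of_l1} to $p=1$, for which the conjugate exponent is $p^{*}=\infty$. The ball $\{\vz : \norm{\vz}_{\infty}\leq t\}$ is the cube $[-t,t]^{n}$, whose Euclidean projection is coordinate-wise clipping, so
\[
\dist(\vh,\,\norm{\cdot}_{\infty}\leq t)^{2} \;=\; \sum_{i=1}^{n}(\abs{h_{i}}-t)_{+}^{2}.
\]
For $\vh\sim\mathcal{N}(0,\mI_{n})$ the summands are iid and, writing $\phi(t)=(2\pi)^{-1/2}e^{-t^{2}/2}$, integration by parts yields
\[
\mu(t)\bydef \E\bigl[(\abs{h}-t)_{+}^{2}\bigr] \;=\; (1+t^{2})\erfc(t/\sqrt{2})-2t\phi(t).
\]
Since $\vh\mapsto\dist(\vh,B)$ is $1$-Lipschitz, Gaussian concentration gives $\var\sqrt{\sum_{i}(\abs{h_{i}}-t)_{+}^{2}}=O(1)$ locally uniformly in $t$, whence $D_{1}(t;n)=\mu(t)+O(1/n)$, with a corresponding bound on $D'_{1}(t;n)$ from a parallel concentration argument applied to the derivative.

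Next I would solve $\delta=D(t)-\tfrac{t}{2}D'(t)$ in the limit. Using $\phi'(t)=-t\phi(t)$ and $\tfrac{d}{dt}\erfc(t/\sqrt{2})=-2\phi(t)$ one obtains $\mu'(t)=2t\,\erfc(t/\sqrt{2})-4\phi(t)$, and the $2t\phi(t)$ and $t^{2}\erfc(t/\sqrt{2})$ contributions cancel exactly to give the clean identity
\[
\mu(t)-\tfrac{t}{2}\mu'(t) \;=\; \erfc(t/\sqrt{2}).
\]
Hence in the limit the defining equation reduces to $\delta=\erfc(t/\sqrt{2})$, with unique positive solution $t_{1}^{*}=\sqrt{2}\,\erfc^{-1}(\delta)$.

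Substituting back I would compute $D(t_{1}^{*})=(1+(t_{1}^{*})^{2})\delta-2t_{1}^{*}\phi(t_{1}^{*})$, so, using $2\phi(t)=\sqrt{2/\pi}\,e^{-t^{2}/2}$,
\[
\delta-D(t_{1}^{*}) \;=\; \sqrt{\tfrac{2}{\pi}}\,t_{1}^{*}\,e^{-(t_{1}^{*})^{2}/2}-\delta(t_{1}^{*})^{2}.
\]
The algebraic identity $\tfrac{D}{\delta(\delta-D)}=\tfrac{1}{\delta-D}-\tfrac{1}{\delta}$ then produces exactly the stated formula for $\lim_{n\to\infty}\alpha_{1}^{*}(\delta;n)$. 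To legitimately invoke Theorem~\ref{thm:frob_of_l1} I must still verify its hypothesis $-t^{*}D'_{1}(t^{*};n)\geq\gamma(\delta)>0$; a direct calculation gives $-t_{1}^{*}\mu'(t_{1}^{*})=2(\delta-\mu(t_{1}^{*}))$, and Mill's inequality $\erfc(t/\sqrt{2})<\sqrt{2/\pi}\,e^{-t^{2}/2}/t$ (for $t>0$) guarantees that this is strictly positive for every $\delta\in(0,1)$.

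I expect the main obstacle to be the transition from these limiting quantities to the finite-$n$ objects $D_{1}(\cdot;n)$, $D'_{1}(\cdot;n)$, and $t_{1}^{*}(\delta;n)$: one needs uniform-in-$n$ Lipschitz control of $D'_{1}$ in a neighbourhood of $t_{1}^{*}$, together with continuity of the implicitly defined $t^{*}$, in order to conclude that the hypothesis of Theorem~\ref{thm:frob_of_l1} holds with the same $\gamma(\delta)$ for all $n\geq N(\delta)$. Everything else is routine calculus specialized to the Gaussian $\ell^{\infty}$-distance.
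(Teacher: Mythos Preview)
Your outline is correct and the limiting computations (the closed form for $\mu(t)$, the identity $\mu(t)-\tfrac{t}{2}\mu'(t)=\erfc(t/\sqrt2)$, the resulting $t_1^*$, and the algebra producing the stated $\alpha^*$) match the paper exactly. Your verification that $-t_1^*\mu'(t_1^*)=2(\delta-\mu(t_1^*))>0$ via Mill's ratio is in fact cleaner than the paper's argument for strict positivity in the limit.

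Where your route diverges from the paper is precisely at the step you flag as the obstacle: lifting the limiting positivity to the finite-$n$ hypothesis $-t^*D'_1(t^*;n)\geq\gamma(\delta)$. You propose to attack this via uniform-in-$n$ Lipschitz control of $D'_1$ plus continuity of the implicit $t^*$; this is workable but heavier than necessary. The paper bypasses any direct control of $D'_1$ by the convexity inequality
\[
-t^*D'_1(t^*;n)\;\geq\;D_1(t^*;n)-D_1(2t^*;n),
\]
obtained from $D_1(2t^*)\geq D_1(t^*)+t^*D'_1(t^*)$. Combined with the sandwich $\theta(t)-\tfrac1n\leq D_1(t;n)\leq\theta(t)$ (from $1$-Lipschitzness of the distance and Gaussian Poincar\'e, which you already invoke), this reduces the hypothesis to a lower bound on $\theta(t^*_1(\delta;n))-\theta(2t^*_1(\delta;n))-\tfrac1n$. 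The paper then pins $t^*_1(\delta;n)$ to a compact interval $[t_{\min}(\delta),t_{\max}(\delta)]$ independent of $n$ using only the crude bounds $D_1(t^*/2)\geq\delta$ and $D_1(t^*)\leq\delta$ (again consequences of convexity and the sandwich), and concludes by continuity and strict monotonicity of $\theta$. The upshot: convexity of $D_1(\cdot;n)$ lets you trade control of the derivative for control of the function at two points, which the $O(1/n)$ sandwich already gives you for free.
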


Results of Corollaries \ref{cor:P2} and \ref{cor:P1} are illustrated\footnote{For reproducible research, code is available online at \url{https://github.com/doksa/altginv}.} in Figures \ref{fig:frobinv1} and \ref{fig:frobinv2}. Figure \ref{fig:frobinv1} shows the shape of the limiting $(\alpha_{p}^*)^{2}$ as a function of $\delta$, as well as empirical averages of 10 realization for different values of $n$ and $\delta$. As expected, the limiting values get closer to the empirical result as $n$ grows larger; for $n=1000$ the agreement is near-perfect.

In Figure \ref{fig:frobinv1} we also show the individual realizations for different combinations of $n$ and $\delta$. As predicted by the two corollaries, the variance of both $\tfrac{n}{m}\norm{\mA^\dag}_F^2$ and $\tfrac{n}{m}\norm{\spinv(\mA)}_F^2$ reduces with $n$. For larger values of $n$, all realizations are very close to the limiting $(\alpha_{p}^*)^{2}$.

It is worth noting that the $\spinv$ and the MPP exhibit rather different qualitative behaviors. The Frobenius norm of the MPP monotonically decreases as $\delta$ gets smaller, while that of the $\spinv$ turns up below some critical $\delta$. This may be understood by noting that for small $\delta$, the support of the $\spinv$ is concentrated on very few entries, so those have to be comparably large to produce the ones in the identity matrix $\mI = \mA \mX$. However, the $\ell^2$ norm is more punishing for large entries.

\begin{figure}[h!]
\centering
\includegraphics[width=.7\linewidth]{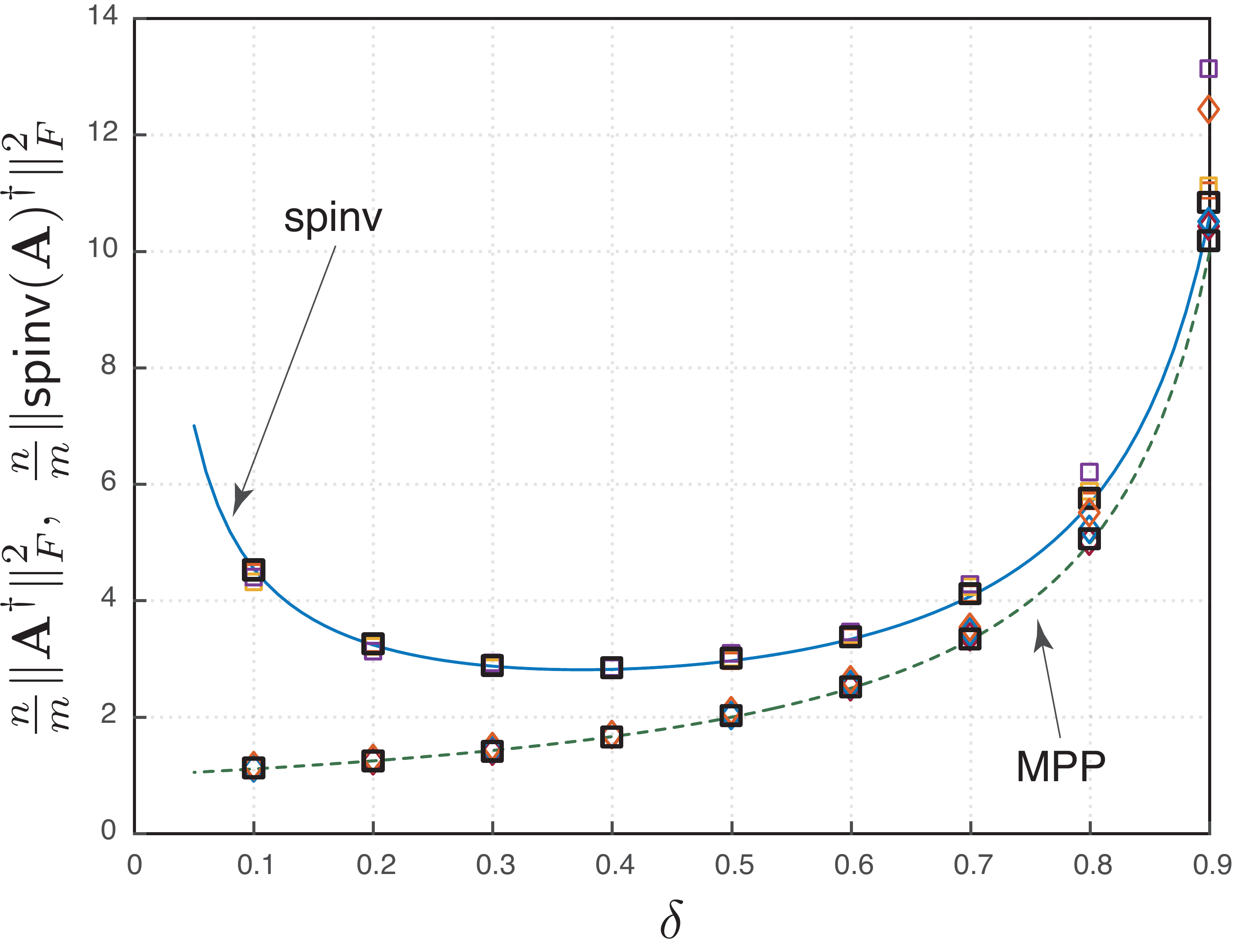}
\caption{Comparison of the limiting $(\alpha_{p}^*)^2$ with the mean of $\tfrac{n}{m}\norm{\mA^\dag}_F^2$ and $\tfrac{n}{m}\norm{\spinv(\mA)}_F^2$ for 10 realizations of $\mA$. Empirical results are given for $\delta \in \set{0.1, 0.2, \ldots, 0.9}$ and $n \in \set{100, 200, 500, 1000}$. Black squares represent the empirical result for $n=1000$; colored squares represent the empirical mean of $\tfrac{n}{m}\norm{\mA^\dag}_F^2$ for $n \in \set{100, 200, 500}$, with the largest discrepancy (purple squares) for $n=100$; colored diamonds represent the empirical mean of $\tfrac{n}{m}\norm{\spinv(\mA)}_F^2$ with the largest discrepancy (orange diamonds) again for $n=100$.}
\label{fig:frobinv1}
\end{figure}

\begin{figure}[h!]
\centering
\includegraphics[width= \linewidth]{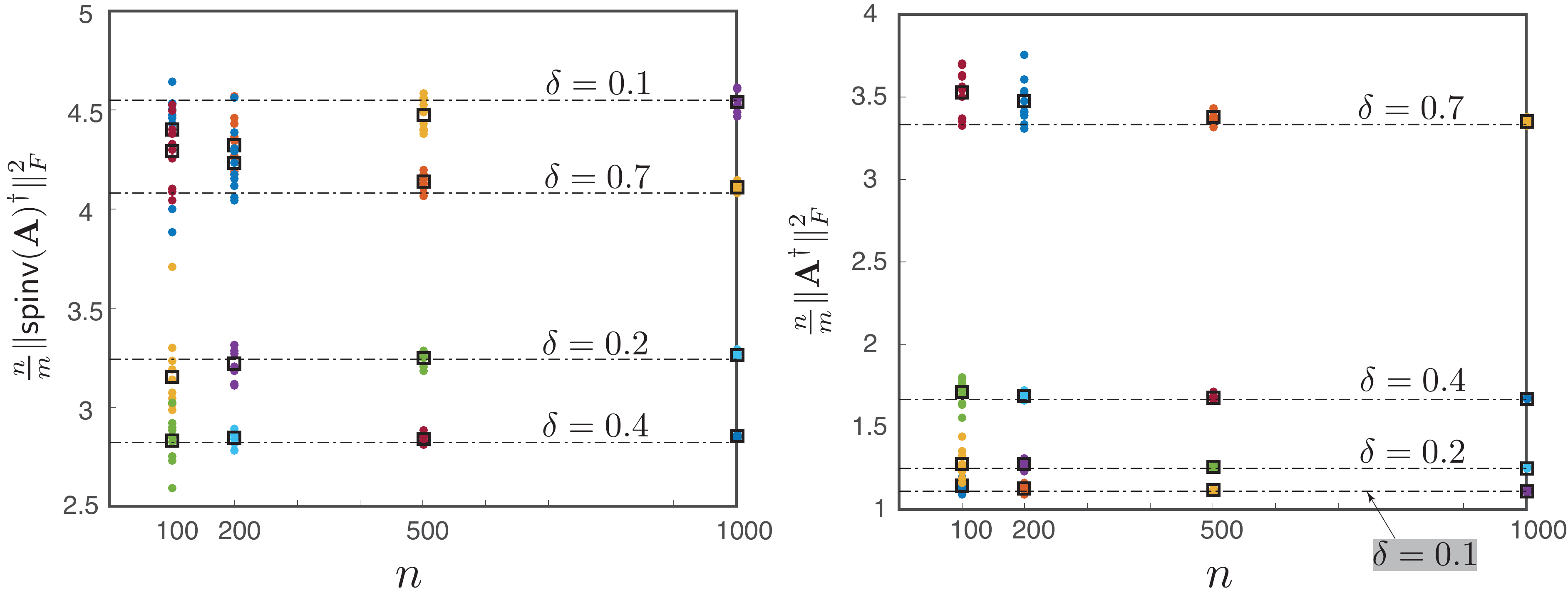}
\caption{Comparison of the limiting $(\alpha_{p}^*)^2$ with the value of  $\tfrac{n}{m}\norm{\spinv(\mA)}_F^2$ (left) and $\tfrac{n}{m}\norm{\mA^\dag}_F^2$ (right) for 10 realizations of $\mA$. Results are shown for $\delta \in \set{0.1, 0.2, 0.4, 0.7}$ and different values of $n$. Values for individual realizations are shown with colored dots with different color for every combination of $n$ and $\delta$. Horizontal dashed lines indicate the limiting value for the considered values of $\delta$.}
\label{fig:frobinv2}
\end{figure}

Several remarks are due:
\begin{enumerate}
    \item The bound \eqref{eq:stability_main_bound} and the corresponding bounds in Corollaries \ref{cor:P1} and \ref{cor:P2} involve $\epsilon^4$ instead of the usual $\epsilon^2$ implying a higher variance of $\tfrac{n}{m}\norm{\ginv{p}{\mA}}_F^2$: to guarantee a given probability in the right hand side of \eqref{eq:stability_main_bound}, $\epsilon$ should be of the order $n^{-1/4}$ instead of the usual $n^{-1/2}$. This seems to be a consequence of the technique used to lower bound $\inf_{\abs{\alpha-\alpha^{*}} \geq \epsilon} \kappa(\alpha)-\kappa(\alpha^{*})$ in Lemma \ref{le:ArgMinKappa} which relies on strong convexity of $\kappa$. A more refined bound may give the result with better error bars, albeit also complicate the analysis.
    
    \item Corollaries \ref{cor:P2} and \ref{cor:P1} specialize Theorem \ref{thm:frob_of_l1} to cases $p = 1$ and $p = 2$ proving that the Frobenius norm of the corresponding generalized inverses indeed concentrates, and giving a closed-form limiting value of the optimal $\alpha^*$. It would seem rather natural that an interpolation to $p \in (1, 2)$ is possible, although $\alpha^*_{p}$ would be specified implicitly and have to be computed numerically. Whether an extension to $p > 2$ is possible is less clear. Numerical evidence suggests that it is, but we leave the corresponding theory to future work.

    \item With additional work, one could characterize the rate of convergence of $\alpha^{*}_{p}(\delta;n)$ towards its limit in Corollaries \ref{cor:P2} and \ref{cor:P1}. We leave these characterizations to future work.
\end{enumerate}

\begin{proof}[Proof of Corollaries~\ref{cor:P2}-\ref{cor:P1}]
For $p \in \{1,2\}$, using Lemmas~\ref{le:boundp2}-\ref{le:boundp1} we lower bound  $-t^{*} D'_{p}(t^{*}) \geq \gamma(\delta)$ for all $n$ above some $N(\delta)$, and control $\lim_{n \to \infty} \alpha_{p}^{*}(\delta;n)$.
Applying Theorem~\ref{thm:frob_of_l1} yields the conclusion.
\end{proof}

\subsection{Proof of the Main Concentration Result, Theorem \ref{thm:frob_of_l1}}

We prove Theorem \ref{thm:frob_of_l1} by first noting that the matrix optimization for $\ginv{p}{\mA}$ decouples into $m$ vector optimizations, and then using the following vector result for each column of $\ginv{p}{\mA}$:

\begin{lemma}%
    \label{lem:concentration_column}%
    With notations and assumptions as in Theorem \ref{thm:frob_of_l1}, we have for $0<\epsilon' \leq 1$ and $n \geq \max(2/(1-\delta),N(\delta))$
    \[\prob
    \left[ \abs{ \sqrt{n} \norm{\vx^*} - \alpha^*} \geq \epsilon' \alpha^{*} \right] \leq \
    \frac{1}{K_{1}\epsilon'} \e^{-K_{2} n \epsilon'^{4}}.
    \]
    where $K_{1}, K_{2}>0$ may depend on $\delta$ but not on $n$ or $\epsilon'$.
\end{lemma}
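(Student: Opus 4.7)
The plan is to apply the convex Gaussian min-max theorem (CGMT) to the scalar column problem, reduce it to a deterministic scalar min-max, and then transfer concentration from the reduced problem back to the primal minimizer. Since the law of $\mA$ is invariant under left-multiplication by orthogonal matrices, the law of $\norm{\vx^*}_2$ depends only on $\norm{\vy}_2 = 1$, so without loss of generality I take the right-hand side to be $\ve_1$. I then write the column problem in Lagrangian min-max form
\[
\vx^* = \argmin_{\vx} \max_{\vu \in \R^m}\ \norm{\vx}_p - \vu^\top \ve_1 + \vu^\top \mA \vx,
\]
restricting $\vu$ to a Euclidean ball $\{\norm{\vu}_2 \leq R\}$ so that the compactness hypothesis of CGMT is satisfied. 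A short a priori bound on the dual optimizer shows that for $R$ sufficiently large (but finite) the constrained and unconstrained values, as well as their near-optimizers, coincide, so everything can eventually be sent back to the basis-pursuit problem.

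Applying CGMT, the bilinear form $\vu^\top \mA \vx$ is replaced by $\norm{\vx}_2\, \vg^\top \vu + \norm{\vu}_2\, \vh^\top \vx$ with $\vg \sim \mathcal{N}(0, \mI_m)$ and $\vh \sim \mathcal{N}(0, \mI_n)$ independent. Setting $\alpha = \norm{\vx}_2$ and $\beta = \norm{\vu}_2$, the inner maximization over the direction of $\vu$ yields $\beta \norm{\alpha \vg - \ve_1}_2$. The inner minimization over the direction of $\vx$ uses the duality $\norm{\vx}_p = \sup_{\norm{\vz}_{p^*} \leq 1} \vz^\top \vx$ together with Sion's theorem on the convex ball $\{\norm{\vx}_2 \leq \alpha\}$ (one-homogeneity pushes the optimum onto the sphere whenever the optimal value is negative), producing $-\alpha\, \dist(\beta \vh, \{\norm{\,\cdot\,}_{p^*} \leq 1\})$. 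Using the scaling $\dist(\beta \vh, \{\norm{\,\cdot\,}_{p^*} \leq 1\}) = \beta\, \dist(\vh, \{\norm{\,\cdot\,}_{p^*} \leq 1/\beta\})$ and substituting $t = 1/\beta$, the Auxiliary Optimization collapses to the scalar min-max
\[
\phi_R = \min_{\alpha \geq 0}\, \max_{t \geq 1/R}\ \tfrac{1}{t}\bigl[\,\norm{\alpha \vg - \ve_1}_2 - \alpha\, \dist(\vh, \{\norm{\,\cdot\,}_{p^*} \leq t\})\bigr].
\]

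Its deterministic companion is obtained by replacing each random quantity by its typical value, $\norm{\alpha \vg - \ve_1}_2 \leftrightarrow \sqrt{\alpha^2 m + 1}$ and $\dist(\vh, \{\norm{\,\cdot\,}_{p^*} \leq t\}) \leftrightarrow \sqrt{n D_p(t;n)}$, both justified by Gaussian Lipschitz concentration (with Lipschitz constants $\alpha$ and $1$ respectively). The first-order conditions in $(\alpha, t)$ yield precisely $\delta = D(t^*) - \tfrac{t^*}{2}D'(t^*)$ and, after substitution, the stated closed-form $\alpha^*$. Writing $\kappa(\alpha)$ for the deterministic outer objective, the hypothesis $-t^* D'_p(t^*) \geq \gamma(\delta) > 0$ forces strong convexity of $\kappa$ at $\alpha^*$, yielding the quadratic growth $\kappa(\alpha) - \kappa(\alpha^*) \geq c(\delta)\,(\alpha/\alpha^* - 1)^2$ in a neighborhood; this is exactly what Lemma~\ref{le:ArgMinKappa} provides.

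For the concluding transfer I use the hypothesis-testing form of CGMT. With the ``bad'' set $\mathcal{S}_{\epsilon'} = \{\vx : |\sqrt{n}\norm{\vx}_2 - \alpha^*| > \epsilon' \alpha^*\}$, I show, via Gaussian Lipschitz concentration combined with a Lipschitz-in-$(\alpha,t)$ net argument to handle the supremum over $t$, that the restricted AO value $\phi_R^{\mathcal{S}_{\epsilon'}}$ exceeds $\phi_R$ by at least a constant times $(\epsilon')^2$ with probability at least $1 - \tfrac{1}{K_1 \epsilon'}\exp(-K_2 n (\epsilon')^4)$. The exponent $(\epsilon')^4$ -- rather than the usual $(\epsilon')^2$ -- is the price for quadratic growth: Gaussian concentration of unit-Lipschitz functionals controls fluctuations of size $s$ with probability $\exp(-\Omega(n s^2))$, and we need $s \sim (\epsilon')^2$ to detect an $\epsilon'$-shift of the minimizer. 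Finally the second conclusion of CGMT (applicable since the payoff $\norm{\vx}_p - \vu^\top \ve_1$ is convex-concave in $(\vx, \vu)$ over the compact $\vu$-domain) transfers the estimate to the primal, giving $\prob[\vx^* \in \mathcal{S}_{\epsilon'}] \leq \tfrac{1}{K_1 \epsilon'}\exp(-K_2 n (\epsilon')^4)$. The main obstacle I anticipate is carrying out the Sion-minimax scalarization rigorously despite the non-convex sphere constraint on $\vx$, and uniformly controlling the supremum over $t$ -- whose Lipschitz constant depends delicately on the behaviour of $D'_p(t)$ near $t^*$ and feeds into the explicit form of the constants $K_1, K_2$.
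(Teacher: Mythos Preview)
Your proposal is correct and follows essentially the same route as the paper: recast basis pursuit as an $\ell^2$-lasso, apply CGMT, collapse the AO to a scalar convex--concave min--max whose deterministic limit $\kappa$ has minimizer $\alpha^*$, use quadratic growth of $\kappa$ at $\alpha^*$ to get the $(\epsilon')^4$ exponent, and transfer back via the two-set form of CGMT.

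Two points where the paper's execution differs from your sketch and which directly address the obstacles you flag. First, CGMT (Theorem~\ref{thm:cgmt}) requires compactness of \emph{both} constraint sets, not just the dual one; the paper therefore also imposes $\norm{\vx}_2 \leq K$ and removes it afterwards (Lemma~\ref{lemma:bounded_equals_unbounded}), whereas your sketch only bounds $\vu$. Second, the non-convex sphere obstacle you anticipate is sidestepped by reversing your order of operations: instead of first minimizing over the direction of $\vx$ (which lands on a sphere), the paper first applies the free inequality $\min\max \geq \max\min$ to the full vector AO (going from~\eqref{eq:aux_dual} to~\eqref{eq:aux_dual_swap}), then reduces to scalars over convex balls, and only then invokes Sion on the resulting convex--concave scalar problem in $(\alpha,\beta)$; the reverse swap on the restricted ``bad'' set $S_\epsilon$ is handled by splitting it into its two convex components (Lemma~\ref{lem:phi_eps_separation}). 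The paper's explicit lasso parameter $\lambda$ (your $1/R$) enters via Lemma~\ref{lem:lasso_satisfies_equality} and is what makes the condition $\lambda \leq t^*$ appear naturally in the analysis of $\kappa$.
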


The proof of this result is given in Section \ref{sub:proof_of_concentration}. It is based on the convex Gaussian min-max theorem \cite{Thrampoulidis:2016vo}, cf Appendix \ref{append:CGMT}. However, because the squared Frobenius norm is a sum of $m$ squared $\ell^2$ column norms, an ``in probability'' result on vector norms would not suffice. To address this shortcoming we developed a number of technical lemmas that lead to a stronger concentration result. With Lemma \ref{lem:concentration_column} we now continue working to prove Theorem \ref{thm:frob_of_l1}.

By definition, we have
\begin{equation*}
    \ginv{p}{\mA} = \argmin_{\mA \mX = \mI} \ \norm{\mathrm{vec}(\mX)}_p = \argmin_{\mA \mX = \mI} \ \norm{\mathrm{vec}(\mX)}_p^p,
\end{equation*}
where we assume that the solution is unique (for $p = 1$ this is true with
probability 1, so we can condition on this event). This optimization decouples
over columns of $\mX$: denoting $\mX^* = \ginv{p}{\mA}$ we have for the $i$th
column that $\vx^*_i = \argmin_{\mA \vx = \ve_i} \norm{\vx}_p$. 

Lemma \ref{lem:concentration_column} tells us that $\sqrt{n} \norm{\vx^*_{i}}_2$ remains close to $(\alpha^{*})$. However, to exploit the additivity of the squared Frobenius norms over columns a more useful statement would be that $n \norm{\vx^*_{i}}_2^2$ remains close to $(\alpha^{*})^{2}$. To show that this is indeed the case, we note that $(n \norm{\vx^*}^2 - (\alpha^*)^2) = (\sqrt{n} \norm{\vx^*} - \alpha^*)(\sqrt{n} \norm{\vx^*} + \alpha^*)$ so we can write for any $b > 0$
\[
    \prob \left[ \abs{ n \norm{\vx^*}^2 - (\alpha^*)^2} \geq \epsilon' (\alpha^*)^2 \right]
    \leq
    \prob \left[ \abs{ \sqrt{n} \norm{\vx^*} - (\alpha^*)} \geq \epsilon' (\alpha^*)^2 / b \right]
    + \prob \left[ (\sqrt{n} \norm{\vx^*} + \alpha^*) \geq b \right].
\]
By taking $b = 3 \alpha^*$, we bound the second term as 
\[
    \prob \left[ (\sqrt{n} \norm{\vx^*} + \alpha^*) \geq b \right]
    \leq
    \prob \left[ (\sqrt{n} \norm{\vx^*} - \alpha^*) \geq 1 \alpha^* \right] 
    \leq
    \frac{1}{K_{1}} \e^{-K_{2} n }
\]
using Lemma \ref{lem:concentration_column}. To bound the first term we again use Lemma \ref{lem:concentration_column} to finally obtain
\[
    \prob \left[ \abs{ n \norm{\vx^*}^2 - (\alpha^*)^2} \geq \epsilon (\alpha^*)^2 \right]
    \leq
    \frac{1}{K_{1}} \e^{-K_{2} n } + \frac{3}{K_{1}\epsilon} \e^{-K_{2} n (\epsilon / 3)^4}
    \leq
    \frac{1}{C_{1}\epsilon} \e^{-C_2 n \epsilon^4}
\]
with an appropriate choice of $C_{1},C_{2}$.

This characterizes the squared $\ell^2$ norm of one column of the MPP. The squared Frobenius norm is a sum of $m$ such terms $\sum_{i=1}^m \norm{\vx_i}^2$ which are not independent. Our goal is to show that
\begin{equation*}
    \frac{n}{m} \norm{\mX^*}_F^2 = \frac{n}{m} \sum_{i=1}^m \norm{\vx^*_i}^2 
\end{equation*}
stays close to $(\alpha^*)^{2}$ as well. We work as follows: 

\begin{eqnarray*}
    \prob\left[\abs{\tfrac{n}{m}  \norm{\mX^\star}_F^2 - (\alpha^*)^2} > \epsilon(\alpha^*)^2\right]
    &=& \prob\left[\abs{\sum_{i=1}^m (n\norm{\vx^\star}^2 - (\alpha^*)^2} > m \epsilon(\alpha^*)^2\right]\\ 
    &\leq& \prob\left[\sum_{i=1}^m \abs{n \norm{\vx^\star}^2 - (\alpha^*)^2} > m \epsilon(\alpha^*)^2\right] = (*).
\end{eqnarray*}
Now observe that if the sum of $m$ terms is to be larger than $m \epsilon$,
then at least one term must be larger than $\epsilon$, so we can continue writing:

\begin{equation}
\begin{aligned}
    (*) \leq \prob \left[ \, \exists \, k \in \set{1, \ldots, m}: \abs{n \norm{\vx_k^\star}^2 - (\alpha^*)^2} > \epsilon(\alpha^*)^2 \right]
    &= \prob \left[ \bigcup_{i=1}^m \set{\abs{n \norm{\vx_i^\star}^2 - (\alpha^*)^2} > \epsilon(\alpha^*)^2} \right]\\
    &\leq \sum_{i=1}^m \prob[ \abs{n \norm{\vx_i^\star} - (\alpha^*)^2} > \epsilon(\alpha^*)^2 ]\\
    & \leq m\frac{1}{C_{1}\epsilon} \e^{-C_{2}n\epsilon^{4}}
    \leq n\frac{1}{C_{1}\epsilon} \e^{-C_{2}n\epsilon^{4}},
\end{aligned}
\end{equation}
which completes the proof.

\subsection{Proof of the Main Vector Result, Lemma \ref{lem:concentration_column}} 
\label{sub:proof_of_concentration}

In the Appendix we establish Lemma \ref{lem:lasso_satisfies_equality}\footnote{We note that Lemma \ref{lem:lasso_satisfies_equality} is a simple restatement of \cite[Lemma 9.2]{Oymak:2013vm} with dependence on $\epsilon$ made explicit.} which lets us rewrite the optimization
\begin{equation}
    \label{eq:no_lasso}
    \vx^{*} = \arg\min_{\mA \vx = \ve_1} \norm{\vx}_p
\end{equation}
in an unconstrained form as (note that the data fidelity term is \emph{not} squared here):
\begin{equation}
    \label{eq:lasso}
    \tilde{\vx} = \arg\min_{\vx} \ \norm{\mA \vx - \ve_1} + \lambda \norm{\vx}_p.
\end{equation}
More precisely, since the $\ell^{p}$ norm is $L$-Lipschitz with respect to the $\ell^{2}$ norm (with $L \bydef n^{\max(1/p-1/2,0)}$), Lemma \ref{lem:lasso_satisfies_equality} tells us that if we choose $\lambda \leq \frac{\sqrt{n} - \sqrt{m}}{L} (1 - \epsilon)$, minimizers $\vx^{*}$ and $\tilde{\vx}$ of \eqref{eq:no_lasso} and \eqref{eq:lasso} coincide\footnote{An analogous result does not hold for the squared lasso (except for $\lambda = 0^+$).}  with probability at least $1 - 2 \e^{-\epsilon^2 (\sqrt{n} - \sqrt{m})^2/2}$. Rewriting the $\ell^2$ norm in a variational form, we
obtain
\begin{equation}
 \label{eq:frob_variational}
 \tilde{\vx} = \arg\min_{\vx} \max_{\vu:\norm{\vu} \leq 1} \vu^\T \mA \vx + \lambda \norm{\vx}_p - \vu^\T \ve_1.
\end{equation}
The expression \eqref{eq:frob_variational} is a sum of a bilinear term involving $\mA$ and a convex-concave function\footnote{Convex in the first argument, concave in the second one.} $\psi(\vx, \vu) = \lambda \norm{\vx}_p - \vu^\T \ve_{1}$. That is exactly the structure required by the convex Gaussian min-max theorem \cite[Theorem 3]{Thrampoulidis:2015vf} \cite[Theorem 6.1]{Thrampoulidis:2016vo}. For readers' convenience, we reproduce the version of CGMT we use in Appendix \ref{append:CGMT} as Theorem \ref{thm:cgmt}. The only non-conforming detail is that $\vx$ is not constrained to be in a compact set. To address this, instead of \eqref{eq:frob_variational}, we analyze the following bounded modification:
\begin{equation}
 \label{eq:po_bounded}
 \tilde{\vx}_{K} = \arg\min_{\vx : \norm{\vx} \leq K} \max_{\vu:\norm{\vu} \leq 1} \vu^\T \mA
 \vx + \lambda \norm{\vx}_p - \vu^\T \ve_1
\end{equation}
We will show in due time that $\wt{\vx}_{K} = \wt{\vx}$ with high probability.
%

By the CGMT, instead of analyzing the so-called principal optimization
\eqref{eq:po_bounded}, we can analyze an auxiliary optimization
\begin{equation}
 \label{eq:auxiliary}
 \hat{\vx}_{K} = \arg\min_{\vx : \norm{\vx} \leq K} \max_{\vu:\norm{\vu} \leq 1} \norm{\vx} \vg^\T \vu - \norm{\vu} \vh^\T \vx + \lambda \norm{\vx}_p - \vu^\T \ve_1,
\end{equation}
where $\norm{\, \cdot \,}$ are 2-norms and $\vg \in \R^m$ and $\vh \in \R^n$ are iid standard Gaussian random vectors. Part \ref{thmpart:cgmt_optval_concentrate} of the CGMT tells us that if the optimal value of the auxiliary optimization~\eqref{eq:auxiliary} concentrates (note that \eqref{eq:auxiliary} is a random optimization program), the optimal value of \eqref{eq:po_bounded} will concentrate around the same value. This lets us prove that if the norm $\norm{\hat{\vx}_{K}}$ of the optimizer of \eqref{eq:auxiliary} concentrates, the norm $\norm{\tilde{\vx}_{K}}$ of the optimizer of \eqref{eq:po_bounded} will also concentrate around the same value.\footnote{The referenced CGMT contains a similar statement, albeit we need a different derivation to get exponential concentration.}

We will now formally show how to use this property for our purpose by going through a series of steps to simplify \eqref{eq:auxiliary}. The analysis will be easier if the argument of the optimization is of order 1, so let $\vz$ be an appropriately scaled version of $\vx$, $\vz = \vx \sqrt{n}$ (accordingly $A = K \sqrt{n}$). Using the variational characterization of the $\ell^p$ norm we can rewrite \eqref{eq:auxiliary} as

\begin{equation}
 \label{eq:aux_dual}
\hat{\vz}_{A} =  \arg \min_{\vz : \norm{\vz} \leq A} \max_{\substack{\beta: 0 \leq \beta \leq 1\\ \vu : \norm{\vu} = \beta\\\vw : \norm{\vw}_{p^*} \leq 1}} \frac{1}{\sqrt{n}} \norm{\vz} \vg^\T \vu - \frac{1}{\sqrt{n}} \norm{\vu} \vh^\T \vz - \vu^\T \ve_1 + \frac{\lambda}{\sqrt{n}} \vw^\T \vz.
\end{equation}

Let $\hat{\vz}_{A}$ be a (random) optimizer of \eqref{eq:aux_dual}, and $\check{\vz}_{A}$ a (random) value of $\vz$ at the optimum of the following program with the same cost function but an altered order of minimization and maximization:
\begin{equation}
    \label{eq:aux_dual_swap}
    \max_{\substack{\beta : 0 \leq \beta \leq 1\\\vw:\norm{\vw}_{p^*} \leq 1}} \min_{\vz : \norm{\vz} \leq A}  \max_{\vu:\norm{\vu} = \beta} \frac{1}{\sqrt{n}} \norm{\vz} \vg^\T \vu - \frac{1}{\sqrt{n}} \norm{\vu} \vh^\T \vz 
- \vu^\T \ve_1 + \frac{\lambda}{\sqrt{n}} \vw^\T \vz.
\end{equation}
Even though after performing optimization over $\vu$ in \eqref{eq:aux_dual} we do not get a convex-concave cost function, we will see that we can indeed reorder maximizations as in \eqref{eq:aux_dual_swap} for our purposes. In the following we focus on \eqref{eq:aux_dual_swap} since it is easier to analyze. We compute as follows:

\begin{subequations}
\begin{align}
\label{eq:aux_dual_swap_first_line}
\eqref{eq:aux_dual_swap} 
=
&\max_{\substack{\beta : 0 \leq \beta \leq 1\\\vw : \norm{\vw}_{p^*} \leq 1}} \min_{\vz : \norm{\vz} \leq A}  \beta \bignorm{\frac{\norm{\vz} \vg}{\sqrt{n}} - \ve_1} - \frac{\beta}{\sqrt{n}} \vh^\T \vz + \frac{\lambda}{\sqrt{n}} \vw^\T \vz \\
=
&\max_{\substack{\beta : 0 \leq \beta \leq 1\\\vw : \norm{\vw}_{p^*} \leq 1}} \min_{\alpha : 0 \leq \alpha \leq A} \min_{\vz:\norm{\vz} = \alpha}  \beta \bignorm{\frac{\norm{\vz} \vg}{\sqrt{n}} - \ve_1} - \frac{\beta}{\sqrt{n}} \vh^\T \vz + \frac{\lambda}{\sqrt{n}} \vw^\T \vz \\
=
&\max_{\substack{\beta : 0 \leq \beta \leq 1\\\vw : \norm{\vw}_{p^*} \leq 1}} \min_{\alpha : 0 \leq \alpha \leq A} \beta \norm{\frac{\alpha \vg}{\sqrt{n}} - \ve_1} - \frac{\alpha}{\sqrt{n}} \norm{\beta \vh - \lambda \vw }.
\end{align}
\end{subequations}

The objective in the last line is convex in $\alpha$ and jointly
concave in $(\beta, \vw)$. Additionally, the constraint sets are all convex and
bounded, so by \cite[Corollary 3.3]{Sion:1958jm} we can swap the order of
$\min$ and $\max$ again. We thus continue writing:

\begin{subequations}
\begin{align}
= &\min_{\alpha : 0 \leq \alpha \leq A} \max_{\substack{\beta : 0 \leq \beta \leq 1\\\vw : \norm{\vw}_{p^*} \leq 1}}  \beta \norm{\frac{\alpha \vg}{\sqrt{n}} - \ve_1} - \frac{\alpha}{\sqrt{n}} \norm{\beta \vh - \lambda \vw } \\
= &\min_{\alpha : 0 \leq \alpha \leq A} \max_{\beta : 0 \leq \beta \leq 1} \left( \beta \norm{\frac{\alpha \vg}{\sqrt{n}} - \ve_1} -  \frac{\alpha}{\sqrt{n}} \min_{\vw : \norm{\vw}_{p^*} \leq 1} \norm{\beta \vh - \lambda \vw }\right) \\
= &\min_{\alpha : 0 \leq \alpha \leq A} \max_{\beta : 0 \leq \beta \leq 1}  \beta \norm{\frac{\alpha \vg}{\sqrt{n}} - \ve_1} - \frac{\alpha \beta}{\sqrt{n}} \dist(\beta\vh, \norm{\,\cdot\,}_{p^*} \leq \lambda )\\
\label{eq:opt_two_scalars}
= &\min_{\alpha : 0 \leq \alpha \leq A} \max_{\beta : 0 \leq \beta \leq 1} \phi(\alpha,\beta;\vg,\vh)
\end{align}
\end{subequations}
where
\begin{equation}\label{eq:DefPhi2Scalars}
\phi(\alpha,\beta;\vg,\vh) \bydef  \beta \norm{\frac{\alpha \vg}{\sqrt{n}} - \ve_1} - \frac{\alpha}{\sqrt{n}} \dist(\beta\vh, \norm{\,\cdot\,}_{p^*} \leq \lambda).
\end{equation}

We have achieved two important feats: 1) we simplified a high-dimensional vector optimization~\eqref{eq:lasso} into an optimization over two scalars~\eqref{eq:opt_two_scalars}, 2) one of these scalars, $\check{\alpha} = \norm{\check{\vz}_{A}}$ is almost giving us what we seek---the (scaled) Frobenius norm of $\mX$. To put all pieces together, there now remains to formally prove that the min-max switches and the concentation results we mentioned actually hold.

\paragraph{Combining the ingredients.}

By Lemma~\ref{lem:concentration2}, $\phi(\alpha,\beta; \vg,\vh)$ concentrates around some deterministic function $\kappa(\alpha,\beta) = \kappa_{p}(\alpha,\beta;n,\delta,\lambda)$ with $\delta = (m-1)/n$. By Lemma~\ref{le:ArgMinKappa}-Property 3, the minimizer of this function is given by
\begin{equation}
\argmin_{\alpha : 0 \leq \alpha \leq A} \max_{\beta: 0 \leq \beta \leq 1} \kappa(\alpha,\beta) 
= \alpha^{*} = \alpha^{*}_{p}(\delta;n) \bydef \sqrt{\frac{D_{p}(t^{*}_{p};n)}{\delta(\delta-D_{p}(t^{*}_{p};n))}}
\end{equation}
as soon as $A=K\sqrt{n}>\alpha^{*}$ and $\lambda \leq t^{*}$ (cf. Lemma~\ref{lem:Dp_det_prop}-Property 7 for the definition of $t^{*}=t^{*}_{p}(\delta,n)$). By Lemma~\ref{le:empiricalminmax}, for $0<\epsilon\leq \max(\alpha^{*},A-\alpha^{*})$, the minimizer 
\[
\alpha^{*}_{\phi} \bydef \argmin_{0 \leq \alpha \leq A} \max_{0 \leq \beta \leq 1}\phi(\alpha, \beta; \vg, \vh)
\]
stays $\epsilon$-close to $\alpha^*$ with high probability:
\[
 \prob[\abs{\alpha^{*}_{\phi}-\alpha^{*}} \geq \epsilon] \leq \zeta\big(n,\tfrac{\omega(\epsilon)}{2}\big),
\]
where 
\[
\zeta(n,\xi) = \zeta(n,\xi;A,\delta) \bydef \tfrac{c_{1}}{\xi} \e^{-c_{2} \xi^{2}n \cdot c(\delta,A)},
\qquad\qquad \text{with}\ c(\delta,A) \bydef \tfrac{\min(\delta,A^{-2})}{1+\delta A^{2}}
\]
and $c_{1},c_{2}$ universal constants, and
\[
\omega(\epsilon) = \omega_{p}(\epsilon;n,\delta,\lambda) \bydef \frac{\epsilon^2}{2} \frac{\lambda\delta/t^{*}}{(1 + \delta(\alpha^* + \epsilon)^2)^{3/2}}.
\]
As a consequence, as established in Lemma~\ref{lem:minmaxswap}, the scaled norm $\sqrt{n} \norm{ \tilde{\vx}_{K}}$ of the minimizer of the (bounded) principal optimization problem \eqref{eq:po_bounded} stays $\epsilon$-close to $\alpha^*$ with high probability
\[
        \prob \big[ \abs{\sqrt{n} \norm{\wt{\vx}_K} - \alpha^*} \geq \epsilon \big]  \leq 4 \zeta\big(n, \tfrac{\omega(\epsilon)}{2}\big).
\]
Similarly, for $0<\epsilon\leq \min(\alpha^{*},A-\alpha^{*})$ by invoking Lemma~\ref{lemma:bounded_equals_unbounded} we obtain that the scaled norm $\sqrt{n} \norm{ \tilde{\vx}}$ of the minimizer of the \emph{unbounded} optimization \eqref{eq:frob_variational} stays in the neighborhood of $\alpha^*$ with high probability
\[        
    \prob \big[ \abs{\sqrt{n} \norm{\wt{\vx}} - \alpha^*} \geq \epsilon \big]  \leq 4 \zeta\big(n, \tfrac{\omega(\epsilon)}{2}\big)
\]
and in fact \(    \prob[\wt{\vx} \neq \wt{\vx}_{K}] \leq 4\zeta\big(n,\tfrac{\omega(\epsilon)}{2}\big) \).

Since the $\ell^{p}$ norm is $L$-Lipschitz with respect to the Euclidean metric in $\R^{n}$, with $L = n^{\max(1/p-1/2,0)}$, invoking Lemma \ref{lem:lasso_satisfies_equality} yields that, for any  
\[
\lambda < \lambda_{\max}(n,m,t^{*}) \bydef
\min \set{\tfrac{\sqrt{n} - \sqrt{m}}{2 L}, t^{*}}. 
\] 
the minimizer $\vx^{*}$ of the equality-constrained optimization \eqref{eq:no_lasso} coincides with the minimizer $\tilde{\vx}$ of the lasso formulation \eqref{eq:lasso}-\eqref{eq:frob_variational} except with probability at most $\e^{-n(1 - \sqrt{\delta+1/n})^2/8}$.

Overall this yields, for $\lambda < \lambda_{\max}$ and $\epsilon \leq \min(\alpha^{*},A-\alpha^{*})$:
\begin{equation}
\label{eq:CombiningIngredientsMainProbBound}
 \prob \big[\abs{\sqrt{n}\|\vx^{*}\|-\alpha^{*}} \geq \epsilon\big]  \leq \e^{-n(1 - \sqrt{\delta+1/n})^2/8} + 4\zeta\big(n,\tfrac{\omega(\epsilon)}{2}\big).
\end{equation}
The infimum over admissible values of $\lambda$ is obtained by taking its value when $\lambda = \lambda_{\max}$. 

\paragraph{Expliciting the bound~\eqref{eq:CombiningIngredientsMainProbBound}.}

From now on we choose $A \bydef 2\alpha^{*}$ and, for $0 < \epsilon' \leq 1$, we consider $\epsilon \bydef \epsilon' \alpha^{*}$ (which satisfies $0 < \epsilon  \leq \min(\alpha^{*},A-\alpha^{*}) = \alpha^{*}$). 
We use $\lesssim_{\delta}$ and $\gtrsim_{\delta}$ to denote inequalities up to a constant that may depend on $\delta$, but not on $n$ or $\epsilon'$, provided $n \geq n(\delta)$. We specify $n(\delta)$ where appropriate.

By Lemma~\ref{le:genericboundTp}-item~\ref{it:TMax}, for any $1 \leq p \leq 2$ and any $n \geq 1$
\begin{eqnarray*}
\lambda_{\max} = t^{*}\ \min\left(\tfrac{\sqrt{n}-\sqrt{m}}{n^{1/p-1/2}}\tfrac{1}{2t^{*}},1\right)
&\gtrsim_{\delta}& t^{*}\ \min\left(\tfrac{\sqrt{n}(1-\sqrt{m/n})}{n^{1/p-1/2}}\tfrac{1}{n^{1-1/p}},1\right)
= t^{*} \min\left(1-\sqrt{m/n},1\right).
\end{eqnarray*}
For $n \geq \tfrac{2}{1-\delta}$ we have 
\begin{equation}\label{eq:lipconcentTmp}
1-\sqrt{m/n} = 1-\sqrt{\delta+\tfrac{1}{n}} \geq 1-\sqrt{(1+\delta)/2} \gtrsim_{\delta} 1
\end{equation}
hence $\lambda_{\max} \gtrsim_{\delta} t^{*}$.

With the shorthands $D(t) = D_{p}(t;n)$ and $D = D_{p}(t^{*};n)$, we have
\[
1+\delta (\alpha^{*}+\epsilon)^{2} \leq 1+\delta A^{2} \leq 4(1+\delta (\alpha^{*})^{2}) = 4(1+\tfrac{D}{\delta-D})= \frac{4\delta}{\delta-D} \lesssim_{\delta} (\delta-D)^{-1}.
\]
hence with $\lambda = \lambda_{\max}$ we get for $n \geq 2/(1-\delta)$ 
\begin{eqnarray*}
\omega(\epsilon) 
& = &
\frac{\epsilon'^{2}}{2} \frac{\lambda_{\max}}{t^{*}} \frac{\delta (\alpha^{*})^{2}}{(1+\delta(\alpha^{*}+\epsilon)^{2})^{3/2}}
\gtrsim_{\delta} 
\epsilon'^{2} \frac{ \tfrac{D}{\delta-D}}{(\tfrac{4\delta}{\delta-D})^{3/2}}
\gtrsim_{\delta}
\epsilon'^{2}  D\sqrt{\delta-D} 
\end{eqnarray*}
By Lemma~\ref{le:genericboundTp}-item~\ref{it:lowerBoundDp} we have $D = D_{p}(t^{*};n) \geq (\delta/C)^{2}$ for a universal constant $C$ independent of $n$ or $p$, hence for $n \geq 1$, $D \gtrsim_{\delta} 1$ and for $n \geq 2/(1-\delta)$,
\[
\omega(\epsilon) \gtrsim_{\delta} \sqrt{-t^{*}D'_{p}(t^{*};n)} \epsilon'^{2}
\]
Moreover since $\min(\delta,A^{-2}) \geq \tfrac{1}{4}\min(\delta,(\alpha^{*})^{-2}) = \tfrac{1}{4}\min(\delta,\delta(\delta-D)/D)$ we also get
\begin{eqnarray*}
c(\delta,A) 
&= &
\tfrac{\min(\delta,A^{-2})}{1+\delta A^{2}} 
\geq
\tfrac{\min(\delta, \delta(\delta-D)/D)}{\delta/(\delta-D)}
=
\min(\delta-D, \tfrac{(\delta-D)^{2}}{D}) \geq (\delta-D)^{2} = \tfrac{1}{4} [-t^{*}D'_{p}(t^{*};n)]^{2}.
\end{eqnarray*}
Since $-t^{*} D'_{p}(t^{*};n) \geq \gamma(\delta) > 0$ for any $n \geq N(\delta)$ (recall that $t^{*}= t^{*}_{p}(\delta;n)$), we have
\begin{equation}\label{eq:MainAssumption}
-t^{*}D'_{p}(t^{*}) \gtrsim_{\delta} 1,
\end{equation}
and we obtain for $n \geq \max(2/(1-\delta),N)$: $\omega(\epsilon) \gtrsim_{\delta} \epsilon'^{2}$ and $c(\delta,A) \gtrsim_{\delta} 1$. Combining the above yields, for $0<\epsilon'\leq 1$, $n \geq \max(2/(1-\delta),N(\delta))$:
\begin{eqnarray*}
    \prob[\abs{\norm{\vx^*} - \tfrac{\alpha^*}{\sqrt{n}}} \geq \tfrac{\epsilon'\alpha^{*}}{\sqrt{n}}] 
&    \leq &   \e^{-C_{1} n}
    + 4\zeta\big(n, C_{2} \epsilon'^{2}\big)\\
 &\leq &  
+ \frac{1}{K_{1}(\epsilon')^{2}} \e^{-K_{2}(\epsilon')^{4} n} 
\end{eqnarray*}
with $K_{i},C_{i} \gtrsim_{\delta} 1$. 


\section{Conclusion} 
\label{sec:conclusion}

In this paper (Part II of the ``Beyond Moore-Penrose'' mini-series) we looked at generalized matrix inverses which minimize entrywise $\ell^p$ norms, with a particular emphasis on the sparse pseudoinverse ($p = 1$). Our central result is Theorem~\ref{thm:frob_of_l1} together with Corollaries \ref{cor:P2} and \ref{cor:P1} which discuss numerical stability of a class of generalized inverses as measured by their Frobenius norm. This allows us to quantify the MSE hit incurred by using the sparse pseudoinverse instead of the MPP and to show that, in fact, this hit is controlled. 

We highlight three main results about the sparse pseudoinverse $\spinv(\mA) = \ginv{1}{\mA}$ of a \emph{generic} matrix $\mA \in \R^{m \times n}$: 1) it is unique, 2) it has precisely $m$ zeros per column, and 2) its Frobenius norm is characterized by Corollary \ref{cor:P1}. For a large range of $m / n$ (with $m < n$) the Frobenius norm of the sparse pseudoinverse is relatively close to the Frobenius norm of the MPP which is the smallest possible among all generalized inverses (Figure \ref{fig:frobinv1}). This does not hold for certain ad hoc strategies that yield generalized inverses with the same non-zero count (Figure \ref{fig:minor}). More generally, we gave finite-size concentration bounds for the square of the Frobenius norm of all $\ginv{p}{\,\cdot\,}$, $1 \leq p \leq 2$, complementing a known result for $p = 2$.

Theorem~3.1 of Part I and Corollary~\ref{cor:P1} together establish a form of regularity of the sparse pseudoinverse, namely that $\tfrac{n}{m} \E \left[ \spinv(\mA) \mA \vx\right] = \vx$ and that $\norm{\spinv(\mA) \mA \vx}$ is well-behaved (at least for Gaussian random matrices)---two essential properties exhibited by the MPP. 

As a useful side product, along the way we presented a number of new results related to matrix analysis, sparse representations, and convex programming. For example, we proved that basis pursuit generically has a unique minimizer---a folklore fact mentioned repeatedly in the literature, but for which we could find no proof. The same goes for the fact that $\ell_p$ minimization for $0 \leq p \leq 1$ has a sparse solution---to the best of our knowledge, yet another piece of thus far unproven folklore.

The most important future work is to develop extensions of Theorem \ref{thm:frob_of_l1} and the two corollaries to $p > 2$.



\section{Acknowledgments}

The authors would like to thank Mihailo Kolund\v{z}ija, Miki Elad, Jakob Lemvig, and Martin Vetterli for the discussions and input in preparing this manuscript. A special thanks goes to Christos Thrampoulidis for his help in understanding and applying the Gaussian min-max theorem, and to Simon Foucart for discussions leading to the proof of Lemma~\ref{lem:zeros_in_l1} for $p<1$.

This work was supported in part by the European Research Council, PLEASE project (ERC-StG-2011-277906).


\section*{Appendices}
\appendix

\section{Results about Gaussian processes}

\subsection{Concentration of measure} 

\begin{lemma}
\label{lem:concentration_results}
Let $\vh$ be a standard Gaussian random vector of length $n$, $\vh \sim \mathcal{N}(\vec{0}, \mI_n)$, and $f : \R^n \to \R$ a 1-Lipschitz function. Then the following hold:

\begin{enumerate}[label=(\alph*)]
    \item \label{lemitem:gausnormsingle} For any $0 < \epsilon < 1$, $\prob[\norm{\vh}^2 \leq n(1 - \epsilon)] \leq \e^{-\epsilon^2 / 4}$; $\prob[\norm{\vh}^2 \geq n/(1 - \epsilon)] \leq \e^{-\epsilon^2 / 4}$;
    \item \label{lemitem:gausnormabs} For any $0 < \epsilon < 1$, $\prob\left\{\norm{\vh}^2 \notin [(1 - \epsilon)n, n/(1 - \epsilon)]\right\} \leq 2\e^{-\tfrac{\epsilon^2 n}{4}}$;
    \item \label{lemitem:norm2abs} For any $\epsilon > 0$,
        $\prob\left[\abs{\norm{\vh}^2 - n } \geq \sqrt{\epsilon n}\right] \leq         
        \begin{cases}
            2 \e^{-\frac{\epsilon}{8}} & \text{for $0 \leq \epsilon \leq n$}, \\
            2 \e^{-\frac{\sqrt{\epsilon n}}{8}} &\text{for $\epsilon > n$}.
        \end{cases}$
    \item \label{lemitem:lipgaussingle} For any $u > 0$, $\prob\left[f(\vh) - \E f(\vh) \geq t\right] \leq \e^{-u^2/2}$; $\prob\left[f(\vh) - \E f(\vh) \leq -u\right] \leq \e^{-u^2/2}$;
    \item \label{lemitem:lipgausabs} For any $u > 0$, $\prob[\abs{f(\vh) - \E f(\vh)} \geq u] \leq 2\e^{-u^2/2}$;
    \item \label{lemitem:varoflip} $\var[f(\vh)] \leq 1$
\end{enumerate}
\end{lemma}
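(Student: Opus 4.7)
The plan is to treat parts (a)--(c), which concern $\|\vh\|^2$, separately from parts (d)--(f), which concern a general $1$-Lipschitz function. The former all follow from Chernoff bounds applied to a chi-squared random variable with $n$ degrees of freedom; the latter follow by invoking the classical Gaussian concentration inequality of Borell--Cirel'son--Ibragimov--Sudakov together with the Gaussian Poincar\'e inequality. Nothing new is required: the work is in lining up constants with what the rest of the paper uses.

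For (a), I would start from the moment generating function $\E \e^{t\|\vh\|^2} = (1-2t)^{-n/2}$, valid for $t<1/2$. The upper tail $\prob[\|\vh\|^2 \geq n/(1-\epsilon)]$ is handled by Markov applied to $\e^{t\|\vh\|^2}$ with $t = \epsilon/2$, together with the elementary inequality $-\log(1-\epsilon) - \tfrac{\epsilon}{1-\epsilon} \leq -\tfrac{\epsilon^2}{2}$ on $(0,1)$; the lower tail is symmetric, based on $\epsilon + \log(1-\epsilon) \leq -\epsilon^2/2$ and $t<0$. Part (b) is then an immediate union bound over the two one-sided bounds from (a).

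For (c), I would split on the size of $\epsilon$. For $0 \leq \epsilon \leq n$, setting $\epsilon' = \sqrt{\epsilon/n} \in [0,1]$ in (b) and noting that on the event $(1-\epsilon')n \leq \|\vh\|^2 \leq n/(1-\epsilon')$ one has $|\|\vh\|^2 - n|$ of order $\sqrt{\epsilon n}$, yields the claimed $2\e^{-\epsilon/8}$ decay after adjusting the absolute constant. For $\epsilon > n$, the sub-Gaussian regime breaks down and one returns directly to the Chernoff bound, this time optimizing $t$ at the large deviation $\|\vh\|^2 - n \gtrsim \sqrt{\epsilon n} > n$; the optimum now sits away from zero and produces an exponent linear rather than quadratic in the deviation, giving the stated $2\e^{-\sqrt{\epsilon n}/8}$ rate (essentially the sub-exponential tail of $\chi^2_n$).

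For (d) and (e), I would simply invoke the Gaussian concentration theorem: if $f\colon\R^n\to\R$ is $1$-Lipschitz and $\vh\sim\mathcal{N}(\vzero,\mI_n)$, then $\prob[|f(\vh)-\E f(\vh)| \geq u] \leq 2\e^{-u^2/2}$. Part (d) is the one-sided version, (e) the two-sided version obtained by a union bound. Finally, for (f) I would quote the Gaussian Poincar\'e inequality $\var f(\vh) \leq \E \|\nabla f(\vh)\|^2$; since $f$ is $1$-Lipschitz, Rademacher's theorem gives $\|\nabla f\|\leq 1$ almost everywhere, so $\var f(\vh) \leq 1$. The main difficulty here is not mathematical depth but bookkeeping: the constants $1/4$, $1/8$ in the exponents must match the tolerances demanded later, and the two regimes of (c) must be stitched together cleanly.
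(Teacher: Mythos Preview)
Your proposal is essentially correct and, for parts (a), (b), (d), (e), (f), matches the paper's proof almost verbatim: the paper also just cites standard references (Barvinok for (a), Ledoux for (d), the Gaussian Poincar\'e inequality for (f)) and derives (b), (e) by union bounds.

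The one genuine difference is part (c). The paper does not bootstrap from (b); instead it observes that each $h_i^2$ is sub-exponential with parameters $(\nu,b)=(2,4)$ and applies a Bernstein-type tail bound for sums of i.i.d.\ sub-exponentials (Wainwright, Proposition~2.2). Setting the deviation to $\sqrt{\epsilon n}$ immediately produces the two-regime bound with the exact constants $1/8$ stated. Your route---plugging $\epsilon'=\sqrt{\epsilon/n}$ into (b) and then arguing that the resulting interval controls $|\|\vh\|^2-n|$---is morally fine but is lossy on the upper side: the interval in (b) is asymmetric, with upper endpoint $n/(1-\epsilon')$, so on that event one only gets $\|\vh\|^2-n\le n\epsilon'/(1-\epsilon')$, which exceeds $\sqrt{\epsilon n}$ and blows up as $\epsilon\to n$. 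To make your argument go through you would have to shrink $\epsilon'$ (say to $\tfrac12\sqrt{\epsilon/n}$), which degrades the exponent constant to something like $1/16$ rather than $1/8$. Since the constant in (c) is not used sharply downstream this is harmless for the paper's purposes, but the Bernstein route is both cleaner and delivers the stated constant without fuss.
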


\begin{proof}[Proofs and references]
We mostly refer to texts where the proofs can be found and show how to modify the standard forms of the bounds address our needs.
\begin{enumerate}[label=(\alph*)]
    \item See \cite[Corollary 2.3]{barvinok2005math}.
    \item Immediate consequence of $\ref{lemitem:gausnormsingle}$ by a union bound.
    \item We use a Bernstein-type inequality for sub-exponential random variables. Namely, $X_i = h_i^2$ is $\chi_1^2$ which is subexponential with parameters $\nu = 2, b = 4$ \cite[Example 2.4]{wainwright}, i.e. $\E[\e^{\lambda(X_i - \mu)}] \leq \e^{\frac{\nu^2 \lambda^2}{2}}$ for all $\abs{\lambda} \leq \frac{1}{b}$, and $\mu = \E[X_i] = 1$. Applying \cite[Proposition 2.2]{wainwright}
    then yields
    \[
        \prob\left[\abs{\norm{\vh}^2 - n} \geq \sqrt{\epsilon n} \right] \leq 2
        \begin{cases}
            \e^{-\frac{\epsilon}{8}} & \text{for $0 \leq \epsilon \leq n$}, \\
            \e^{-\frac{\sqrt{\epsilon n}}{8}} &\text{for $\epsilon > n$}.
        \end{cases}
    \]
    \item \cite[Eq. (1.22)]{Ledoux:1999ip}.
    \item Union bound applied to \ref{lemitem:lipgaussingle}.
    \item A consequence of Poincar\'e inequality for Gaussian measures \cite[Eq. (2.16)]{Ledoux:1999ip}: $\var[f(\vh)] \leq \E [\norm{\nabla f(\vh)}^2]$ for 1-Lipschitz $f$ for which $\norm{\nabla f(\vh)} \leq 1$.
\end{enumerate}
\end{proof}

\subsection{Complementary error function and related functions} 

We will be using the complementary error function and will need some basic properties which can be found e.g. in \cite{Chiani:2003eu} or references therein. 
\begin{lemma}
\label{le:erfc}
The complementary error function is defined for any $z \in \mathbb{R}$ as
\begin{equation}
\label{eq:deferfc}
\erfc(z) \bydef \frac{2}{\sqrt{\pi}} \int_{z}^{\infty} e^{-t^{2}} \! dt.
\end{equation}
For any $z>0$ we have
\begin{equation}
\label{eq:upperbounderfc}
\erfc(z) \leq \exp(-z^{2}).
\end{equation}
Moreover, if $g \sim \mathcal{N}(0,1)$ is a standard centered normal variable then any $x>0$
\begin{equation}
\prob \left\{ |g| > x\right\} = \erfc(x/\sqrt{2}).
\end{equation}
\end{lemma}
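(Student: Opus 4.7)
The definitional equation~\eqref{eq:deferfc} needs no argument, so the real work is the bound~\eqref{eq:upperbounderfc} and the Gaussian tail identity. My plan is to handle the identity first because it is a one-line substitution, and then attack the inequality via a unimodality trick on the difference.

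For the identity $\prob\{|g|>x\} = \erfc(x/\sqrt{2})$, I would use symmetry of the standard normal density to write $\prob\{|g|>x\} = 2\prob\{g>x\} = \tfrac{2}{\sqrt{2\pi}}\int_{x}^{\infty}e^{-u^{2}/2}\,du$. Then the change of variables $u = t\sqrt{2}$ gives $du = \sqrt{2}\,dt$ and transforms the lower limit into $x/\sqrt{2}$, producing $\tfrac{2}{\sqrt{\pi}}\int_{x/\sqrt{2}}^{\infty}e^{-t^{2}}\,dt = \erfc(x/\sqrt{2})$ by~\eqref{eq:deferfc}. Nothing more is required here.

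For the bound $\erfc(z)\le e^{-z^{2}}$ on $z>0$, my plan is to study the auxiliary function $D(z) := e^{-z^{2}} - \erfc(z)$ on $[0,\infty)$ and show $D\ge 0$. First, $D(0) = 1 - \erfc(0) = 0$ and $\lim_{z\to\infty} D(z) = 0$, since both summands vanish at infinity. Differentiating, $D'(z) = -2z\,e^{-z^{2}} + \tfrac{2}{\sqrt{\pi}}\,e^{-z^{2}} = e^{-z^{2}}\bigl(\tfrac{2}{\sqrt{\pi}} - 2z\bigr)$, which is positive on $[0,1/\sqrt{\pi})$ and negative on $(1/\sqrt{\pi},\infty)$. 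Hence $D$ increases from $0$ up to a single maximum at $z = 1/\sqrt{\pi}$ and then decreases back to $0$ as $z\to\infty$, so $D(z)\ge 0$ for all $z\ge 0$, which is the claim.

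There is no real obstacle; the only point worth highlighting is that the naive substitution $t = z+s$ in~\eqref{eq:deferfc} yields only the weaker Mills-type estimate $\erfc(z)\le e^{-z^{2}}/(z\sqrt{\pi})$, which blows up as $z\downarrow 0$ and therefore does not suffice near the origin. The unimodality argument on $D$ circumvents this by exploiting that the two terms share the same boundary values at $0$ and $\infty$, so the sign of $D$ is forced globally by the sign pattern of $D'$.
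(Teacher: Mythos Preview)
Your argument is correct. The paper does not actually prove this lemma; it merely states it and refers the reader to \cite{Chiani:2003eu} and references therein for the properties. Your self-contained proof is valid: the Gaussian tail identity is an immediate substitution, and your unimodality argument for $D(z)=e^{-z^{2}}-\erfc(z)$ (checking $D(0)=0$, $D(\infty)=0$, and that $D'$ has a single sign change at $z=1/\sqrt{\pi}$) cleanly forces $D\ge 0$ on $[0,\infty)$.

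For comparison, the bound in the cited reference is typically obtained via the Craig-type integral representation $\erfc(z)=\tfrac{2}{\pi}\int_{0}^{\pi/2}\exp\!\bigl(-z^{2}/\sin^{2}\theta\bigr)\,d\theta$, from which $\sin^{2}\theta\le 1$ gives $\erfc(z)\le e^{-z^{2}}$ in one line. That route is slicker once the representation is known, but your derivative argument is more elementary and avoids invoking the representation. Your remark that the naive Mills-type substitution only yields $\erfc(z)\le e^{-z^{2}}/(z\sqrt{\pi})$, which fails near $z=0$, is also a useful observation.
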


\begin{lemma}
\label{lem:theta_explicit}
Let $\theta(t) \bydef \E (\abs{h} - t)_+^2$, where $h \sim \mathcal{N}(0, 1)$ and $t \geq 0$. Then 
\begin{enumerate}
    \item $\theta(t) = \left(t^2+1\right) \erfc\left(\frac{t}{\sqrt{2}}\right)-\sqrt{\frac{2}{\pi }} \e^{-\frac{t^2}{2}} t $,
    \item $\theta(t) - (t/2) \theta'(t) = \erfc(t / \sqrt{2})$.
\end{enumerate}
\end{lemma}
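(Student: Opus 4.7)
The plan is to prove both parts directly from the definition $\theta(t) = \E (\abs{h}-t)_+^2$. The key observation is that $|h|$ has density $\sqrt{2/\pi}\,\e^{-x^{2}/2}$ on $[0,\infty)$, so
\[
\theta(t) = \sqrt{\tfrac{2}{\pi}} \int_{t}^{\infty} (x-t)^{2} \e^{-x^{2}/2}\, dx,
\]
and everything reduces to computing three standard Gaussian integrals.

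For part 1, I would expand $(x-t)^{2} = x^{2} - 2tx + t^{2}$ and evaluate each piece separately. The easy ones are $\int_{t}^{\infty} \e^{-x^{2}/2}\, dx = \sqrt{\pi/2}\,\erfc(t/\sqrt{2})$ (by definition of \erfc) and $\int_{t}^{\infty} x\e^{-x^{2}/2}\, dx = \e^{-t^{2}/2}$. The remaining one, $\int_{t}^{\infty} x^{2} \e^{-x^{2}/2}\, dx$, follows from integration by parts with $u = x$, $dv = x\,\e^{-x^{2}/2}\,dx$, which yields $t\e^{-t^{2}/2} + \sqrt{\pi/2}\,\erfc(t/\sqrt{2})$. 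Substituting the three pieces into the expansion and collecting the $\erfc$ and $\e^{-t^{2}/2}$ contributions produces exactly $(t^{2}+1)\,\erfc(t/\sqrt{2}) - \sqrt{2/\pi}\,t\,\e^{-t^{2}/2}$.

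For part 2, rather than differentiating the closed form from part 1 and re-simplifying (which works but is tedious due to cancellations involving $\erfc'(z) = -\tfrac{2}{\sqrt{\pi}}\e^{-z^{2}}$), I would differentiate under the expectation. Since $\frac{d}{dt}(|h|-t)_{+}^{2} = -2(|h|-t)_{+}$ almost everywhere and the integrand has a uniform integrable bound, we get $\theta'(t) = -2\,\E(\abs{h}-t)_+$. Then
\[
\theta(t) - \tfrac{t}{2}\theta'(t) = \E\bigl[(\abs{h}-t)_{+}^{2} + t(\abs{h}-t)_{+}\bigr] = \E\bigl[\abs{h}\,(\abs{h}-t)_{+}\bigr].
\]
A direct computation of this last expectation, $\sqrt{2/\pi}\int_{t}^{\infty} x(x-t)\e^{-x^{2}/2}\,dx$, reuses the same two integrals as above, and the $t\,\e^{-t^{2}/2}$ terms cancel, leaving exactly $\erfc(t/\sqrt{2})$ as claimed.

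I expect no serious obstacle: the main pitfall is simply keeping track of the prefactor $\sqrt{2/\pi}$ versus $\sqrt{\pi/2}$ when combining the $\erfc$ identity with the absolute-value density, and justifying differentiation under the integral for part 2 (straightforward via dominated convergence since $(|h|-t)_{+} \leq |h|$, which is integrable). The payoff of the expectation-based route in part 2 is that it avoids the cancellations that occur when one differentiates the explicit formula in part 1.
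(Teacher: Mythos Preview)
Your proof is correct. Part 1 is essentially identical to the paper's argument: the paper writes $\theta(t) = 2\int_t^\infty (\varphi^2 - 2t\varphi + t^2)\,p_h(\varphi)\,d\varphi$ with $p_h$ the standard normal density, which is exactly your integral with the $|h|$-density, and then evaluates the same three Gaussian integrals by the same methods (integration by parts for the $x^2$ term, direct antiderivative for the $x$ term, definition of $\erfc$ for the constant term).

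Part 2 is where you diverge. The paper simply differentiates the closed form from part 1 term by term, using $\tfrac{d}{dt}\erfc(t/\sqrt{2}) = -\sqrt{2/\pi}\,\e^{-t^2/2}$, and observes that the combination $\theta(t) - (t/2)\theta'(t)$ collapses to $\erfc(t/\sqrt{2})$. You instead differentiate under the expectation to get $\theta'(t) = -2\,\E(|h|-t)_+$, then rewrite $\theta(t) - (t/2)\theta'(t) = \E[|h|\,(|h|-t)_+]$ and compute that integral directly. Both routes are short; the paper's is slightly more mechanical once part 1 is in hand, while yours makes the cancellation of the $t\,\e^{-t^2/2}$ terms transparent and has the mild bonus of exhibiting the identity $\theta(t) - (t/2)\theta'(t) = \E[|h|(|h|-t)_+]$, at the cost of a (routine) dominated-convergence justification.
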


\begin{proof}
$\displaystyle\begin{aligned}[t]
\theta(t) 
= \E (\abs{h} - t)_+^2 
= \int_{-\infty}^\infty (\abs{\varphi} - t)_+^2 p_h(\varphi) \ \di \varphi 
= 2 \int_t^\infty (\varphi^2 - 2 t \varphi + t^2) p_h(\varphi) \ \di \varphi.
\end{aligned}$\\
Now we compute as follows:
\begin{itemize}
    \item 
    
    $\displaystyle\begin{aligned}[t]
    \int_t^\infty \varphi^2 p_h(\varphi) \ \di \varphi 
    \ = \ \tfrac{1}{\sqrt{2\pi}} \int_t^\infty \varphi^2 \e^{-\frac{\varphi^2}{2}} \ \di \varphi 
    \ = \ & \tfrac{1}{\sqrt{2\pi}}  \int_t^\infty (-\varphi) \frac{\di}{\di \varphi} \left\{ \e^{-\frac{\varphi^2}{2}} \right\} \ \di \varphi \\
    \ = \ & \tfrac{1}{\sqrt{2\pi}} \left( \left[-\varphi \e^{-\frac{\varphi^2}{2}} \right]_t^{\infty} + \int_t^\infty \e^{-\frac{\varphi^2}{2}} \ \di \varphi \right) \\
    \ \stackrel{(a)}{=} \ & \tfrac{1}{2} \erfc\left(\tfrac{t}{\sqrt{2}}\right)+\tfrac{\e^{-\frac{t^2}{2}} t}{\sqrt{2 \pi }},
    \end{aligned}$\\
    where in $(a)$ we used Lemma \ref{le:erfc},
    \item
    $\displaystyle\begin{aligned}[t]
        \int_t^\infty \varphi p_h(\varphi) \ \di \varphi = \tfrac{1}{\sqrt{2\pi}} \int_t^\infty \frac{\di}{\di \varphi} \left\{ -\e^{-\frac{\varphi^2}{2}} \right\} \ \di \varphi = \tfrac{1}{\sqrt{2\pi}} \left[-\e^{-\frac{\varphi^2}{2}} \right]_t^{\infty} = \tfrac{\e^{-\frac{t^2}{2}}}{\sqrt{2 \pi }},
    \end{aligned}$
    \item 
    $\displaystyle\begin{aligned}[t]
    \int_t^\infty p_h(\varphi) \ \di \varphi = \tfrac{1}{2} \erfc \left( \tfrac{t}{\sqrt{2}} \right)
    \end{aligned}$ again by Lemma \ref{le:erfc}.
\end{itemize}
Putting together the above we get the desired expression. The second statement follows from the fact that $\frac{\di}{\di t} \erfc \left( \frac{t}{\sqrt{2}} \right) = -\sqrt{\frac{2}{\pi }}\e^{-\frac{t^2}{2}}$.
\end{proof}

\subsection{Convex Gaussian Min-Max Theorem (CGMT) \cite[Theorem 6.1]{Thrampoulidis:2016vo}}

\label{append:CGMT}

Let the principal optimization (PO) and auxiliary optimization (AO) be defined as 

\begin{align}
    \Phi(\mG) & \ \bydef \ \min_{\vv \in \setS_\vv} \max_{\vu \in \setS_\vu} \ \vu^\T \mG \vw + \psi(\vv,\vu)     \label{eq:po}
\\
    \phi(\vg, \vh) & \ \bydef \ \min_{\vv \in \setS_\vv} \max_{\vu \in \setS_\vu} \ \norm{\vv}_2 \vg^\T \vu + \norm{\vu}_2 \vh^\T \vv + \psi(\vv, \vu),     \label{eq:ao}
\end{align}
with $\mG \in \R^{m \times n}$, $\vg \in \R^m$, $\vh \in \R^n$, $\setS_\vv \subset
\R^n$, $\setS_\vu \subset \R^m$ and $\psi : \R^n \times \R^m \to \R$. Let further
$\vv_\Phi \bydef \vv_\Phi(\mG)$ denote any optimal minimizer of (PO) and $\vv_\phi \bydef \vv_\phi(\vg,\vh)$ any optimal minimizer of (AO). Then the following holds:

\begin{theorem}
    \label{thm:cgmt}

    In \eqref{eq:po} and \eqref{eq:ao}, let $\setS_\vv$ and $S_\vu$ be compact and $\psi$
    continuous on $\setS_\vv \times \setS_\vu$. Let also $\mG$, $\vg$, $\vh$ have
    entries that are iid standard normal. The following hold:

    \begin{enumerate}
        \item \label{thmpart:1} For all $c \in \R$
        \begin{equation*}
            \prob[  \Phi(\mG) < c ] \leq 2\prob[ \phi(\vg, \vh) \leq c ]
        \end{equation*}
        
        \item \label{thmpart:cgmt_optval_concentrate} Assume further that $\psi(\vv,\vu)$ is convex-concave on $\setS_\vv \times \setS_\vu$ where $\setS_{\vv}$ and $\setS_{\vu}$ are convex. Then for all $c \in \R$
        \begin{equation*}
            \prob[ \Phi(\mG) > c] \leq 2 \prob[\phi(\vg, \vh) \geq c].
        \end{equation*}
        In particular, for all $\mu \in \R$ and $t > 0$,
        \begin{equation*}
            \prob[\abs{\Phi(\mG) - \mu} > t] \leq 2\prob[\abs{\phi(\vg, \vh) - \mu} \geq t].
        \end{equation*}
    \end{enumerate}
\end{theorem}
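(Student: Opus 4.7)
My plan is to prove Theorem~\ref{thm:cgmt} by applying Gordon's 1985 Gaussian comparison inequality to a carefully chosen pair of Gaussian processes, following the strategy developed by Thrampoulidis, Oymak and Hassibi. The key pair is
\[
X_{\vu,\vv} \bydef \vu^\T \mG \vv, \qquad Y_{\vu,\vv} \bydef \norm{\vv}_2 \vg^\T \vu + \norm{\vu}_2 \vh^\T \vv,
\]
so that the (PO) and (AO) objectives are $X_{\vu,\vv}+\psi(\vv,\vu)$ and $Y_{\vu,\vv}+\psi(\vv,\vu)$, respectively. The first step would be to verify the increment conditions required by Gordon's inequality. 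Using that the entries of $\mG$ are iid standard normal one gets $\E(X_{\vu,\vv}-X_{\vu,\vv'})^2 = \norm{\vu}^2\norm{\vv-\vv'}^2$, while expanding $Y$ yields $\E(Y_{\vu,\vv}-Y_{\vu,\vv'})^2 = \norm{\vu}^2\bigl[\norm{\vv-\vv'}^2+(\norm{\vv}-\norm{\vv'})^2\bigr]$, so the $Y$-increments dominate when $\vu$ is held fixed; a symmetric computation shows the reverse domination when $\vv$ is held fixed and $\vu$ varies. These are precisely the hypotheses of Gordon's Slepian-type minimax comparison, and adding the deterministic $\psi(\vv,\vu)$ to both objectives leaves the covariance structure, and hence the comparison, untouched.

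For Part~\ref{thmpart:1}, I would invoke Gordon's probabilistic comparison directly: it asserts $\prob[\min_\vv\max_\vu(X+\psi)\geq c]\geq \prob[\min_\vv\max_\vu(Y+\psi)\geq c]$, which is exactly $\prob[\Phi(\mG)\geq c]\geq \prob[\phi(\vg,\vh)\geq c]$. Taking complements gives $\prob[\Phi(\mG)<c]\leq \prob[\phi(\vg,\vh)<c] \leq 2\prob[\phi(\vg,\vh)\leq c]$; the factor~$2$ absorbs any atom of the distribution of $\phi$ at $c$ and is only genuinely needed when the distribution function is not continuous.

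The hard part, Part~\ref{thmpart:cgmt_optval_concentrate}, is where the convex-concave hypothesis enters essentially. Under convexity of $\setS_\vv,\setS_\vu$ and convex-concavity of $\psi$, Sion's minimax theorem gives
\[
\min_\vv \max_\vu \bigl(X_{\vu,\vv}+\psi(\vv,\vu)\bigr) = \max_\vu \min_\vv \bigl(X_{\vu,\vv}+\psi(\vv,\vu)\bigr),
\]
and the analogous identity for $Y$. Once the outer operation is a maximum rather than a minimum, the favourable and unfavourable directions in Gordon's comparison are interchanged, and applying the inequality in max-min form (valid because $X$ and $Y$ satisfy the increment conditions symmetrically in the two arguments) yields the reverse probability inequality $\prob[\Phi(\mG)>c]\leq 2\prob[\phi(\vg,\vh)\geq c]$. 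The two-sided bound $\prob[\abs{\Phi(\mG)-\mu}>t]\leq 2\prob[\abs{\phi(\vg,\vh)-\mu}\geq t]$ then follows by a union bound over $\{\Phi<\mu-t\}$ and $\{\Phi>\mu+t\}$ combined with the two one-sided inequalities.

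The principal obstacle I foresee is the reversed direction: without the convex-concave assumption it is easy to produce examples in which $\min\max$ is strictly greater than $\max\min$, and Gordon's inequality controls only one side of the comparison. Sion's theorem, and therefore the convex-concave hypothesis together with convexity and compactness of the constraint sets, is exactly what is needed to bridge this gap. A secondary technical nuisance will be the bookkeeping around the factor~$2$: tracking strict versus non-strict inequalities ($<$ versus $\leq$) and handling the possibility of atoms in the Gaussian suprema takes some care, but does not change the substance of the argument.
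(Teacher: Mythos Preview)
The paper does not actually prove this theorem: it is quoted verbatim from \cite[Theorem~6.1]{Thrampoulidis:2016vo} ``for readers' convenience'' and used as a black box. So there is no proof in the paper to compare against; the relevant question is whether your sketch would reproduce the Thrampoulidis--Oymak--Hassibi argument.

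Your overall architecture (Gordon's Gaussian minimax comparison for Part~\ref{thmpart:1}, then Sion's minimax theorem to swap $\min$ and $\max$ and reapply Gordon with the roles of $\vu,\vv$ reversed for Part~\ref{thmpart:cgmt_optval_concentrate}) is exactly the right one. But there is a genuine gap in Part~\ref{thmpart:1}: Gordon's inequality requires the two processes to have \emph{equal} second moments, $\E X_{\vu,\vv}^2 = \E Y_{\vu,\vv}^2$. For your pair this fails: $\E X_{\vu,\vv}^2 = \norm{\vu}^2\norm{\vv}^2$ while $\E Y_{\vu,\vv}^2 = 2\norm{\vu}^2\norm{\vv}^2$. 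Hence Gordon does not apply to $X$ and $Y$ directly, and the inequality $\prob[\Phi(\mG)<c]\leq \prob[\phi(\vg,\vh)<c]$ that you assert is not justified.

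The standard fix (and the true origin of the factor~$2$, which is \emph{not} about atoms) is to introduce an independent scalar Gaussian $g_0\sim\mathcal{N}(0,1)$ and compare $\tilde X_{\vu,\vv}\bydef \vu^\T\mG\vv + g_0\norm{\vu}\norm{\vv}$ with $Y_{\vu,\vv}$. Now $\E\tilde X_{\vu,\vv}^2 = 2\norm{\vu}^2\norm{\vv}^2 = \E Y_{\vu,\vv}^2$, and the cross-covariance conditions of Gordon's lemma are easily checked, giving $\prob[\tilde\Phi<c]\leq\prob[\phi\leq c]$ where $\tilde\Phi$ is the min-max of $\tilde X+\psi$. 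One then uses that on the event $\{g_0\leq 0\}$ the added term $g_0\norm{\vu}\norm{\vv}$ is pointwise nonpositive, so $\tilde\Phi\leq\Phi$; independence of $g_0$ and $\mG$ gives $\prob[\Phi<c] = 2\prob[\Phi<c,\ g_0\leq 0]\leq 2\prob[\tilde\Phi<c,\ g_0\leq 0]\leq 2\prob[\phi\leq c]$. The factor~$2$ is thus the reciprocal of $\prob[g_0\leq 0]$ and is essential, not cosmetic. Once you have this ingredient, your Sion-based reversal for Part~\ref{thmpart:cgmt_optval_concentrate} goes through exactly as you describe.
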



\section{Lemmata for Section \ref{sub:stability}}
\label{appendix:stability}

\begin{lemma}[Rewriting of {\cite[Lemma 9.2]{Oymak:2013vm}} with dependence on $\epsilon$ made explicit]
    \label{lem:lasso_satisfies_equality}
    Let $\mA \in \R^{m \times n}$ be a random matrix with iid standard normal
    entries, and $m < n$. Let further $\vy \in \R^m$. Consider the solution of an
    $\ell^2$-lasso with a regularizer $f$ which is $L$-Lipschitz with respect to the $\ell^{2}$-norm:

    \begin{equation*}
        \vx^\star \bydef \argmin_{\vx \in \R^n} \norm{\vy - \mA \vx} + \lambda f(\vx).
    \end{equation*}
    Then for any $0 \leq \epsilon < 1$ and $0 < \lambda < \frac{\sqrt{n} - \sqrt{m}}{L}(1 - \epsilon)$ the
    solution $\vx^\star$ satisfies $\vy = \mA \vx^\star$ with
    probability at least $1 - \e^{-\epsilon^2 (\sqrt{n} - \sqrt{m})^2 / 2}$. In other words,
    $\ell^2$-lasso (let us emphasize that the data fidelity term is \emph{not} squared here) gives the same result as the equality-constrained
    $f$-minimization.
\end{lemma}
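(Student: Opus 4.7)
The plan is to analyze the KKT conditions for the unconstrained problem and show that, as long as $\mA$ has a sufficiently large smallest singular value, these conditions force the residual $\mA\vx^\star - \vy$ to be zero. The probabilistic content will then reduce to a standard Gaussian concentration bound for $\sigma_{\min}(\mA) = \sigma_m(\mA)$.

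First I would set up the optimality conditions. The objective $\vx \mapsto \norm{\vy - \mA\vx}_2 + \lambda f(\vx)$ is convex, so $\vx^\star$ is optimal iff $\vzero$ lies in its subdifferential. When $\vr^\star \bydef \mA\vx^\star - \vy \neq \vzero$, the residual norm is differentiable and its gradient in $\vx$ is $\mA^\T \vr^\star / \norm{\vr^\star}$, so optimality forces
\begin{equation*}
    \mA^\T \vu^\star + \lambda \vg = \vzero, \qquad \vu^\star \bydef \tfrac{\vr^\star}{\norm{\vr^\star}},\ \vg \in \partial f(\vx^\star).
\end{equation*}
Since $f$ is $L$-Lipschitz with respect to the $\ell^2$ norm, every subgradient of $f$ has $\ell^2$ norm at most $L$, hence $\norm{\mA^\T \vu^\star}_2 = \lambda \norm{\vg}_2 \leq \lambda L$. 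On the other hand $\vu^\star$ is a unit vector in $\R^m$, so $\norm{\mA^\T \vu^\star}_2 \geq \sigma_m(\mA)$. Therefore, on the event $\{\sigma_m(\mA) > \lambda L\}$, the supposition $\vr^\star \neq \vzero$ leads to a contradiction, and we must have $\mA\vx^\star = \vy$.

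It remains to estimate the probability of the favorable event. I would invoke the two standard ingredients for the smallest singular value of a tall/wide Gaussian matrix: (i) the Gordon–Davidson–Szarek lower bound $\E\,\sigma_m(\mA) \geq \sqrt{n} - \sqrt{m}$, valid for an $m \times n$ iid standard Gaussian matrix with $m \leq n$, and (ii) the fact that $\mA \mapsto \sigma_m(\mA)$ is $1$-Lipschitz in Frobenius norm, so Gaussian concentration (as in Lemma~\ref{lem:concentration_results}\ref{lemitem:lipgaussingle}) gives
\begin{equation*}
    \prob\bigl[\sigma_m(\mA) \leq \sqrt{n} - \sqrt{m} - t\bigr] \leq \e^{-t^2/2}, \qquad t > 0.
\end{equation*}
Setting $t = \epsilon(\sqrt{n} - \sqrt{m})$ and using the hypothesis $\lambda L < (1-\epsilon)(\sqrt{n} - \sqrt{m})$, we obtain $\sigma_m(\mA) > \lambda L$ except on an event of probability at most $\e^{-\epsilon^2(\sqrt{n} - \sqrt{m})^2/2}$, which yields the claim.

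There is no serious obstacle; the one detail worth handling carefully is the case $\vr^\star = \vzero$, which is exactly the conclusion we want, and the slight asymmetry between the statement's bound and the one quoted above (the statement has a factor $2$ difference only if one were to weaken the Gordon bound via a union over signs, but the single-sided Gaussian tail suffices here). The $L$-Lipschitz hypothesis on $f$ is used only to upper bound $\norm{\vg}_2$; for $f = \norm{\cdot}_p$ one recovers $L = n^{\max(1/p - 1/2, 0)}$ as stated when applying the lemma.
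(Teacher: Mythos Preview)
Your argument is correct and reaches the same conclusion via the same probabilistic ingredient (the Gordon lower bound $\E\,\sigma_m(\mA)\geq\sqrt{n}-\sqrt{m}$ plus Gaussian Lipschitz concentration), but the deterministic core differs from the paper's. The paper does not pass through subdifferentials: it constructs an explicit competitor $\vx^\circ=\vx^\star+\mA^\dagger(\vy-\mA\vx^\star)$ with exact residual zero, bounds $\norm{\vx^\circ-\vx^\star}\leq\norm{\vy-\mA\vx^\star}/\sigma_m(\mA)$, and then uses only the Lipschitz property of $f$ to compare objective values. Your KKT route is shorter and arguably cleaner, but it tacitly assumes $f$ is convex (so that the subdifferential calculus and the bound $\norm{\vg}_2\leq L$ on subgradients apply); the paper's competitor argument uses only the Lipschitz property and the fact that $\vx^\star$ is a global minimizer, so it goes through verbatim even without convexity of $f$. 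In the application ($f=\norm{\cdot}_p$) this distinction is moot, but it is worth noting as a small generality gap.
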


\begin{proof}
    Using \cite[Corollary 5.35]{Vershynin:2009fv}\footnote{We actually use a one-sided variant of \cite[Corollary 5.35]{Vershynin:2009fv} which can be obtained by combining Lemma \ref{lem:concentration_results}\ref{lemitem:lipgaussingle} with the estimate of the expectation of $\sigma_{\text{min}}$, \cite[Theorem 5.32]{Vershynin:2009fv}.}, we have for every $t \geq 0$
    \[
        \prob[\sigma_{\text{min}}(\mA^\T) \leq \sqrt{n} - \sqrt{m} - t] \leq \e^{-t^2 / 2}.
    \]
    Dividing by $\sqrt{n} - \sqrt{m}$, we can rewrite this as

    \[
        \prob[\sigma_{\text{min}}(\mA^\T) / (\sqrt{n} - \sqrt{m}) \leq 1 - t / (\sqrt{n} - \sqrt{m})] \leq \e^{-t^2 / 2},
    \]
    or equivalently, by setting $t = \epsilon(\sqrt{n} - \sqrt{m})$ for any $\epsilon \geq 0$,

    \begin{equation}{}
        \label{eq:concentration_of_sigma_min}
        \prob[\sigma_{\text{min}}(\mA^\T) / (\sqrt{n} - \sqrt{m}) \leq 1 - \epsilon] \leq \e^{-\epsilon^2(\sqrt{n} - \sqrt{m})^{2}/ 2}.
    \end{equation}

    Let $\vp \bydef \vy - \mA \vx^\star$. The goal is to show that $\vp =
    \vzero$. To this end, let $\vw \bydef \mA^\dag \vp$, where $\mA^\dag =
    \mA^\T (\mA \mA^\T)^{-1}$ denotes the MPP (note that $\mA \mA^\T$ is
    almost surely invertible). Because
    \[
        \norm{\vw}^2 = \vp^\T (\mA \mA^\T)^{-1} \vp \leq \frac{\norm{\vp}^2}{ \sigma_{\text{min}}^2(\mA^\T)}
    \]
    we have for $0<\epsilon<1$, by the concentration of $\sigma_{\text{min}}(\mA^\T)$
    \eqref{eq:concentration_of_sigma_min} that with probability at least $1 -
    \e^{-\epsilon^2 (\sqrt{n} - \sqrt{m})^{2} / 2}$,

    \[
        \norm{\vw} \leq \frac{\norm{\vp}}{(\sqrt{n} - \sqrt{m})(1 - \epsilon)}.
    \]
    Let $\vx^\circ \bydef \vx^\star + \vw$ so that $\vy - \mA \vx^\circ = \vzero$.
    Using this together with the optimality of $\vx^\star$ gives

    \begin{equation}
        \label{eq:lasso_is_exact_nonpositive_diff}
        \big[ \norm{\vy - \mA \vx^\star} + \lambda f(\vx^\star) \big] - \big[ \norm{\vy - \mA \vx^\circ} + \lambda f(\vx^\circ) \big] \leq 0.
    \end{equation} 
    On the other hand,

    \begin{align*}
        \big[ \norm{\vy - \mA \vx^\star} + \lambda f(\vx^\star) \big] - \big[ \norm{\vy - \mA \vx^\circ} + \lambda f(\vx^\circ) \big] 
        = \ &\norm{\vp} + \lambda f(\vx^\star) - \lambda f(\vx^\circ) \\
        \geq \ &\norm{\vp} - \lambda \abs{f(\vx^\star) - \lambda f(\vx^\circ)} \\
        \geq \ &\norm{\vp} - \lambda L \norm{\vx^\star - \vx^\circ} \\
        = \ &\norm{\vp} - \lambda L \norm{\vw} \\
        \geq \ &\norm{\vp} \left( 1 - \frac{\lambda L}{(\sqrt{n} - \sqrt{m})(1 - \epsilon))} \right),
    \end{align*}
    where, by \eqref{eq:lasso_is_exact_nonpositive_diff}, the last expression must be non-positive. But if we choose

    \[
        \lambda < \frac{\sqrt{n} - \sqrt{m}}{L} (1 - \epsilon),
    \]
    the only way to make it non-positive is that $\norm{\vp} = 0$.
\end{proof}

\begin{lemma}
    \label{lem:concentration2}
    Consider $\vg \sim \mathcal{N}(\vzero, \mI_m)$, $\vh \sim \mathcal{N}(\vzero, \mI_n)$ and define
    \begin{eqnarray*}
    \Delta_p(\beta; \vh,\lambda) 
    &\bydef&   
    \tfrac{1}{\sqrt{n}}\  \dist(\beta \vh,  \norm{\,\cdot\,}_{p^*} \leq \lambda)\\
    \Delta_p(\beta;n,\lambda) 
    &\bydef&   \E[\Delta_p(\beta;\vh,\lambda)]
     \end{eqnarray*}
  There exist universal constants $c_1, c_2 > 0$ such that for any $0 < \epsilon < 2$, any integers $m,n$, any $A>0$ we have,  with $\delta \bydef (m-1)/n$ and $\phi(\alpha,\beta;\vg,\vh)$ defined as in~\eqref{eq:DefPhi2Scalars}:
    \begin{equation}
        \prob \big\{ \exists 0<\alpha \leq A, 0<\beta<1,\ \abs{\phi(\alpha, \beta ; \vg, \vh) - \kappa(\alpha, \beta)} \geq \epsilon \big\} \leq \tfrac{c_1}{\epsilon} \e^{-c_2 \epsilon^2 n \cdot \tfrac{\min(\delta,1/A^{2})}{1+\delta A^{2}}} \bydef \zeta(n, \epsilon; A,\delta) \label{eq:DefZeta}
    \end{equation}
    where
    \begin{equation}\label{eq:DefKappa}
       \kappa(\alpha, \beta) = \kappa_{p}(\alpha,\beta; n,\delta,\lambda)
    \bydef \beta \sqrt{\delta \alpha^2 + 1} - \alpha \Delta_p(\beta;n,\lambda).
\end{equation}
\end{lemma}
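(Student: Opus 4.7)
I plan to decompose $\phi-\kappa$ into a $\vg$-dependent piece and a $\vh$-dependent piece, control each by Gaussian Lipschitz concentration (Lemma \ref{lem:concentration_results}), and then upgrade pointwise control to uniform control over $(\alpha,\beta) \in (0,A] \times (0,1)$ by an $\eta$-net argument on this compact box.

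\textbf{Decomposition and pointwise step.} Using independence of $\vg$ and $\vh$, I would rewrite
\[
 \phi - \kappa = \beta \Big( \norm{\tfrac{\alpha\vg}{\sqrt n} - \ve_1} - \sqrt{\delta\alpha^2+1} \Big) \;-\; \alpha \big( \Delta_p(\beta;\vh,\lambda) - \Delta_p(\beta;n,\lambda)\big).
\]
The first parenthesis is an $(\alpha/\sqrt n)$-Lipschitz function of $\vg$ whose mean, by the identity $\E\norm{\tfrac{\alpha\vg}{\sqrt n}-\ve_1}^2 = \alpha^2 m/n + 1$ combined with the Poincaré variance bound $\leq \alpha^2/n$ from Lemma~\ref{lem:concentration_results}\ref{lemitem:varoflip}, differs from $\sqrt{\delta\alpha^2+1}$ by at most $\alpha^2/(2n)$. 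The second parenthesis is a $(\beta/\sqrt n)$-Lipschitz function of $\vh$ (distance to a convex set is $1$-Lipschitz) with mean exactly $0$. Applying Lemma~\ref{lem:concentration_results}\ref{lemitem:lipgausabs} to each piece yields, for every $u>0$ and every fixed $(\alpha,\beta)$,
\[
 \prob\big[\abs{\phi(\alpha,\beta;\vg,\vh)-\kappa(\alpha,\beta)} \geq u + \tfrac{\alpha^2\beta}{2n}\big] \leq 4 \exp\!\big(-\tfrac{nu^2}{8\alpha^2\beta^2}\big).
\]

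\textbf{From pointwise to uniform.} To pass to the supremum, I would bound the Lipschitz constant of $(\alpha,\beta) \mapsto \phi(\alpha,\beta;\vg,\vh) - \kappa(\alpha,\beta)$ on $[0,A]\times[0,1]$. Straightforward differentiation (together with $\Delta_p(\beta;\vh,\lambda) \leq \beta\norm{\vh}/\sqrt n$ and $|\partial_\beta \Delta_p| \leq \norm{\vh}/\sqrt n$) yields
\[
 |\partial_\alpha(\phi-\kappa)| \lesssim \tfrac{\norm{\vg}+\norm{\vh}}{\sqrt n} + \sqrt{\delta}, \qquad |\partial_\beta(\phi-\kappa)| \lesssim 1 + A\Big(\tfrac{\norm{\vg}+\norm{\vh}}{\sqrt n} + \sqrt{\delta}\Big).
\]
On the event $\{\norm{\vg}^2 \leq 2m,\ \norm{\vh}^2 \leq 2n\}$, which fails with probability at most $4\e^{-n/8}$ by Lemma~\ref{lem:concentration_results}\ref{lemitem:gausnormabs}, this gives a deterministic Lipschitz constant of order $L \lesssim \sqrt{1+\delta A^2}$. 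I would then pick a grid in $[0,A]\times[0,1]$ of mesh $\eta \asymp \epsilon/L$ and cardinality $\lesssim A/\eta^2 \lesssim A(1+\delta A^2)/\epsilon^2$, apply the pointwise bound at each grid point with $u=\epsilon/4$, and take a union bound; the Lipschitz property then propagates the estimate to every $(\alpha,\beta)$ at the cost of at most $L\eta \leq \epsilon/4$.

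\textbf{Main obstacle.} The real work is not in any individual ingredient but in the constant bookkeeping needed to produce exactly $\zeta(n,\epsilon;A,\delta) = \tfrac{c_1}{\epsilon}\e^{-c_2\epsilon^2 n\,\min(\delta,1/A^2)/(1+\delta A^2)}$. The denominator $1+\delta A^2$ encodes the squared worst-case Lipschitz constant of $\phi-\kappa$ in $(\alpha,\beta)$. The numerator $\min(\delta,1/A^2)$ reflects two regimes: for large $A$ the Lipschitz-in-$(\alpha,\beta)$ rate is binding and gives the $1/A^2$ contribution, whereas for small $A$ the ambient scale of $\norm{\vg}/\sqrt n \approx \sqrt\delta$ becomes binding and gives the $\delta$ contribution. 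The polynomial net prefactor $A(1+\delta A^2)/\epsilon^2$ is absorbed into $c_1/\epsilon$ by sacrificing a constant fraction of the exponent rate (the standard $\log(\text{net size})$ trade-off), and the deterministic bias $\alpha^2\beta/(2n) \leq A^2/n$ is either folded into $\epsilon$ (when $n$ is large) or into a further mild reduction of $c_2$. Once these bookkeeping steps are carried out, the bound~\eqref{eq:DefZeta} with the stated definition~\eqref{eq:DefKappa} of $\kappa$ follows.
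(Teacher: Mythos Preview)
Your strategy is plausible but it is \emph{not} the route taken in the paper, and two of the bookkeeping claims you rely on are incorrect as stated.

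\textbf{How the paper proceeds.} The paper does \emph{not} treat the $\vg$-term by Lipschitz concentration. Instead it writes $\vg=(g_0,\wt{\vg})$ with $\wt{\vg}\in\R^{m-1}$ and
\[
\norm{\tfrac{\alpha\vg}{\sqrt n}-\ve_1}^2=\tfrac{\alpha^2}{n}\norm{\wt{\vg}}^2+\big(\tfrac{\alpha g_0}{\sqrt n}-1\big)^2.
\]
A multiplicative $\chi^2$-type bound on $\norm{\wt{\vg}}^2$ (Lemma~\ref{lem:concentration_results}\ref{lemitem:gausnormabs}) and a scalar Gaussian tail on $g_0$ then give, on a single event independent of $\alpha$, that $\norm{\alpha\vg/\sqrt n-\ve_1}$ lies in a multiplicative window around $\sqrt{\delta\alpha^2+1}$ \emph{simultaneously for all} $0<\alpha\le A$. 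The factor $\delta$ in $\min(\delta,1/A^2)$ is the dimension $m-1=n\delta$ of $\wt{\vg}$; the factor $1/A^2$ is the rate in $\prob[|g_0|>\epsilon'\sqrt n/A]$. No net in $\alpha$ and no bias term are needed. Only a one-dimensional net in $\beta$ is used (via Lemma~\ref{lem:sqrtn_lipschitz}) for the $\vh$-term, and this is the sole source of the $c_1/\epsilon$ prefactor. The final $1/(1+\delta A^2)$ comes from converting the multiplicative window to an additive one.

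\textbf{Where your account breaks.} First, your explanation of $\min(\delta,1/A^2)$ is not consistent with your own setup: in the Lipschitz framework the pointwise exponent is $n u^2/(8\alpha^2\beta^2)\ge n u^2/(8A^2)$, which produces only $1/A^2$; the $\delta$ does \emph{not} enter through ``$\norm{\vg}/\sqrt n\approx\sqrt\delta$'' in any way that affects the concentration rate. Second, your claimed Lipschitz constant $L\lesssim\sqrt{1+\delta A^2}$ contradicts your own derivative estimate: on the good event $\norm{\vh}/\sqrt n\lesssim 1$, hence $|\partial_\beta(\phi-\kappa)|\lesssim 1+A$, not $1+A\sqrt\delta$. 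With the correct $L\asymp 1+A$ the two-dimensional net has cardinality $\asymp A(1+A)^2/\epsilon^2$, and the ``standard trade-off'' you invoke to reduce this to $c_1/\epsilon$ does not go through uniformly in $A$ with the exponent you have. Your approach may be salvageable (the pointwise exponent $n\epsilon^2/A^2$ dominates the target exponent, and the regimes where the prefactor or the bias $A^2/n$ are problematic can be shown to make the claimed bound trivially $\ge 1$), but the argument needed is more delicate than what you sketched, and the paper's decomposition of $\vg$ sidesteps all of it.
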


\begin{proof}
    We first look at the term $\norm{\alpha \vg / \sqrt{n} - \ve_1}$. 
    We have
    \[
        \norm{\alpha \vg / \sqrt{n} - \ve_1}^2 = \norm{\alpha \wt{\vg} / \sqrt{n}}^2 + (\alpha g_0 / \sqrt{n} - 1)^2,
    \]
    where we partitioned $\vg$ as $[g_0, \ \wt{\vg}^\T]^\T$, $\wt{\vg} \in \mathbb{R}^{m-1}$. 
    By Lemma \ref{lem:concentration_results}\ref{lemitem:gausnormabs}, for a standard Gaussian random vector $\vx \sim \mathcal{N}(\vzero,\mI_{n})$ it holds that for any $0<\epsilon<1$
    \begin{equation}\label{eq:stdGaussianConcentration}
        \prob\left\{\norm{x}^2 \notin [(1 - \epsilon)n, n/(1 - \epsilon)]\right\} \leq 2\e^{-\tfrac{\epsilon^2 n}{4}}.
    \end{equation}
    Denoting $\delta \bydef \frac{m-1}{n}$, it follows that
    \[
        \prob \left[  \exists 0<\alpha \leq A,\  \  \norm{\alpha \wt{\vg} / \sqrt{n}}^2 \notin [(1 - \epsilon) \delta \alpha^2, \delta \alpha^2/(1 - \epsilon)]\right] \leq 2 \e^{-\tfrac{\epsilon^2 (m-1)}{4}} = 2 \e^{-\tfrac{\epsilon^{2}n\delta}{4}}.
    \]
    Next, we show that the term $(\alpha g_0 / \sqrt{n} - 1)^2$ cannot deviate much from 1: setting $\epsilon' = \epsilon/2$,
    we have $\sqrt{1-\epsilon} < 1-\epsilon'$ and $1+\epsilon' < 1/\sqrt{1-\epsilon}$ hence
    \begin{align*}
        \prob \left\{ \forall 0<\alpha \leq A,\  \left(\tfrac{\alpha g_0}{\sqrt{n}} - 1\right)^2 \in [1 - \epsilon, 1/(1 - \epsilon)]\right\} 
        \geq \
        &\prob \left\{\forall 0<\alpha \leq A,\  \abs{\tfrac{\alpha g_0}{\sqrt{n}} - 1} \in [1 - \epsilon', 1 + \epsilon']\right\} \\
        \geq \ 
        &
        \prob \left\{\forall 0<\alpha \leq A,\ \alpha g_0 / \sqrt{n} \in [- \epsilon',\epsilon']\right\}\\
        = \ &  \prob \left\{A g_0 / \sqrt{n} \ \in \ [- \epsilon',\epsilon']\right\}
    \end{align*}
   so that using Lemma~\ref{le:erfc} we have, with $\erfc$ the complementary error function,    
   \begin{align*}
        \prob \left\{ \exists 0<\alpha \leq A,\ \ \left(\tfrac{\alpha g_0}{\sqrt{n}} - 1\right)^2 \notin [1 - \epsilon, 1/(1 - \epsilon)]\right\}
 &       \leq 
        \prob( |g_0| > \tfrac{\epsilon'\sqrt{n}}{A}) =
    \erfc \left[ \tfrac{\epsilon' \sqrt{n}}{\sqrt{2} A} \right] \\
&        \leq \e^{-\tfrac{\epsilon'^2 n}{2 A^2}}
        = \e^{-\tfrac{\epsilon^2 n}{8 A^2}}.
    \end{align*}
    Combining the above, we get that: for any $0<\epsilon<1$, setting $\epsilon' = 1-(1-\epsilon)^{2} = \epsilon(2-\epsilon) \geq \epsilon$,
    \begin{align}
        \prob & \left\{  \exists 0<\alpha \leq A,\  \ \norm{\alpha \vg / \sqrt{n} - \ve_1} \notin [(1-\epsilon)\sqrt{\delta \alpha^2 + 1}, \sqrt{\delta \alpha^2 + 1}/(1-\epsilon)] \right\}\notag\\
        & = \prob \left\{  \exists 0<\alpha \leq A,\  \ \norm{\alpha \vg / \sqrt{n} - \ve_1}^2 \notin [(1-\epsilon')(\delta \alpha^2 + 1), (\delta \alpha^2 + 1)/(1-\epsilon')] \right\}\notag\\
        & \leq \ 2\e^{-\tfrac{\epsilon'^2 n \delta}{4}} + \e^{-\tfrac{\epsilon'^2 n}{8A^2}} 
         \leq \ c'_1 \e^{-c'_2 \epsilon'^2 n \cdot \min(\delta,1/A^{2})} \leq c'_{1} \e^{-c'_{2}\epsilon^{2}n \cdot \min(\delta,1/A^{2})},\label{eq:concentrationlemmaterm1}
    \end{align}
    where the constants $c'_{1},c'_{2}$ are universal.  
    
    For the second term in $\phi(\alpha, \beta; \vg, \vh)$, we use Gaussian
    concentration of Lipschitz functions. It is known that (Euclidean) distance to a convex set, $\dist(\vx, \cal C)$, is 1-Lipschitz in $\vx$ (with respect to the Euclidean metric) so by Lemma \ref{lem:concentration_results}\ref{lemitem:lipgausabs} we get for $0<\beta \leq 1$ (we omit the dependency in $\lambda$ for brevity) and any $t \geq 0$
    \[
        \prob \left[ \tfrac{\sqrt{n}}{\beta} \abs{ \Delta_p(\beta; \vh) - \Delta_p(\beta)} > t \right]
        \leq
        2 \e^{- \tfrac{t^2}{2}}, 
    \]
    Setting $t=\tau \sqrt{n}/\beta$ we obtain for any $\tau \geq 0$
        \begin{equation}
        \prob \left[ \abs{ \Delta_p(\beta; \vh) - \Delta_p(\beta)} > \tau \right]
        \leq 2 \e^{- \tfrac{\tau^2 n}{2\beta^{2}}} \leq 2 \e^{- \tfrac{\tau^2 n}{2}},\label{eq:BoundFh}
    \end{equation}
    This obviously extends to $\beta = 0$ since $\Delta_{p}(0;\vh) = \Delta_{p}(0) = 0$.

    Next we want to bound 
    \[ 
        \prob \left[ \exists \beta \in [0, 1], \ \abs{ \Delta_p(\beta; \vh) - \Delta_p(\beta)} > \tau \right]
        =
        \prob \left[ \sup_{\beta \in [0, 1]} \abs{ \Delta_p(\beta; \vh) - \Delta_p(\beta)} > \tau \right].
    \] 

    By Lemma \ref{lem:sqrtn_lipschitz}, 
    the function $\beta \mapsto f_{\vh}(\beta) \bydef \abs{
    \Delta_p(\beta; \vh) - \Delta_p(\beta)}$ is $L_{\vh}$-Lipschitz in $\beta$ with $L_{\vh} \bydef \max \set{\norm{\vh}/\sqrt{n}, 1}$. Hence $f_{\vh}$ is continuous, and its supremum on the closed interval $[0,\ 1]$ is indeed a maximum reached at some maximizer $\beta_{\vh}$. 
    
    Let $b \leq 1/2$ and $Y_{b} = \set{b\tau/2,3b\tau/2,\ldots,(n-1/2)b\tau}$ be a uniform sampling of $[0,
    1]$ with spacing $b \tau$, with the last segment possibly being
    shorter. For a given $\vh$, there exists $y_\vh \in Y_{b}$ such that $\abs{\beta_{\vh} - y_{\vh}} \leq b \tau$. 
    For this $y_{\vh}$ we write
    \[
        f_{\vh}(\beta_{\vh}) - f_{\vh}(y_{\vh}) \leq \abs{ f_{\vh}(\beta_{\vh}) - f_{\vh}(y_{\vh}) } 
        \leq L_{\vh} \abs{\beta_{\vh} - y_{\vh}} \leq L_{\vh} b \tau
    \]
    so that 
    \begin{eqnarray*}
          \prob[ \sup_{\beta \in [0, 1]} f_{\vh}(\beta) > \tau] 
          &=&      \prob[ f_{\vh}(\beta_\vh) > \tau ] 
          \ \leq\  \prob[ f_{\vh}(y_\vh) + L_{\vh} b\tau > \tau]
          \ \leq\  \prob\left[ f_{\vh}(y_\vh) > \tau/2 \bigvee L_{\vh} b\tau > \tau/2\right]\\
          &\leq& \prob[ f_{\vh}(y_\vh) > \tau/2] + \prob[\max \set{\norm{\vh}/\sqrt{n},1} b > 1/2]
      \end{eqnarray*}
      As we do not know a priori to which $y \in Y_{b}$ the maximizer $\beta_{\vh}$ will be close, we continue with a union bound, and we further use that $b \leq 1/2$ to obtain by~\eqref{eq:BoundFh} and Lemma \ref{lem:concentration_results}\ref{lemitem:gausnormsingle}
          \begin{eqnarray*}
            \prob[ \sup_{\beta \in [0, 1]} f_{\vh}(\beta) > \tau] 
      & \leq & \prob\Big[ \bigvee_{y \in Y_{b}} f_{\vh}(y) > \tau/2\Big] + \prob\Big[\norm{\vh}^{2} > \tfrac{n}{4b^{2}}\Big]
         \leq   \sharp Y_{b} \  2 \e^{-\tfrac{\tau^2 n}{8}} + \e^{-\tfrac{(1-4b^{2})^{2}n}{4}}\\
        & \leq & (1+1/(b\tau)) 2 \e^{-\tfrac{\tau^2 n}{8}} + \e^{-\tfrac{(1-4b^{2})^{2}n}{4}}.
    \end{eqnarray*}
    We conclude the argument for the second term by choosing
    $b \bydef \tfrac{1}{2}\sqrt{\tfrac{2 - \sqrt{2}}{2}} \approx 0.27$, so that: for any $0<\tau<1$, we have $\tfrac{(1-4b^{2})^2}{4} = \tfrac{1}{8} \geq \tfrac{\tau^2}{8}$  hence
    \begin{equation}
       \label{eq:concentrationlemmaterm2}
      \prob[ \sup_{\beta \in [0, 1]} f_{\vh}(\beta) > \tau] 
      \leq
      (3+2/(b\tau)) \e^{-\tfrac{\tau^2 n}{8}}
       <  \tfrac{11}{\tau} \e^{-\tfrac{\tau^2 n}{8}}, ~ 0 < \tau < 1.
    \end{equation}

    To conclude we combine concentration bounds for both terms.
    First, we observe that
    \[
    \sup_{\substack{0 \leq \alpha \leq A\\ 0 \leq \beta \leq 1}} \abs{\phi(\alpha,\beta;\vg,\vh)-\kappa(\alpha,\beta)}
    \leq 
    \sup_{0 \leq \alpha \leq A} \abs{\norm{\frac{\alpha \vg}{\sqrt{n}}-\ve_{1}}-\sqrt{1+\delta\alpha^{2}}}
    +
    A \sup_{0 \leq \beta \leq 1} \abs{\Delta_{p}(\beta;\vh)-\Delta_{p}(\beta)}
    \]
    hence by a union bound we just need to control the probability that each term exceeds $\epsilon/2$.
    Since we assume that $0<\epsilon<2$, we can use the multiplicative control~\eqref{eq:concentrationlemmaterm1} of the first term and convert it into the desired additive bound as follows:
    \begin{equation*}
    \begin{aligned}
      \prob \left\{  \exists \alpha \in (0,A], \ \abs{\norm{\tfrac{\alpha \vg}{\sqrt{n}} - \ve_1} - \sqrt{\delta \alpha^2 + 1}}  > \tfrac{\epsilon}{2} \right\}
      & =   \prob \left\{  \exists \alpha \in (0,A],\  \ \abs{\tfrac{\norm{\tfrac{\alpha \vg}{\sqrt{n}} - \ve_1}}{\sqrt{\delta \alpha^2 + 1}}-1} >\tfrac{\epsilon}{2\sqrt{\delta \alpha^2 + 1}}] \right\}\\
      & \leq   \prob \left\{  \exists \alpha \in (0,A],\  \ \abs{\tfrac{\norm{\tfrac{\alpha \vg}{\sqrt{n}} - \ve_1}}{\sqrt{\delta \alpha^2 + 1}}-1} >
      \tfrac{\epsilon}{2\sqrt{\delta A^2 + 1}} \right\}\\
      & \leq   \prob \left\{ \exists \alpha \in (0,A],\  \ \tfrac{\norm{\alpha \vg / \sqrt{n} - \ve_1}}{\sqrt{\delta \alpha^2 + 1}} \notin   [1-\epsilon',\tfrac{1}{1-\epsilon'}] \right\}\\
      &\leq c'_1 \e^{-c'_2 \epsilon'^2 n \cdot \min(\delta,1/A^{2})}.
    \end{aligned}
    \end{equation*}
    provided that $[1-\epsilon',1/(1-\epsilon')] \subset [1-\tfrac{\epsilon}{2\sqrt{\delta A^2 + 1}},1+\tfrac{\epsilon}{2\sqrt{\delta A^2 + 1}}]$. This is achieved with $\epsilon' = \epsilon/(1+\sqrt{\delta A^{2}+1})$. Combining the resulting probability bound with the bound on $\prob[ \sup_{\beta \in [0, 1]} f_{\vh}(\beta) > \tfrac{\epsilon}{2A}]$ resulting from~\eqref{eq:concentrationlemmaterm2} yields the result.
\end{proof}

\begin{corollary}\label{cor:empiricalmaxbeta}
    Consider $A>0$ and define for any set $S \subseteq [0, A]$:
    \begin{align*}
        \phi(\alpha ; \vg, \vh) & \bydef  \sup_{0 \leq \beta \leq 1} \phi(\alpha, \beta ; \vg, \vh), & 
        \kappa(\alpha) & \bydef  \sup_{0 \leq \beta \leq 1} \kappa(\alpha, \beta), \\
        \phi_S(\vg, \vh) & \bydef  \inf_{\alpha \in S} \phi(\alpha ; \vg, \vh), &
        \kappa_S & \bydef  \inf_{\alpha \in S} \kappa(\alpha),
    \end{align*}
    With $\vg \sim \mathcal{N}(\vzero, \mI_m)$, $\vh \sim \mathcal{N}(\vzero, \mI_n)$, $\delta = (m-1)/n$, we have for $0<\epsilon<2$
    \begin{equation}
        \label{eq:concentration_of_min_alpha}
        \prob \left[ \sup_{S \subset [0,A]} \abs{\phi_S(\vg, \vh) - \kappa_S} \geq \epsilon \ \right] \leq \zeta(n, \epsilon).
    \end{equation}
    with $\zeta(n,\epsilon)=\zeta(n,\epsilon;A,\delta)$ defined in~\eqref{eq:DefZeta}. In particular, 
    \begin{equation}
        \label{eq:concentration_of_max_beta}
        \prob \left[ \sup_{0 \leq \alpha \leq A}\abs{\phi(\alpha) - \kappa(\alpha)} \geq \epsilon \right] \leq \zeta(n, \epsilon).
    \end{equation}
\end{corollary}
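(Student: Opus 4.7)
The plan is to bootstrap the joint uniform concentration estimate of Lemma~\ref{lem:concentration2} through the two successive extremization steps that define $\phi(\alpha;\vg,\vh),\kappa(\alpha)$ and $\phi_S(\vg,\vh),\kappa_S$. First I would observe that Lemma~\ref{lem:concentration2} asserts that, except on a single event $E$ of probability at most $\zeta(n,\epsilon)$,
\[
\sup_{\substack{0<\alpha\leq A\\ 0<\beta<1}} \bigl|\phi(\alpha,\beta;\vg,\vh)-\kappa(\alpha,\beta)\bigr|\leq \epsilon.
\]
Both $\phi(\alpha,\beta;\vg,\vh)$ and $\kappa(\alpha,\beta)$ are jointly continuous on $[0,A]\times[0,1]$ (the latter via dominated convergence applied to $\Delta_p(\beta;n,\lambda)$), and a direct evaluation shows that at $\alpha=0$ or $\beta\in\{0,1\}$ the two functions even coincide exactly. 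Hence, on the complement $E^{c}$, the same uniform bound holds over the closed rectangle $[0,A]\times[0,1]$.

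Second, I would apply the elementary stability inequality $|\sup_{x}f(x)-\sup_{x}g(x)|\leq \sup_{x}|f(x)-g(x)|$ fiberwise in $\alpha$ to conclude that on $E^{c}$,
\[
\sup_{0\leq\alpha\leq A}\bigl|\phi(\alpha;\vg,\vh)-\kappa(\alpha)\bigr|\leq \sup_{\alpha,\beta}\bigl|\phi(\alpha,\beta;\vg,\vh)-\kappa(\alpha,\beta)\bigr|\leq\epsilon,
\]
which is exactly~\eqref{eq:concentration_of_max_beta}. Then, for any $S\subseteq[0,A]$, the analogous inequality for infima gives
\[
\bigl|\phi_S(\vg,\vh)-\kappa_S\bigr|=\bigl|\inf_{\alpha\in S}\phi(\alpha;\vg,\vh)-\inf_{\alpha\in S}\kappa(\alpha)\bigr|\leq \sup_{\alpha\in S}\bigl|\phi(\alpha;\vg,\vh)-\kappa(\alpha)\bigr|\leq \epsilon.
\]
The crucial feature is that the right-hand side is independent of $S$, so the bound holds \emph{simultaneously} for every $S\subseteq[0,A]$ on the one and the same event $E^{c}$; taking the supremum over $S$ therefore preserves the bound and delivers~\eqref{eq:concentration_of_min_alpha}.

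Since the whole argument reduces to inserting the joint bound of Lemma~\ref{lem:concentration2} into two applications of the sup/inf stability inequality, I do not anticipate any substantive obstacle. The one point meriting a sentence of care is the passage from the half-open domain $(0,A]\times(0,1)$ appearing in Lemma~\ref{lem:concentration2} to the closed rectangle $[0,A]\times[0,1]$ over which $\phi(\alpha;\vg,\vh)$ and $\kappa(\alpha)$ are defined; this extension is automatic from joint continuity of $\phi$ and $\kappa$ and the explicit matching at the boundary values noted above.
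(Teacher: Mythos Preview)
Your proposal is correct and follows essentially the same approach as the paper's proof, which simply invokes Lemma~\ref{lem:concentration2} and then passes the uniform bound through the $\sup_\beta$ and $\inf_{\alpha\in S}$ operations via the standard stability inequalities $|\sup f-\sup g|\leq\sup|f-g|$ and $|\inf f-\inf g|\leq\sup|f-g|$. One minor inaccuracy: at $\beta=1$ the functions $\phi(\alpha,1;\vg,\vh)$ and $\kappa(\alpha,1)$ do \emph{not} coincide exactly (only at $\alpha=0$ and $\beta=0$ do they), but this is harmless since joint continuity alone---which you also invoke---already suffices to extend the bound to the closed rectangle.
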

\begin{proof}
To lighten notation we suppress the dependence of the stochastic function $\phi$ on random vectors $\vg$ and $\vh$. By Lemma~\ref{lem:concentration2} we have with probability at least $1-\zeta(n,\epsilon)$: for all $0 \leq \alpha \leq A$ and $0 \leq \beta \leq 1$, $\abs{\phi(\alpha,\beta)-\kappa(\alpha,\beta)} \leq \epsilon$. When this holds we have for any $S \subset [0,A]$:
\begin{equation}
    \phi_{S} = \inf_{\alpha \in S} \phi(\alpha) \leq \inf_{\alpha \in S} [\kappa(\alpha) + \epsilon] = \kappa_{S}+\epsilon
\end{equation}
and
\begin{equation}
    \phi_{S} = \inf_{\alpha \in S} \phi(\alpha) \geq \inf_{\alpha \in S} [\kappa(\alpha)-\epsilon] = \kappa_{S}-\epsilon.
\end{equation}
\end{proof}
We will shortly characterize $\kappa(\alpha,\beta)$ and $\kappa(\alpha)$. To that end, we will use some properties of the following quantity
\begin{equation}
    \label{eq:DefDp}
    D_{p}(t;n) \bydef \left(\E[    \tfrac{1}{\sqrt{n}}\  \dist(\vh,  \norm{\,\cdot\,}_{p^*} \leq t)]\right)^{2}.
\end{equation}
    
\begin{lemma}[Deterministic properties of $D_p$]
    \label{lem:Dp_det_prop}
    Define 
    \(
      \mathcal{C}_{t} = \mathcal{C}_{t,p} \bydef \set{\vx \in \R^{n}: \norm{\vx}_{p^{*}} \leq t}
    \)
    and $D_p(t;n)$ as in~\eqref{eq:DefDp}. Using $D_{p}(t)$ as a shortand, the following hold:
    \begin{enumerate}
        \item The sets $\mathcal{C}_{t}$ are convex and nested with $\mathcal{C}_{t} \subsetneq \mathcal{C}_{t'}$ for $t < t'$.
        \item For any vector $\vh$, the function $t \mapsto \dist(\vh,\mathcal{C}_{t})$ is non-inscreasing and convex.
        \item $D_{p}(t)$ is a (strictly) decreasing convex function of $t$,
        \item $\lim_{t \to \infty} D_{p}(t) = 0$,
        \item $\frac{n}{n+1} \leq D_{p}(0) \leq 1$, 
        \item The function $t\mapsto D_{p}(t)$ is infinitely differentiable. 
        \item Let $g(t) = g(t;n) \bydef D_p(t) -\frac{t}{2}D'_{p}(t)$. For any $0 < \delta < D_{p}(0)$ there is a unique 
        \[t^{*} = t^{*}_{p}(\delta;n) \in (0,\infty)\] such that $g(t) > \delta$ for $t < t^*$ and $g(t) < \delta$ for $t > t^*$. It holds that $D_{p}(t^{*}) < g(t^{*}) = \delta$.
    \end{enumerate}
\end{lemma}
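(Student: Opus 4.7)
The plan is to address the seven properties in the stated order, since (1)--(2) are purely convex-geometric, (3)--(5) reduce to properties of the expected distance function $F(t) \bydef \E[\dist(\vh, \mathcal{C}_{t})]$ (so that $D_{p}(t;n) = F(t)^{2}/n$), (6) is a Gaussian smoothing argument, and (7) is an elementary analysis of $g$ built on the preceding items.

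For (1), $\mathcal{C}_{t}$ is the closed $\ell^{p^{*}}$ ball of radius $t$, hence convex with $\mathcal{C}_{t} = t\mathcal{C}_{1} \subsetneq \mathcal{C}_{t'}$ for $0 \leq t < t'$. For (2), monotonicity is immediate from nesting, and convexity of $t \mapsto \dist(\vh, \mathcal{C}_{t})$ comes from the inclusion $\lambda\mathcal{C}_{s} + (1-\lambda)\mathcal{C}_{u} \subseteq \mathcal{C}_{\lambda s + (1-\lambda)u}$ (a triangle inequality for $\|\cdot\|_{p^{*}}$): any convex combination of feasible points realizes an upper bound on the distance that is the convex combination of the individual distances. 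Linearity of expectation transfers both properties to $F(t)$. For (3), since $F \geq 0$ is convex, $F^{2}$ is convex (as the composition of the convex nondecreasing map $x \mapsto x^{2}$ on $[0,\infty)$ with $F$), so $D_{p}$ is convex. Strict decrease follows because $\vh$ has full support, making $\dist(\vh,\mathcal{C}_{t})$ almost surely positive and strictly decreasing in $t$. For (4), apply dominated convergence with the integrable majorant $\|\vh\|_{2}$. For (5), $D_{p}(0) = (\E\|\vh\|_{2})^{2}/n$; Jensen gives the upper bound $\leq 1$, and the lower bound $\geq n/(n+1)$ follows from Wendel's inequality applied to the identity $\E\|\vh\|_{2} = \sqrt{2}\,\Gamma((n+1)/2)/\Gamma(n/2)$.

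The main technical step is (6). The idea is to exploit the homogeneity $\mathcal{C}_{t} = t\mathcal{C}_{1}$: using $\dist(\vh, t\mathcal{C}_{1}) = t\,\dist(\vh/t, \mathcal{C}_{1})$ and changing variables $\vy = \vh/t$ in the Gaussian expectation gives, for $t > 0$,
\[
F(t) = t^{n+1}(2\pi)^{-n/2}\int_{\R^{n}} \dist(\vy,\mathcal{C}_{1})\,\e^{-t^{2}\|\vy\|_{2}^{2}/2}\,\di \vy.
\]
The $t$-dependence inside the integrand is now only through the smooth factor $\e^{-t^{2}\|\vy\|_{2}^{2}/2}$. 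Each $t$-derivative pulls down a polynomial in $\|\vy\|_{2}^{2}$, and on any compact sub-interval $[t_{1},t_{2}] \subset (0,\infty)$ the result is dominated by a function of the form $P(\|\vy\|_{2})\,\e^{-t_{1}^{2}\|\vy\|_{2}^{2}/2}$, which is integrable and independent of $t$. Differentiation under the integral is thus justified at every order, so $F$ is $C^{\infty}$ on $(0,\infty)$, and consequently so is $D_{p} = F^{2}/n$.

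For (7), continuity of $g$ follows from (6). At $t = 0$, $g(0) = D_{p}(0) > \delta$ by assumption. As $t \to \infty$, item (4) gives $D_{p}(t) \to 0$; applying the convexity inequality $D_{p}(t/2) \geq D_{p}(t) - (t/2) D'_{p}(t)$ yields $0 \geq tD'_{p}(t) \geq 2(D_{p}(t) - D_{p}(t/2)) \to 0$, hence $tD'_{p}(t) \to 0$ and $g(t) \to 0$. The intermediate value theorem then yields $t^{*} \in (0,\infty)$ with $g(t^{*}) = \delta$. For uniqueness, compute
\[
g'(t) = \tfrac{1}{2}D'_{p}(t) - \tfrac{t}{2}D''_{p}(t),
\]
which is strictly negative for $t > 0$: $D''_{p} \geq 0$ by convexity, and $D'_{p}(t) < 0$ for all $t \geq 0$ since any vanishing of $D'_{p}$ would, by convexity and the non-increase of $D_{p}$, force $D_{p}$ to be constant on a half-line, contradicting strict decrease together with $D_{p} > 0$ on $[0,\infty)$. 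Hence $g$ is strictly decreasing, $t^{*}$ is unique, and the strict dichotomy $g(t) > \delta$ for $t < t^{*}$, $g(t) < \delta$ for $t > t^{*}$ is immediate. Finally $g(t^{*}) - D_{p}(t^{*}) = -\tfrac{t^{*}}{2}D'_{p}(t^{*}) > 0$, which gives the last inequality. The step I expect to require the most care is (6), where one must justify the Gaussian change of variables and the differentiation-under-the-integral bounds uniformly on compact subsets of $(0,\infty)$.
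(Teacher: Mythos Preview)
Your proof is correct and follows the same overall architecture as the paper's. A few remarks on where you differ and one small imprecision to fix.

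For Property~2 you use the inclusion $\lambda\mathcal{C}_{s}+(1-\lambda)\mathcal{C}_{u}\subseteq\mathcal{C}_{\lambda s+(1-\lambda)u}$ directly, whereas the paper invokes the perspective construction $(\vx,t)\mapsto t\,\dist(\vx/t,\mathcal{C}_{1})$; both arrive at the same conclusion. For Property~4 you use dominated convergence with the majorant $\|\vh\|_{2}$, which is cleaner than the paper's spherical-coordinate estimate. Most notably, for Property~7 your proof that $tD_{p}'(t)\to 0$ via the convexity inequality $D_{p}(t/2)\geq D_{p}(t)-(t/2)D_{p}'(t)$ is more elementary than the paper's route, which goes through the explicit integral representation~\eqref{eq:CpDevelopedExpression} and shows term-by-term decay.

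One wording issue in Property~3: the claim that ``$\dist(\vh,\mathcal{C}_{t})$ is almost surely positive and strictly decreasing in $t$'' is not literally true, since $\dist(\vh,\mathcal{C}_{t})=0$ on the positive-probability event $\{\vh\in\mathcal{C}_{t}\}$ and is constant in $t$ thereafter. What you need (and what suffices) is that for any $t_{1}<t_{2}$ the event $\{\vh\notin\mathcal{C}_{t_{1}}\}$ has positive probability, and on that event $\dist(\vh,\mathcal{C}_{t_{1}})>\dist(\vh,\mathcal{C}_{t_{2}})$ strictly; hence $F(t_{1})>F(t_{2})$. With this correction the argument goes through.
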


\begin{proof}
\begin{itemize}
  \item Property 1 is obvious. 
  \item Property 2: we recall that $\vx \mapsto \dist(\vx, {\cal C})$ is convex for convex $\cal C$ \cite[Example 3.16]{Boyd:2004uz}.
  Next, $(\vx, t) \mapsto t \dist(\vx / t, {\cal C})$ is convex in both
  arguments because it is the perspective of $\vx \mapsto \dist(\vx, {\cal
  C})$ \cite[Chapter 2]{Boyd:2004uz}. Applying this to ${\cal C} = {\cal C}_{1}$ and observing that $\dist(\vh,{\cal C}_{t}) = t\dist(\vh/t,{\cal C}_{1})$ we obtain that $\dist(\vh,{\cal C}_{t})$ is convex in $t$. The fact that it is non-increasing follows from Property~1.
  \item Property 3: 
  since expectation of convex functions is convex, and the pointwise square of a non-negative convex function is convex, we conclude that $D_p(t)$ is convex as claimed. The fact that it is (strictly) decreasing is obvious.
  \item Property 4: for any $\vy \in \R^{n}$ and $p\geq 1$, $\norm{\vy}_{p^{*}} \leq \norm{\vy}_{1} \leq \sqrt{n} \norm{\vy}$. Hence, for any given $t > 0$, $\dist(\vy, \mathcal{C}_{t}) = 0$ as soon as $\norm{\vy} \leq t / \sqrt{n}$. Further, for all $p \geq 1$ we can write
  \begin{align*}
    \E[  \dist(\vh, \mathcal{C}_{t}) ]
    &= \int_{\R^n} \dist(\vy, \mathcal{C}_{t}) \ p_\vh(\vy) \, \di \vy \\
    &= \int_{\vb \in \mathbb{S}^{n-1}} \int_{r = 0}^{\infty} \dist(r\vb, \mathcal{C}_{t}) \ p_\vh(r\vb) \ \mu(\di \vb) \ r^{n-1} \, \di r \\
    &= \int_{\vb \in \mathbb{S}^{n-1}} \int_{r = t/\sqrt{n}}^{\infty} \dist(r\vb, \mathcal{C}_{t}) \ p_\vh(r\vb) \ \mu(\di \vb) \ r^{n-1} \, \di r \\
    &\leq \int_{\vb \in \mathbb{S}^{n-1}} \int_{r = t/\sqrt{n}}^{\infty} r \ p_\vh(r\vb) \ \mu(\di \vb) \ r^{n-1} \, \di r \\
    &= \int_{\vb \in \mathbb{S}^{n-1}} \mu(\di \vb) \int_{r = t/\sqrt{n}}^{\infty} r^n \ Z_n \e^{-r^2 / 2} \ \di r \\
    &= Z_n \mu(\mathbb{S}^{n-1}) \int_{r = t/\sqrt{n}}^{\infty} r^n \e^{-r^2 / 2} \ \di r \to 0 \text{ as } t \to \infty,
  \end{align*}
  where $Z_n = (2\pi)^{-n/2}$ is the normalization term for the $n$-variate iid Gaussian distribution, and $\mu(\mathbb{S}^{n-1}) = 2\pi^{n/2} / \Gamma \left( \frac{n}{2} \right)$ is the surface area of the unit-radius $(n-1)$-sphere embedded in $\R^n$ with $\Gamma \left( \cdot \right)$ being the gamma function.

  \item Property 5: by Jensen's inequality and Property 3, we obtain the upper bound
  \[
    D_{p}(0) = \left( \E \tfrac{1}{\sqrt{n}} \dist(\vh,\mathcal{C}_{0})\right)^{2} 
    \leq \E \left(\tfrac{1}{\sqrt{n}} \dist(\vh,\mathcal{C}_{0})\right)^{2} = \E \tfrac{\norm{\vh}^{2}}{n} = 1.
  \]
  To get the lower bound, we again note that $n D_p(0) = \left[ \E \norm{\vh} \right]^2$, which can be computed by integration in polar coordinates. Working as in the proof of Property 4 we get that

  \[
      \E[  \dist(\vh, \mathcal{C}_{0}) ] = \E[ \norm{\vh} ]
      = \int_{\vb \in \mathbb{S}^{n-1}} \int_{r = 0}^{\infty} r \ p_\vh(r\vb) \ \mu(\di \vb) \ r^{n-1} \, \di r \\
  \]
  since $r = \norm{\vh} = \dist(\vh, \mathcal{C}_0)$. It follows that
  \begin{eqnarray*}
      \E[ \norm{\vh} ] 
      & = & Z_n \mu(\mathbb{S}^{n-1}) \int_{r = 0}^{\infty} r^n \e^{-r^2/2} \, \di r \\
      & \stackrel{(a)}{=} & (2 \pi)^{-n/2} \frac{ 2 \pi^{n / 2}}{\Gamma \left( \frac{n}{2} \right)} \int_{s = 0}^{\infty} 2^{\frac{n-1}{2}}s^{\frac{n-1}{2}} \e^{-s} \, \di r \\
      & \stackrel{(b)}{=} & {\sqrt{2} \, \Gamma \left( \frac{n+1}{2} \right) } \bigg/ {\Gamma \left( \frac{n}{2} \right) } 
  \end{eqnarray*}
  where in $(a)$ we used the substitution $u = r^2/2$, and in $(b)$ we invoked the definition of the gamma function, $\Gamma(z) = \int_0^\infty x^{z - 1} \e^{-x} \di x$. We now use the inequality of Wendel, \cite[Eq. (7)]{Wendel:1948fv}:
  \[
      {\Gamma(x + a)} \big/ {\Gamma(x)} \geq x(x + a)^{a-1}
  \]
  to conclude that 
  \begin{equation}\label{eq:lowerBoundExpNormH}
  \E [ \norm{\vh} ] \geq n (n + 1)^{-1/2}
  \end{equation}
   and $D_p(0) \geq n / (n + 1)$.
  \item Property 6: 
  starting similarly as in the proof of Property 4 and using $\dist(\vh, \mathcal{C}_{t}) = t \dist(\vh / t, \mathcal{C}_1)$ and a change of variable $r = t \rho$ we obtain the following expression for $D_p(t)$:
  \begin{align}
      q(t) \bydef \sqrt{n D_p(t)} &= t^{n+1} \int_{\vb \in \mathbb{S}^{n-1}} \int_{\rho = 0}^{\infty} \dist(\rho \vb, {\cal C}_1) p_\vh(\rho t \vb)  \, \mu(\di \vb) \rho^{n-1} \, \di \rho \\
      &= Z_n t^{n+1} \int_{\vb \in \mathbb{S}^{n-1}} \int_{\rho = 0}^{\infty} \dist(\rho \vb, {\cal C}_1) \e^{-\rho^2 t^2 / 2}  \, \mu(\di \vb) \rho^{n-1} \, \di \rho \label{eq:Dp_for_inf_diff}
  \end{align}
  so that $D_p(t)$ is infinitely differentiable (by the repeated application of the dominated convergence theorem). 
  \item Property 7: in particular, since 
  \[
      \abs{\parder{}{t} \dist(\rho \vb, {\cal C}_1) \e^{-\rho^2 t^2 / 2} \rho^{n-1}} 
      = \dist(\rho \vb, {\cal C}_1) t \rho^{n+1}  \e^{-\rho^2 t^2 / 2}
      \leq  \rho^{n+2} t  \e^{-\rho^2 t^2 / 2}
  \]
  where the rightmost expression is integrable for every $t > 0$ and $n \in \N$, the dominated convergence theorem allows us to differentiate under the integral sign in ~\eqref{eq:Dp_for_inf_diff} to get 
  \begin{align}
      -\tfrac{t}{2} D_p'(t) &= -\tfrac{t}{n} q(t) q'(t) \notag\\
      &= -\tfrac{1}{n} q(t) \left( (n+1) q(t) - Z_n t^{n+3} \int_{\vb \in \mathbb{S}^{n-1}} \int_{\rho = 0}^{\infty} \dist(\rho \vb, {\cal C}_1) \rho^{n+1} \e^{-\rho^2 t^2 / 2}  \, \mu(\di \vb) \, \di \rho \right).\label{eq:CpDevelopedExpression}
  \end{align}
  All terms can be seen to vanish as $t \to \infty$ by arguments analogous to those in the end of the proof of Property 5, hence $\lim_{t \to \infty} [-\tfrac{t}{2} D_p'(t)] = 0$.
   
  Since $D_p(t)$ is strictly decreasing we have $D_p'(t) < 0$. Since it is convex, we have $D_p''(t) \geq 0$. Thus
  \[
      g'(t) = D_p'(t) - \tfrac{1}{2} D_p'(t) - \tfrac{t}{2} D_p''(t) = \tfrac{1}{2} D_p'(t) - \tfrac{t}{2} D_p''(t) < 0
  \]
  for $t > 0$, meaning that $g(t)$ is strictly decreasing. Since $\lim_{t \to \infty} [-\tfrac{t}{2} D_p'(t)] = 0$ and $\lim_{t \to \infty} D_p(t) = 0$, we have $\lim_{t \to \infty} g(t) = 0$, it follows that for $0<t<\lim_{t \to 0} g(t)$, there is a unique $t^*(\delta)$, such that $0 < t^*(\delta) < \infty$ and $g(t) > \delta$ for $t < t^*$ and $g(t) < \delta$ for $t > t^*$. We conclude by observing that since $g(t) \geq D_{p}(t)$ we have $\lim_{t \to 0} g(t) \geq D_{p}(0)$.
\end{itemize}
\end{proof}

\begin{lemma}
\label{le:genericboundTp}
  Denote
  \begin{equation}\label{eq:DefTheta}
    \theta(t) \bydef \E (\abs{h}-t)_{+}^{2}
  \end{equation}
  with $h$ a standard centered normal variable and $(\cdot)_{+} = \max(.,0)$ the positive part. This is a strictly decreasing function of $t$ with $\theta(0)=1$ and $\lim_{t \to \infty}\theta(t) = 0$. The following holds for all $1 \leq p \le \infty$, $0<\delta<1$:
  \begin{enumerate}
    \item\label{it:TMax} For any $n \geq 1$
  \begin{equation}
    \label{eq:TMax}
    t^{*}_{p}(\delta;n) \leq  2\ \theta^{-1}(\delta)\  n^{1-1/p}.
  \end{equation}
  \item\label{it:TMin} For any $n \geq \tfrac{2}{1-\delta}$
  \begin{equation}\label{eq:TMin}
    t^{*}_{p}(\delta;n) \geq \theta^{-1}(\tfrac{1+\delta}{2}) > 0
  \end{equation}
  \item\label{it:TMin2} For any $n \geq  \max\left(\tfrac{2}{1-\delta},\tfrac{1}{\delta}\right)$
  \begin{equation}\label{eq:TMin2}
    t^{*}_{p}(\delta;n) \geq \max\left(\theta^{-1}(2\delta),\theta^{-1}(\tfrac{1+\delta}{2})\right) > 0
  \end{equation}
   \item\label{it:lowerBoundDp} There is a universal constant $C$ independent of $\delta$, $p$ and $n$ such that for all $n \geq 1$
  \begin{equation}
    \label{eq:lowerBoundDp}  
    D_{p}(t^{*}_{p};n)  \geq \left(\frac{\delta}{C}\right)^{2}\\
  \end{equation}
  where we use the shorthand $t^{*}_{p} = t^{*}_{p}(\delta;n)$.
  \end{enumerate}
\end{lemma}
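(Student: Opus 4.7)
The backbone of the argument will be a two-sided estimate of $D_{p}(t;n)$ in terms of $\theta$ (defined in~\eqref{eq:DefTheta}). For any $1\leq p\leq \infty$ the elementary comparison $\|\vx\|_\infty\leq\|\vx\|_{p^{*}}\leq n^{1/p^{*}}\|\vx\|_\infty = n^{1-1/p}\|\vx\|_\infty$ yields the nested inclusions
\[
\{\vx:\|\vx\|_\infty\leq t/n^{1-1/p}\}\ \subseteq\ \mathcal{C}_{t,p}\ \subseteq\ \{\vx:\|\vx\|_\infty\leq t\}.
\]
Since $\dist(\vh,\{\|\cdot\|_\infty\leq s\})^{2}=\sum_{i}(|h_{i}|-s)_{+}^{2}$ has expectation $n\theta(s)$, Jensen's inequality gives $[\E\dist]^{2}\leq n\theta(s)$, while the Poincar\'e variance bound for $1$-Lipschitz functions (Lemma~\ref{lem:concentration_results}\ref{lemitem:varoflip}) gives $[\E\dist]^{2}\geq n\theta(s)-1$. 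Combining with the inclusions above yields the key sandwich
\begin{equation*}
\theta(t)-\tfrac{1}{n}\ \leq\ D_{p}(t;n)\ \leq\ \theta\!\left(t/n^{1-1/p}\right).
\end{equation*}

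With this in hand, items~\ref{it:TMax}--\ref{it:TMin2} are direct. For item~\ref{it:TMax}, by convexity of $D_{p}$ (Property~3 of Lemma~\ref{lem:Dp_det_prop}) the tangent-line inequality evaluated at $t/2$ reads $D_{p}(t/2)\geq D_{p}(t)-\tfrac{t}{2}D'_{p}(t)=g(t)$, so $g(t)\leq D_{p}(t/2)\leq \theta(t/(2n^{1-1/p}))$. Choosing $t_{0}=2\theta^{-1}(\delta)n^{1-1/p}$ makes the right-hand side equal to $\delta$; since $g$ is strictly decreasing with $g(t^{*}_{p})=\delta$ (Property~7 of Lemma~\ref{lem:Dp_det_prop}), this forces $t^{*}_{p}\leq t_{0}$. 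For items~\ref{it:TMin}--\ref{it:TMin2}, I use the opposite direction: $D'_{p}\leq 0$ yields $g(t)\geq D_{p}(t)\geq \theta(t)-1/n$. Taking $t_{0}=\theta^{-1}((1+\delta)/2)$ gives $g(t_{0})\geq (1+\delta)/2-1/n\geq \delta$ provided $n\geq 2/(1-\delta)$, which is item~\ref{it:TMin}; analogously $t_{0}=\theta^{-1}(2\delta)$ gives $g(t_{0})\geq 2\delta-1/n\geq \delta$ whenever $n\geq 1/\delta$, and maxing with the item~\ref{it:TMin} bound delivers item~\ref{it:TMin2}.

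The only real idea is in item~\ref{it:lowerBoundDp}. A naive convexity argument on $D_{p}$ itself yields only $D_{p}(t^{*})\geq 2\delta-D_{p}(0)$, which is vacuous for small $\delta$. The trick is to work instead with $\phi(t)\bydef \sqrt{D_{p}(t;n)}=q(t)/\sqrt{n}$, where $q(t)=\E\dist(\vh,\mathcal{C}_{t,p})$. By Property~2 of Lemma~\ref{lem:Dp_det_prop}, $t\mapsto\dist(\vh,\mathcal{C}_{t,p})$ is convex for every fixed $\vh$, so $\phi$ is convex as an expectation of convex functions. Using $D'_{p}=2\phi\phi'$ one rewrites
\[
g(t)=D_{p}(t)-\tfrac{t}{2}D'_{p}(t)=\phi(t)\bigl(\phi(t)-t\phi'(t)\bigr).
\]
Note $\phi$ is differentiable at $t^{*}$ because $g(t^{*})=\delta>0$ forces $\phi(t^{*})>0$, so we can apply the tangent-line inequality of $\phi$ at $t^{*}$, evaluated at the origin: $\phi(0)\geq \phi(t^{*})-t^{*}\phi'(t^{*})$. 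Combining,
\[
\delta=g(t^{*})\leq \phi(t^{*})\,\phi(0)\leq \phi(t^{*}),
\]
where the last step uses $\phi(0)=\sqrt{D_{p}(0)}\leq 1$ (Property~5 of Lemma~\ref{lem:Dp_det_prop}). Squaring yields $D_{p}(t^{*}_{p};n)\geq \delta^{2}$, so any $C\geq 1$ works.

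The main obstacle was item~\ref{it:lowerBoundDp}: one must spot that it is convexity of $\sqrt{D_{p}}$, not of $D_{p}$, that converts the defining equation $g(t^{*})=\delta$ into the sharp lower bound $D_{p}(t^{*})\geq \delta^{2}$. Everything else reduces to the $\ell^{\infty}$-sandwich and monotonicity/convexity bookkeeping.
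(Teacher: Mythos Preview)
Your proof is correct, and for items~\ref{it:TMax}--\ref{it:TMin2} it coincides with the paper's argument (Steps~1--6): the same $\ell^\infty$ sandwich $\theta(t)-1/n\leq D_p(t;n)\leq \theta(t/n^{1-1/p})$, the same tangent-line bound $g(t)\leq D_p(t/2)$, and the same monotonicity of $g$.

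For item~\ref{it:lowerBoundDp} your route is genuinely different and cleaner. The paper (Step~7) derives the integral identity
\[
g(t)=\sqrt{D_p(t)}\cdot \tfrac{1}{\sqrt{n}}\,\E\!\left[\bigl(\dist(\vh,\mathcal{C}_t)-q(t)\bigr)\tfrac{\|\vh\|^2-n}{\sqrt{n}}\right],
\]
then bounds the expectation by tail integration using concentration of $\dist(\vh,\mathcal{C}_t)$ and of $\|\vh\|^2$, arriving at $\delta\leq C\sqrt{D_p(t^*)}$ with a somewhat opaque universal constant $C$. You bypass all of this by observing that $\phi=\sqrt{D_p}=q/\sqrt{n}$ is itself convex (as an expectation of the convex maps $t\mapsto\dist(\vh,\mathcal{C}_t)$, which is exactly what the paper establishes in proving Property~3 of Lemma~\ref{lem:Dp_det_prop} but never reuses). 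Writing $g=\phi\cdot(\phi-t\phi')$ and applying the tangent inequality $\phi(0)\geq\phi(t^*)-t^*\phi'(t^*)$ then gives $\delta\leq\phi(t^*)\phi(0)\leq\phi(t^*)$ immediately, with the sharp constant $C=1$.

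One cosmetic point: your justification ``$\phi$ is differentiable at $t^*$ because $g(t^*)=\delta>0$ forces $\phi(t^*)>0$'' is slightly circular in presentation, since the expression $g=\phi(\phi-t\phi')$ already presupposes $\phi'$. The cleanest fix is to note that $D_p(t)>0$ for every $t\geq 0$ (since $\|\vh\|_{p^*}>t$ with positive probability), whence $\phi=\sqrt{D_p}$ inherits smoothness from Property~6; alternatively, the tangent inequality holds for any subgradient of the convex function $\phi$, so differentiability is not actually needed.
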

\begin{lemma}\label{le:boundp2}
With the notations of Theorem~\ref{thm:frob_of_l1}, for $p=2$,
\begin{eqnarray}
\lim_{n \to \infty} -t^{*}_{2} D'_{2}(t^{*}_{2};n) & = & (1-\delta)\delta > 0\label{eq:UnivLowBoundDprime2}\\
\lim_{n \to \infty}\alpha^{*}_{2}(\delta;n) & = & \tfrac{1}{\sqrt{1-\delta}}.\label{eq:AlphaD2}
\end{eqnarray}
\end{lemma}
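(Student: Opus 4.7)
}

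For $p=2$ the dual exponent is $p^{*}=2$, so $\mathcal{C}_{t} = \{\vx : \norm{\vx}_{2} \leq t\}$ is a Euclidean ball and $\dist(\vh,\mathcal{C}_{t}) = (\norm{\vh}-t)_{+}$. Substituting into the definition of $D_{p}$ yields the closed form
\[
D_{2}(t;n) = \tfrac{1}{n}\bigl[\E(\norm{\vh}-t)_{+}\bigr]^{2}.
\]
I will work with the natural rescaling $s = t/\sqrt{n}$ and the normalized function $\tilde f_{n}(s) \bydef \E\bigl(\norm{\vh}/\sqrt{n}-s\bigr)_{+}$, so that $D_{2}(s\sqrt{n};n) = \tilde f_{n}(s)^{2}$. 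By the concentration bounds in Lemma~\ref{lem:concentration_results} we have $\norm{\vh}/\sqrt{n}\to 1$ in $L^{2}$; since $x\mapsto(x-s)_{+}$ is $1$-Lipschitz this upgrades to $\tilde f_{n}(s)\to (1-s)_{+}$, which gives the pointwise limit
\[
D_{2}(s\sqrt{n};n) \ \longrightarrow\ (1-s)_{+}^{2}.
\]
Differentiating via $f_{n}'(t) = -\prob(\norm{\vh}>t)$, and invoking the same concentration, gives $\sqrt{n}\,D_{2}'(s\sqrt{n};n)\to -2(1-s)_{+}$ for $s\neq 1$; these estimates can alternatively be obtained directly from the polar-coordinate representation~\eqref{eq:Dp_for_inf_diff} by dominated convergence.

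Next I form the rescaled fixed-point function $\tilde g_{n}(s) \bydef g(s\sqrt{n};n) = D_{2}(s\sqrt{n};n) - \tfrac{s\sqrt{n}}{2}D_{2}'(s\sqrt{n};n)$ and combine the two asymptotics above to get
\[
\tilde g_{n}(s) \ \longrightarrow\ (1-s)_{+}^{2} + s(1-s)_{+} = (1-s)_{+},\qquad s\geq 0.
\]
The limiting equation $1-s^{*}=\delta$ has the unique solution $s^{*}=1-\delta$. Since each $\tilde g_{n}$ is strictly decreasing by Property~7 of Lemma~\ref{lem:Dp_det_prop}, and Lemma~\ref{le:genericboundTp} confines the rescaled root $s^{*}_{n} \bydef t^{*}_{2}(\delta;n)/\sqrt{n}$ to a compact subinterval of $(0,\infty)$ uniformly in $n$, the elementary monotonicity-of-roots principle yields $s^{*}_{n}\to 1-\delta$, i.e., $t^{*}_{2}(\delta;n)/\sqrt{n}\to 1-\delta$. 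Substituting back gives $D_{2}(t^{*}_{2};n)\to (1-(1-\delta))^{2}=\delta^{2}$, and the identity $-t^{*}D_{2}'(t^{*};n) = 2[\delta - D_{2}(t^{*};n)]$ extracted from $g(t^{*})=\delta$ delivers the first claim \eqref{eq:UnivLowBoundDprime2}. Finally, plugging $D_{2}(t^{*}_{2};n)\to\delta^{2}$ into the closed-form definition of $\alpha^{*}_{2}$ gives
\[
\alpha^{*}_{2}(\delta;n) \ \longrightarrow\ \sqrt{\tfrac{\delta^{2}}{\delta\cdot\delta(1-\delta)}} = \tfrac{1}{\sqrt{1-\delta}},
\]
which is~\eqref{eq:AlphaD2}.

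The main point of care is the transfer from pointwise convergence of $\tilde g_{n}$ to convergence of its roots $s^{*}_{n}$. This is made routine by two structural facts already in hand: strict monotonicity of each $\tilde g_{n}$ (Property~7 of Lemma~\ref{lem:Dp_det_prop}) together with continuity and strict monotonicity of the limit, and the $n$-uniform confinement of $s^{*}_{n}$ to a compact interval via Lemma~\ref{le:genericboundTp}. On such an interval, pointwise convergence of monotone functions to a continuous strictly monotone limit transfers level-set information without requiring uniform rates. Everything else---the dominated-convergence manipulations behind the asymptotics of $D_{2}$ and $D_{2}'$---reduces to the Gaussian concentration already collected in Lemma~\ref{lem:concentration_results}.
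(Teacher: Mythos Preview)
Your approach is essentially the paper's own (Step~8 of the combined proof): rescale via $s=t/\sqrt{n}$, use concentration of $\norm{\vh}/\sqrt{n}$ around $1$ to obtain the pointwise limits of $D_{2}$ and $-\tfrac{t}{2}D_{2}'$ (the paper's $F_{1}$ and $F_{2}$), and then transfer to convergence of the root $s^{*}_{n}=t^{*}_{2}/\sqrt{n}\to 1-\delta$.

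Two small remarks. First, your appeal to Lemma~\ref{le:genericboundTp} for a compact subinterval of $(0,\infty)$ is not quite right: item~\ref{it:TMin} only gives $t^{*}_{2}\geq\theta^{-1}(\tfrac{1+\delta}{2})$, a constant independent of $n$, so after rescaling $s^{*}_{n}\gtrsim n^{-1/2}$, which is not a uniform lower bound. This is harmless, however: since each $\tilde g_{n}$ is strictly decreasing and $\tilde g_{n}(s)\to 1-s$ pointwise on $(0,1)$, evaluating at $1-\delta\pm\epsilon$ already traps $s^{*}_{n}$ in $(1-\delta-\epsilon,1-\delta+\epsilon)$ for large $n$, with no a priori compactness needed. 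Second, your identity $-t^{*}D_{2}'(t^{*})=2[\delta-D_{2}(t^{*})]$ gives the limit $2\delta(1-\delta)$, twice the value printed in~\eqref{eq:UnivLowBoundDprime2}; the paper itself computes $-(t^{*}/2)D_{2}'(t^{*})\to\delta(1-\delta)$, so the discrepancy is a slip in the lemma statement and irrelevant for the application, where only strict positivity of the limit is used.
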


\begin{lemma}\label{le:boundp1}
 With the notations of Theorem~\ref{thm:frob_of_l1}, for $p=1$,
\begin{eqnarray}
&& \lim_{n \to \infty}  -t^{*}_{1} D'_{1}(t^{*}_{1};n)  >  0\label{eq:UnivLowBoundDprime1}\\
\alpha^{*}_{1}(\delta) & \bydef & \lim_{n \to \infty} \alpha^{*}_{1}(\delta;n) 
= \sqrt{\frac{1}{\sqrt{\tfrac{2}{\pi}} \e^{-\tfrac{(t_1^*)^2}{2}} t^{*}_{1} - \delta(t_1^*)^2} -\tfrac{1}{\delta}}.
\label{eq:AlphaD1}
\end{eqnarray}
where $t^*_{1}(\delta) \bydef \sqrt{2} \erfc^{-1}(\delta)$. 
\end{lemma}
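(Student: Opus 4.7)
The plan is to exploit that for $p=1$ the dual exponent is $p^{*}=\infty$, so $\mathcal{C}_{t}=[-t,t]^{n}$ is an axis-aligned cube and the Euclidean distance decouples: $\dist(\vh,\mathcal{C}_{t})^{2}=\sum_{i=1}^{n}(\abs{h_{i}}-t)_{+}^{2}$. Each summand is i.i.d.\ with common mean $\theta(t)$ defined in~\eqref{eq:DefTheta}. A one-line concentration argument then gives the uniform-in-$t$ estimate
\begin{equation*}
\theta(t)-\tfrac{1}{n}\ \leq\ D_{1}(t;n)\ \leq\ \theta(t),
\end{equation*}
since Lemma~\ref{lem:concentration_results}\ref{lemitem:varoflip} (Poincar\'e) gives $\var\bigl[\tfrac{1}{\sqrt n}\dist(\vh,\mathcal{C}_{t})\bigr]\leq 1/n$ by 1-Lipschitzness of $\dist(\cdot,\mathcal{C}_{t})$, while $\E\bigl[\tfrac{1}{n}\dist(\vh,\mathcal{C}_{t})^{2}\bigr]=\theta(t)$, and $(\E X)^{2}=\E X^{2}-\var X$.

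Next I would show $D_{1}'(t;n)\to\theta'(t)$. Writing $S_{n}(t)\bydef\sum_{i}(\abs{h_{i}}-t)_{+}^{2}$ and $T_{n}(t)\bydef\sum_{i}(\abs{h_{i}}-t)_{+}$, the chain rule (valid a.e.) gives $\tfrac{d}{dt}\sqrt{S_{n}}=-T_{n}/\sqrt{S_{n}}$, and dominated convergence---justified as in the proof of Lemma~\ref{lem:Dp_det_prop} (Property~6)---allows the interchange of $\E$ and $\tfrac{d}{dt}$, yielding
\begin{equation*}
D_{1}'(t;n)\ =\ -2\cdot\frac{\E\sqrt{S_{n}}}{\sqrt{n}}\cdot\frac{\E\bigl[T_{n}/\sqrt{S_{n}}\bigr]}{\sqrt{n}}.
\end{equation*}
The law of large numbers gives $S_{n}/n\to\theta(t)$ and $T_{n}/n\to\tilde\theta(t)\bydef\E(\abs{h}-t)_{+}=-\tfrac{1}{2}\theta'(t)$, and Gaussian concentration of the two relevant 1-Lipschitz functionals promotes these into convergence of expectations, so $D_{1}'(t;n)\to-2\tilde\theta(t)=\theta'(t)$. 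Convexity of $D_{1}(\cdot;n)$ (Lemma~\ref{lem:Dp_det_prop} Property~3) then promotes this pointwise convergence to local uniform convergence of $D_{1}'(\cdot;n)\to\theta'$ on compact subsets of $(0,\infty)$.

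Armed with these two convergences, Lemma~\ref{lem:theta_explicit} (item~2) identifies $g_{1}(t;n)\bydef D_{1}(t;n)-\tfrac{t}{2}D_{1}'(t;n)\to\theta(t)-\tfrac{t}{2}\theta'(t)=\erfc(t/\sqrt{2})$. Since both $g_{1}(\cdot;n)$ and the limit are strictly decreasing (Lemma~\ref{lem:Dp_det_prop} Property~7), the unique positive solutions satisfy $t^{*}_{1}(\delta;n)\to t^{*}_{1}(\delta)=\sqrt{2}\,\erfc^{-1}(\delta)$, and combined with local uniform convergence of derivatives this yields
\begin{equation*}
-t^{*}_{1}\,D_{1}'(t^{*}_{1};n)\ \longrightarrow\ -t^{*}_{1}(\delta)\,\theta'(t^{*}_{1}(\delta))\ =\ 2\bigl(\delta-\theta(t^{*}_{1}(\delta))\bigr),
\end{equation*}
where the last equality used Lemma~\ref{lem:theta_explicit} (item~2) together with $\erfc(t^{*}_{1}(\delta)/\sqrt{2})=\delta$. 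Expanding $\theta$ via item~1 of the same lemma rewrites $\delta-\theta(t^{*}_{1})=t^{*}_{1}\bigl(\sqrt{2/\pi}\,\e^{-(t^{*}_{1})^{2}/2}-\delta\,t^{*}_{1}\bigr)=2t^{*}_{1}\bigl(\varphi(t^{*}_{1})-t^{*}_{1}\Phi(-t^{*}_{1})\bigr)$ after substituting $\delta=2\Phi(-t^{*}_{1})$ with $\Phi,\varphi$ the standard Gaussian cdf and pdf. The classical Mills-ratio inequality $t\Phi(-t)<\varphi(t)$ for $t>0$ settles~\eqref{eq:UnivLowBoundDprime1}. Finally, plugging $D^{*}=\theta(t^{*}_{1}(\delta))$ into the algebraic identity $(\alpha^{*}_{1})^{2}=D^{*}/[\delta(\delta-D^{*})]=1/(\delta-D^{*})-1/\delta$ (valid since $D^{*}<\delta$) and using the expression just derived for $\delta-D^{*}$ yields~\eqref{eq:AlphaD1}.

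The main obstacle I anticipate is the joint limit $D_{1}'(t^{*}_{1}(\delta;n);n)\to\theta'(t^{*}_{1}(\delta))$, in which both the function and its argument depend on $n$: pointwise convergence of derivatives is not enough. The resolution is that each $D_{1}'(\cdot;n)$ is monotone (because $D_{1}(\cdot;n)$ is convex by Lemma~\ref{lem:Dp_det_prop} Property~3), which---together with pointwise convergence to a continuous limit---upgrades to locally uniform convergence by a Dini-type argument. Everything else is either direct computation or a straightforward invocation of the already-established structural properties of $\theta$ and $D_{p}$.
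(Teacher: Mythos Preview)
Your proposal is correct and follows essentially the same strategy as the paper: sandwich $D_{1}(t;n)$ between $\theta(t)-1/n$ and $\theta(t)$ via Poincar\'e, deduce convergence of derivatives, identify the limit equation $\theta(t)-\tfrac{t}{2}\theta'(t)=\erfc(t/\sqrt{2})$ through Lemma~\ref{lem:theta_explicit}, and then do the algebra for $\alpha^{*}_{1}$.

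Two tactical points where the paper is more economical. First, for the derivative limit the paper never writes down $D_{1}'(t;n)$ explicitly: it just uses the convexity sandwich
\[
\frac{D_{1}(t;n)-D_{1}(t-h_{n};n)}{h_{n}}\ \leq\ D_{1}'(t;n)\ \leq\ \frac{D_{1}(t+h_{n};n)-D_{1}(t;n)}{h_{n}}
\]
with $h_{n}=n^{-1/2}$, and the uniform bound $\abs{D_{1}(t;n)-\theta(t)}\leq 1/n$ then forces both sides to $\theta'(t)$. This bypasses your LLN/uniform-integrability step for the ratio $T_{n}/\sqrt{S_{n}}$ (which is fine but needs a word about small $S_{n}$) and already delivers the local uniformity you then recover via Dini. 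Second, the Mills-ratio argument for positivity is correct but unnecessary: once you know the limit equals $-t^{*}_{1}(\delta)\,\theta'(t^{*}_{1}(\delta))$, strict monotonicity of $\theta$ gives $\theta'<0$ and hence positivity immediately. The paper additionally extracts a finite-$n$ lower bound $-t^{*}_{1}D_{1}'(t^{*}_{1};n)\geq D_{1}(t^{*}_{1};n)-D_{1}(2t^{*}_{1};n)\geq \theta(t^{*}_{1})-\theta(2t^{*}_{1})-1/n$ via the same convexity, which is what Theorem~\ref{thm:frob_of_l1} actually consumes; your limit-only argument suffices for the lemma as stated.
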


\begin{proof}[Proof of Lemmas~\ref{le:genericboundTp}-\ref{le:boundp2}-\ref{le:boundp1}]
~
\begin{itemize}
  \item {\bf Step 1.} We prove that for all $p,n,t$
  \begin{equation}
  \label{eq:D1Dp}
    D_p(t;n)  \geq D_1(t;n) \geq D_p\left(t n^{1-1/p};n\right).
  \end{equation}
  The inequalities $\norm{\cdot}_{p^{*}} \geq \norm{\cdot}_\infty 
  \geq n^{-1/p^{*}} \norm{\cdot}_{p^{*}} = n^{-(1-1/p)} \norm{\cdot}_{p^{*}}$ imply the inclusions $\mathcal{C}_{t,p} \subset \mathcal{C}_{t,1} \subset \mathcal{C}_{t n^{1-1/p},p}$. It follows that $d(\cdot,\mathcal{C}_{t,p}) \geq d(\cdot,\mathcal{C}_{t,1}) \geq d(\cdot,\mathcal{C}_{t n^{1-1/p},p})$, which yields~\eqref{eq:D1Dp}.
  \item {\bf Step 2.} We establish that for any $p,n,\delta$, with the shorthand $t^{*}_{p} = t^{*}_{p}(\delta;n)$,
  \begin{eqnarray}
    \label{eq:lowerBoundDp1}
    D_{p}(t^{*}_{p}/2;n) &\geq& \delta\\
    \label{eq:upperBoundDp}  
    D_{p}(t^{*}_{p};n)  & \leq & \delta
  \end{eqnarray}
       
  With the additional shorthand $D_{p}(t) = D_{p}(t;n)$, since $D_{p}(t)$ is convex, we have for all $t,h$
  \begin{equation}
  \label{eq:DpConvex}
  D_{p}(t+h) \geq D_{p}(t) + h D'_{p}(t).
  \end{equation}
  Applying it to $t=t^{*}_{p}$ and $h=-t^{*}_{p}/2$ and using the definition of $t^{*}_{p}$, we get 
  \[
  D_{p}(t^{*}_{p}/2) \geq D_{p}(t^{*}_{p})-(t^{*}_{p}/2)D'_{p}(t^{*}_{p}) = \delta
  \] 
  i.e.~\eqref{eq:lowerBoundDp1} holds.  Since $D_{p}(t)$ is non-increasing, we have $D'_{p}(t) \leq  0$ and $D_{p}(t) -\tfrac{t}{2} D_p'(t) \geq D_{p}(t)$ for any $t$. Applying to $t=t^{*}_{p}$ this establishes~\eqref{eq:upperBoundDp} by definition of $t^{*}_{p}$. 
  
  \item {\bf Step 3.} We show that for any $p,n,t$
  \begin{equation}\label{eq:UpperBoundDpBarDp}
    D_p(t;n) \leq \bar{D}_{p}(t;n) \bydef \tfrac{1}{n} \E~\dist^2(\vh, \norm{\,\cdot\,}_{p^*} \leq t)
  \end{equation}
  and that for $p=1$
  \begin{equation} \label{eq:barD1theta}
   \bar{D}_{1}(t;n) = \theta(t)
  \end{equation}
  is independent of $n$.
  By the concavity of square root and Jensen's inequality we can write
  \[
    n D_{p}(t;n) =  \left(\E~\dist(\vh, \norm{\,\cdot\,}_{p^*} \leq t)\right)^{2} \leq \E~\dist^2(\vh, \norm{\,\cdot\,}_{p^*} \leq t) = n \bar{D}_{p}(t;n).
  \]
  This establishes~\eqref{eq:UpperBoundDpBarDp}.
    
  For $p = 1$, since $p^* = \infty$ we can compute
  \begin{align*}
    n \bar{D}_{1}(t) \bydef \E~\dist^2(\vh, \norm{\,\cdot\,}_{\infty} \leq t) 
    &= \E~\norm{\vh - \proj_{\norm{\,\cdot\,}_{\infty} \leq t} \vh}_{2}^2 = \E~\sum_{i = 1}^n (h_i - (\proj_{\norm{\,\cdot\,}_{\infty} \leq t} \vh)_i)^2 \\
    & = \sum_{i = 1}^n \E  (\abs{h_{i}} - t)_{+}^2 = n \theta(t).
  \end{align*}
    
  \item {\bf Step 4.} Combining the previous steps we get for any $p,\delta,n$:
  \begin{eqnarray*}
    \theta(t^{*}_{p} n^{1/p-1}/2) & \stackrel{\eqref{eq:barD1theta}}{=} & \bar{D}_{1}(t^{*}_{p} n^{1/p-1}/2)
     \stackrel{\eqref{eq:UpperBoundDpBarDp}}{\geq} D_{1}(t^{*}_{p} n^{1/p-1}/2) \stackrel{\eqref{eq:D1Dp}}{\geq} D_{p}(t^{*}_{p}/2) \stackrel{\eqref{eq:lowerBoundDp1}}{\geq} \delta
  \end{eqnarray*}
  This yields~\eqref{eq:TMax}.
  \item {\bf Step 5.} We show that for any $p,n,t$
  \begin{equation}\label{eq:LowerBoundDpBarDp}
    D_p(t;n) \geq \bar{D}_{p}(t;n) -\tfrac{1}{n}.
  \end{equation}
  Indeed, the function $f: \vh \mapsto f(\vh) \bydef \dist(\vh, \norm{\,\cdot\,}_{p^{*}} \leq t)$ is $1$-Lipschitz, and $\vh$ is a standard normal Gaussian variable, hence we can apply Lemma \ref{lem:concentration_results}\ref{lemitem:varoflip}.

  \item {\bf Step 6.} Combining with the previous steps yields for any $p,\delta$, and $n>\tfrac{1}{1-\delta}$:
  \begin{eqnarray*}
  \theta(t^{*}_{p}) & \stackrel{\eqref{eq:barD1theta}}{=} & \bar{D}_{1}(t^{*}_{p})
    \stackrel{\eqref{eq:LowerBoundDpBarDp}}{\leq} D_{1}(t^{*}_{p})+\tfrac{1}{n} 
    \stackrel{\eqref{eq:D1Dp}}{\leq} D_{p}(t^{*}_{p})+\tfrac{1}{n} \stackrel{\eqref{eq:upperBoundDp}}{\leq} \delta+\tfrac{1}{n} < 1
  \end{eqnarray*}
  For $n \geq\tfrac{2}{1-\delta}$, we have $\delta+\tfrac{1}{n} \leq \tfrac{1+\delta}{2} < 1$ which yields~\eqref{eq:TMin}. For $n \geq \max(\tfrac{2}{1-\delta},1/\delta)$, we have $\delta+\tfrac{1}{n} \leq \min(\tfrac{1+\delta}{2},2\delta) < 1$ which yields~\eqref{eq:TMin2}.
      
  \item {\bf Step 7.} To establish~\eqref{eq:lowerBoundDp} we  start with the expression~\eqref{eq:CpDevelopedExpression}, with the shorthand $q(t) \bydef \sqrt{n D_{p}(t)}$:
  \begin{equation}
    \label{eq:lbnd_Cp}
  \begin{aligned}
    -\tfrac{t}{2} D_p'(t)
    &= -\tfrac{1}{n} q(t) \left( (n+1) q(t) - Z_n t^{n+3} \int_{\vb \in \mathbb{S}^{n-1}} \int_{\rho = 0}^{\infty} \dist(\rho \vb, {\cal C}_1) \rho^{n+1} \e^{-\rho^2 t^2 / 2}  \, \mu(\di \vb) \, \di \rho \right) \\
    -\tfrac{t}{2} D_p'(t) + \tfrac{n+1}{n} q^{2}(t)    & = \frac{q(t)}{n} Z_n t^{n+3} \int_{\vb \in \mathbb{S}^{n-1}} \int_{\rho = 0}^{\infty} \dist(\rho \vb, {\cal C}_1) \rho^{n+1} \e^{-\rho^2 t^2 / 2}  \, \mu(\di \vb) \, \di \rho \\
     & \stackrel{r = t\rho}{=} \frac{q(t)}{n} \int_{\vb \in \mathbb{S}^{n-1}} \int_{\rho = 0}^{\infty} t \dist(r\vb/t, {\cal C}_1) r^{n+1} p_{\vh}(r \vb) \mu(\di \vb) \, \di r \\
     & \stackrel{\vy=r\vb}{=}  \frac{q(t)}{n} \int_{\R^n} t \dist(\vy / t, \mathcal{C}_1) \norm{\vy}^2 p_\vh(\vy) \di \vy \\
     & =  \frac{q(t)}{n} \E [\dist(\vh, \mathcal{C}_t) \norm{\vh}^2].
  \end{aligned}
  \end{equation}
  Since $\tfrac{n+1}{n} q^2(t) = (n+1)D_p(t)$ we can rewrite \eqref{eq:lbnd_Cp} using the definition of $D_p(t)$ as
  \begin{equation}
    \label{eq:Dp_plus_Cp}
    D_p(t) -\tfrac{t}{2} D_p'(t)  = \frac{q(t)}{n} \E[\dist(\vh, {\cal C}_t) (\norm{\vh}^2 - n)].
  \end{equation}
  Observing further that $\E [\norm{\vh}^2 - n] = 0$ and $\E [\dist(\vh, {\cal C}_t)] = q(t)$, the following holds:
  \begin{eqnarray*}
    D_p(t) -\tfrac{t}{2} D_p'(t) 
    & = & \frac{q(t)}{n}\ \E\left[(\dist(\vh, {\cal C}_t) - q(t)) (\norm{\vh}^2 - n)\right] \\
    & = & \sqrt{D_{p}(t)}\ \E\left[\left(\dist(\vh, {\cal C}_t) - q(t)\right) \frac{\norm{\vh}^2 - n}{\sqrt{n}} \right] \\
    & \leq & \sqrt{D_{p}(t)}\ \E\left[\bigg|\dist(\vh, {\cal C}_t) - q(t)\bigg| \abs{\frac{\norm{\vh}^2 - n}{\sqrt{n}}} \right] \\
    & = & \sqrt{D_{p}(t)}\  \int_0^{\infty} \prob \left[ \bigg|\dist(\vh, {\cal C}_t) - q(t)\bigg| \abs{\frac{\norm{\vh}^2 - n}{\sqrt{n}}} \geq \epsilon \right] \di \epsilon \\
  \end{eqnarray*}
  The integrand can be controlled by a union bound as
  \[
    \prob \left[ \bigg|\dist(\vh, {\cal C}_t) - q(t)\bigg| \abs{\frac{\norm{\vh}^2 - n}{\sqrt{n}}} \geq \epsilon \right]
    \leq 
    \prob \left[ \bigg|\dist(\vh, {\cal C}_t) - q(t)\bigg| \geq \sqrt{\epsilon} \right]
    +
    \prob \left[ \abs{\frac{\norm{\vh}^2 - n}{\sqrt{n}}} \geq \sqrt{\epsilon} \right] 
  \]
  which together with Lemma \ref{lem:concentration_results}\ref{lemitem:lipgausabs} for the first term and Lemma \ref{lem:concentration_results}\ref{lemitem:norm2abs} for the second term yields $D_p(t) -\tfrac{t}{2} D_p'(t) \leq C_{n} \sqrt{D_{p}(t)}$ with
  \begin{eqnarray*}
    C_{n} & \bydef &  \int_{0}^\infty 2\left(\e^{-\epsilon / 2} +
    \begin{cases}
      \e^{-\frac{\epsilon}{8}} & \text{for $0 \leq \epsilon \leq n$}, \\
      \e^{-\frac{\sqrt{\epsilon n}}{8}} &\text{for $\epsilon > n$}
    \end{cases}
    \right) \di \epsilon
        \\
        &\leq& \int_{0}^\infty 2\left(\e^{-\epsilon / 2} + \e^{-\frac{\epsilon}{8}} + \e^{-\frac{\sqrt{\epsilon n}}{8}}  
        \right) \di \epsilon
    \\
        &\leq& \int_{0}^\infty 2\left(\e^{-\epsilon / 2} + \e^{-\frac{\epsilon}{8}} + \e^{-\frac{\sqrt{\epsilon}}{8}}  
        \right) \di \epsilon 
        \bydef C < \infty.
  \end{eqnarray*}
\item {\bf Step 8. (Proof of Lemma~\ref{le:boundp2})}. Since $\dist(\vh,\norm{\cdot}_{p^{*}} \leq t) = (\norm{\vh}_{2}-t)_{+}$, we have $D_{2}(t;n) = \tocheck{\tfrac{1}{n}} \left(\E (\norm{\vh}_2-t)_+\right)^{2}$. With $d(t) \bydef \sqrt{D_{2}(t;n)} = \tfrac{1}{\sqrt{n}} \E (\norm{\vh}_2-t)_+$ we have
\begin{eqnarray*}
d'(t) &=& -\tfrac{1}{\sqrt{n}} \E [ \ind{\norm{\vh}_2>t} ] = -\tfrac{1}{\sqrt{n}} \prob(\norm{\vh}_2>t)
\end{eqnarray*}
and $-(t/2)D'_{2}(t) = -t d(t) d'(t)$. With a change of variables $\tau = t/\sqrt{n}$, define
\begin{eqnarray*}
F_{1}(\tau;n) & \bydef & D_{2}(\tau\sqrt{n};n)  =   \left\{\E \left(\tfrac{\norm{\vh}_2}{\sqrt{n}} -\tau\right)_+\right\}^{2} \\
F_{2}(\tau;n) & \bydef & -\tfrac{\tau\sqrt{n}}{2}D'_{2}(\tau\sqrt{n};n) =  \E \left(\tfrac{\norm{\vh}_2}{\sqrt{n}} -\tau\right)_+ \cdot \tau \prob\left(\tfrac{\norm{\vh}_2}{\sqrt{n}}>\tau\right)\\
F(\tau;n) & \bydef & F_{1}(\tau;n)+F_{2}(\tau;n)
\end{eqnarray*}
Since $\norm{\vh}_{2}$ concentrates around $\sqrt{n}$ for large $n$, as we show below for any $0<\tau<1$
\begin{eqnarray}
\lim_{n \to \infty} F_{1}(\tau;n) \bydef F_1(\tau) &=& (1-\tau)^{2}\label{eq:limF1}\\
\lim_{n \to \infty} F_{2}(\tau;n) \bydef F_2(\tau) &=& (1-\tau)  \tau\label{eq:limF2}
\end{eqnarray}
It follows that for $\delta \in (0,1)$ we have $\lim_{n \to \infty} F(1-\delta;n)=\delta$ and 
\begin{eqnarray}
\lim_{n \to \infty} \tfrac{t^{*}_{2}(\delta;n)}{\sqrt{n}} & = & 1-\delta\notag\\ 
\lim_{n \to \infty} -(t^{*}_{2}/2) D'_{2}(t^{*}_{2};n) & = & F_{2}(\delta) = (1-\delta)\delta\notag\\
\lim_{n \to \infty }D(t^{*}_{2};n) & = & F_{1}(\delta) = \delta^{2}\notag\\
\lim_{n \to \infty}\alpha^{*}_{2}(\delta;n) & = & \sqrt{\tfrac{\delta^{2}}{\delta(\delta-\delta^{2})}} = 1/\sqrt{1-\delta}.\notag
\end{eqnarray}
which establishes~\eqref{eq:UnivLowBoundDprime2}-\eqref{eq:AlphaD2}. 

To turn these estimates into mathematics we compute for $0<\tau<1$, and $\epsilon>0$
\begin{eqnarray*}
\E \left(\tfrac{\norm{\vh}_2}{\sqrt{n}}-\tau\right)_+ 
&=& 
\E \left\{ \left(\tfrac{\norm{\vh}_2}{\sqrt{n}}-\tau\right)_+ 
\ \Big|\ 
\abs{\tfrac{\norm{\vh}_2}{\sqrt{n}} - 1} > \epsilon \right\} 
\prob\left( \abs{\tfrac{\norm{\vh}_2}{\sqrt{n}} - 1} > \epsilon\right)\\ 
&& + \E \left\{ (\tfrac{\norm{\vh}_2}{\sqrt{n}} -\tau)_+ 
\ \Big|\ 
\abs{\tfrac{\norm{\vh}_2}{\sqrt{n}} -1} \leq \epsilon \right\} 
\prob\left( \abs{\tfrac{\norm{\vh}_2}{\sqrt{n}} -1} \leq \epsilon\right)\\
\prob\left(\tfrac{\norm{\vh}_2}{\sqrt{n}}>\tau\right)
&=& 
\prob\left(\tfrac{\norm{\vh}_2}{\sqrt{n}}>\tau 
\ \Big|\ 
\abs{\tfrac{\norm{\vh}_2}{\sqrt{n}} - 1} > \epsilon \right) 
\prob\left( \abs{\tfrac{\norm{\vh}_2}{\sqrt{n}} - 1} > \epsilon\right)\\ 
&& + \prob \left( \tfrac{\norm{\vh}_2}{\sqrt{n}} >\tau
\ \Big|\ 
\abs{\tfrac{\norm{\vh}_2}{\sqrt{n}} -1} \leq \epsilon \right) 
\prob\left( \abs{\tfrac{\norm{\vh}_2}{\sqrt{n}} -1} \leq \epsilon\right)
\end{eqnarray*}
For any $0<\epsilon < \min(1-\tau, \tfrac{1}{2})$ we get 
\begin{eqnarray*}
    \prob\left( \abs{\tfrac{\norm{\vh}_2}{\sqrt{n}} - 1}  \leq \epsilon\right) & \stackrel{(a)}{\geq} & 1-2\e^{-c n \epsilon^2}\\
    \E \left\{ \left(\tfrac{\norm{\vh}_2}{\sqrt{n}}-\tau\right)_+ 
    \ \Big|\ 
    \abs{\tfrac{\norm{\vh}_2}{\sqrt{n}}-1} \leq \epsilon
    \right\}
    &\geq&  1-\tau-\epsilon > 0\\
    \prob \left(\tfrac{\norm{\vh}_2}{\sqrt{n}}>\tau
    \ \Big|\ 
    \abs{\tfrac{\norm{\vh}_2}{\sqrt{n}}-1} \leq \epsilon
    \right)
    &=&1\\
    \E \left(\tfrac{\norm{\vh}_2}{\sqrt{n}}-\tau\right)_+ &\geq&  (1 - \tau-\epsilon) (1 - 2e^{-c n \epsilon^2})\notag\\
    \prob\left(\tfrac{\norm{\vh}_2}{\sqrt{n}}>\tau\right) & \geq & (1-2e^{-c n \epsilon^2})\notag
\end{eqnarray*}
where $(a)$ follows from Lemma \ref{lem:concentration_results}\ref{lemitem:gausnormabs} by noting that for $0 < \epsilon \leq \tfrac{1}{2}$, 
$\sqrt{\tfrac{1}{1 - \epsilon}} \leq 1 + \epsilon$. Hence, with $\epsilon = (1-\tau)/n^{1/4}$, 
\begin{eqnarray}
F_{1}(\tau;n) & \geq & (1-\tau)^{2} (1-n^{-1/4})^{2} (1-2e^{-c (1-\tau)^{2} n^{1/2}})^{2}\label{eq:lowerBoundF1}\\
F_{2}(\tau;n) & \geq & \tau(1-\tau) (1-n^{-1/4}) (1-2e^{-c (1-\tau)^{2}n^{1/2}})^{2}\label{eq:lowerBoundF2}
\end{eqnarray}

For an upper bound, denote $c_n \bydef \tfrac{1}{\sqrt{n}} \E \norm{\vh}_2$. Since $\tfrac{n}{\sqrt{n+1}} \leq \E \norm{\vh}_{2} \leq \sqrt{n}$ (cf~\eqref{eq:lowerBoundExpNormH} for the lower bound, Jensen's inequality for the upper bound) we have $1 \geq c_{n} \geq \sqrt{\tfrac{n}{n+1}} \geq 1 - \tfrac{1}{2n}$ so that $0 \leq 2(1 - c_n) \leq 1/n$. By Jensen's inequality, for $0<\tau<1$,
\begin{eqnarray*}
\E \left(\tfrac{\norm{\vh}_2}{\sqrt{n}}-\tau\right)_+ 
&\leq &
\sqrt{\E \left(\tfrac{\norm{\vh}_2}{\sqrt{n}}-\tau\right)_+^2}
\leq
\sqrt{\E \left(\tfrac{\norm{\vh}_2}{\sqrt{n}}-\tau\right)^2}
=
\sqrt{1-2c_n \tau + \tau^2}\\
&=& \sqrt{ (1-\tau)^2+2(1-c_n)\tau}
 \leq  \sqrt{(1-\tau)^{2} + \tau/n} = (1-\tau) \sqrt{1+\tfrac{\tau}{(1-\tau)^{2}n}} 
\end{eqnarray*}
so that 
\begin{eqnarray}
F_{1}(\tau;n) &\leq& (1-\tau)^{2} 
+\tau/n 
\label{eq:upperBoundF1}\\
F_{2}(\tau;n) &\leq& \tau(1-\tau) \sqrt{1+\tfrac{\tau}{(1-\tau)^{2}n}}
\leq \tau (1-\tau) \left(1+\tfrac{\tau}{(1-\tau)^{2}n}\right)\label{eq:upperBoundGtau}
\end{eqnarray}
Combining all of the above yields~\eqref{eq:limF1}-\eqref{eq:limF2}.
  \item {\bf Step 9. (Proof of Lemma~\ref{le:boundp1})}.
By~\eqref{eq:TMax} and~\eqref{eq:TMin} we have for any $n \geq 2/(1-\delta)$
    \[
    0<t_{\min}(\delta) \bydef \theta^{-1}(\tfrac{1+\delta}{2}) \leq t^{*}_{1}(\delta;n) \leq 2\theta^{-1}(\delta) \bydef t_{\max}(\delta).
    \]    
    By the continuity of $\theta$ and its strict monotonicity, we have
\begin{equation}\label{eq:DefAuxiliaryTheta}
V(\delta) \bydef \inf_{t \in [t_{\min}(\delta),t_{\max}(\delta)]} 
\left\{
\theta(t)-\theta(2t)
\right\} >0
\end{equation}
Applying~\eqref{eq:DpConvex} (convexity of $D_{1}(t)$) with $h= t = t^{*}_{1}$ yields 
$D_{1}(2t^{*}_{1}) \geq D_{1}(t^{*}_{1}) + t^{*}_{1} D'_{1}(t^{*}_{1})$, hence
\begin{eqnarray*}
-t^{*}_{1}D'_{1}(t^{*}_{1};n) 
\geq D_{1}(t^{*}_{1};n)-D_{1}(2t^{*}_{1};n) 
&\stackrel{\eqref{eq:UpperBoundDpBarDp} \&\eqref{eq:LowerBoundDpBarDp}}{\geq}&
 \bar{D}_{1}(t^{*}_{1};n)-\bar{D}_{1}(2t^{*}_{1};n)-\tfrac{1}{n}\\
&\stackrel{\eqref{eq:barD1theta}}{=}& 
\theta(t^{*}_{1})-\theta(2t^{*}_{1}) -\tfrac{1}{n}\\
&\geq& V(\delta) -\tfrac{1}{n}.
\end{eqnarray*}
For $n \geq N(\delta) \bydef \max(2/V(\delta),2/(1-\delta))$ we obtain $-t^{*}_{1}D'_{1}(t^{*}_{1};n) \geq V(\delta)/2 \bydef \gamma(\delta) > 0$ which establishes~\eqref{eq:UnivLowBoundDprime1}.

By~\eqref{eq:UpperBoundDpBarDp}-\eqref{eq:barD1theta}-\eqref{eq:LowerBoundDpBarDp} we have $\theta(t)-1/n \leq D_{1}(t;n) \leq \theta(t)$ for all $t$ and $n$. Hence, the sequence of convex differentiable functions $\{D_{1}(\, \cdot \, ;n)\}_{n}$ converges uniformly to the convex and smooth function $\theta(t)$. It is a classical exercise in convex analysis to show that this implies the convergence of derivatives: indeed, since $D_{1}(\cdot;n)$ is convex we have for $h_{n} = n^{-1/2}$ with $n>t^{-2}$ (so that $t-h_{n}>0$):
\begin{eqnarray*}
D'_{1}(t;n) 
& \leq & \frac{D_{1}(t+h_{n};n)-D_{1}(t;n)}{h_{n}} 
\leq \frac{\theta(t+h_{n})-\theta(t)+\tfrac{1}{n}}{h_{n}} 
= \frac{\theta(t+h_{n})-\theta(t)}{h_{n}}+\tfrac{1}{\sqrt{n}} \\
D'_{1}(t;n) 
& \geq & \frac{D_{1}(t;n)-D_{1}(t-h_{n};n)}{h_{n}} 
\geq \frac{\theta(t)-\theta(t-h_{n})-\tfrac{1}{n}}{h_{n}}
= \frac{\theta(t)-\theta(t-h_{n})}{h_{n}}-\tfrac{1}{\sqrt{n}} \\
\end{eqnarray*}
It follows that $\lim_{n \to \infty} D'_{1}(t;n) = \theta'(t)$. Just as for the case $p=2$ we have shown that
$\lim_{n \to \infty} \left\{D_{1}(t;n)-(t/2) D'_{1}(t;n)\right\} = \theta(t)-(t/2)\theta'(t)$.
By Lemma~\ref{lem:theta_explicit}, $\theta(t)-(t/2)\theta'(t) = \erfc(t / \sqrt{2})$ so the unique $t = t^*_{1}(\delta)$ such that $\theta(t)-(t/2)\theta'(t) = \delta$ is 
\[
t^{*}_{1}(\delta) = \sqrt{2}\erfc^{-1}(\delta).
\]
With the same reasoning as for the case $p=2$ above we get that
\begin{eqnarray*}
\lim_{n \to \infty} t^{*}_{1}(\delta;n) & = & t^{*}_{1}(\delta)\\
\lim_{n \to \infty} -t^{*}_{1} D'_{1}(t^{*}_{1};n) & = & -\tfrac{t^*_{1}(\delta)}{2} \theta'(t^*_{1}(\delta)) > 0\\
\lim_{n \to \infty} D_{1}(t^{*}_{1};n) & = & \theta(t^*_{1}(\delta))\\
\lim_{n \to \infty} \alpha^{*}_{1}(\delta;n) & = & \sqrt{\frac{\theta(t^*_{1}(\delta))}{\delta(\delta-\theta(t^*_{1}(\delta)))}}.
\end{eqnarray*}
Since $\erfc(t^{*}_{1}/\sqrt{2}) = \delta$ we have $\theta(t^{*}_{1}) = \delta -\left(\sqrt{\tfrac{2}{\pi}} \e^{-\tfrac{(t^{*}_{1})^{2}}{2}} t^{*}_{1} - \delta (t^{*}_{1})^{2}\right)$ and we conclude that
\begin{equation}
    \lim_{n \to \infty} \alpha_{1}^{*}(\delta;n) = \sqrt{\frac{\delta-(\delta-\theta(t^*_{1}))}{\delta(\delta-\theta(t^*_{1}))}} 
    = \sqrt{\frac{1}{\sqrt{\tfrac{2}{\pi}} \e^{-\tfrac{(t_1^*)^2}{2}} t^{*}_{1} - \delta(t_1^*)^2} -\tfrac{1}{\delta}}
    \end{equation}
\end{itemize}
\end{proof}

\begin{lemma}[Deterministic properties of $\kappa(\alpha, \beta)$]\label{le:ArgMinKappa}
Consider $1 \leq m<n$ two integers, $\delta \bydef (m-1)/n$, $1 \leq p \leq \infty$, $\kappa(\alpha,\beta)$ defined in~\eqref{eq:DefKappa}, $D_{p}(t)$ defined in \eqref{eq:DefDp}.
The following hold:
\begin{enumerate}
\item The function $\kappa(\alpha,\beta)$ is convex-concave and proper on $[0,\infty) \times [0,\infty)$, hence the function 
\begin{equation}
\kappa(\alpha) \bydef \sup_{0 \leq \beta \leq 1} \kappa(\alpha,\beta)
\end{equation}
is convex on $[0,\infty)$, and for any $A>0$ the function
\begin{equation}
\underline{\kappa}_{A}(\beta) \bydef \inf_{0 \leq \alpha \leq A} \kappa(\alpha,\beta)
\end{equation}
is concave on $[0,\infty)$.

\item The scalar $t^{*} = t^{*}_{p}(\delta;n)$ (cf. Lemma~\ref{lem:Dp_det_prop}-Property 7) is well defined, with $D_{p}(t^{*}) < \delta$.
\item Define
\begin{equation}\label{eq:kappaminimizerexplicit}
\alpha^{*} = \alpha^{*}(\delta;n) \bydef \sqrt{\tfrac{D_{p}(t^{*})}{\delta(\delta-D_{p}(t^{*}))}}
\end{equation} 
For $A > \alpha^{*}$ and $\lambda \leq t^{*}$ we have
\begin{equation}\label{eq:kappaminimizer}
\argmin_{\alpha : 0 \leq \alpha \leq A} \max_{\beta: 0 \leq \beta \leq 1} \kappa(\alpha,\beta) 
= \alpha^{*}.
\end{equation} 
The corresponding optimal $\beta$ is $\beta^{*} = \beta^{*}(\lambda,\delta;n) \bydef \lambda/t^{*}$.

\item For $A>\alpha^{*}$, $\lambda \leq t^{*}$, $0<\epsilon \leq \max(\alpha^{*},A-\alpha^{*})$ we have 
\begin{equation}\label{eq:DefOmega}
\inf_{\abs{\alpha-\alpha^{*}} \geq \epsilon} \kappa(\alpha)-\kappa(\alpha^{*}) 
= \inf_{\abs{\alpha-\alpha^{*}}=\epsilon} \kappa(\alpha)-\kappa(\alpha^{*}) \geq 
\omega(\epsilon) = \omega_{p}(\epsilon;n,\delta,\lambda) \bydef
\tfrac{\epsilon^2}{2} 
\tfrac{\lambda \delta/t^{*}}{(1 + \delta(\alpha^* + \epsilon)^2)^{3/2}}.
\end{equation}
Observe that for the considered range of $\lambda$ and $\epsilon$, we have $\omega(\epsilon) \leq 1/2$.
\end{enumerate}
\end{lemma}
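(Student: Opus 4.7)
The plan is to dispatch Properties~1 and~2 from structural observations about $\kappa$ and $D_p$, and then to reduce Properties~3 and~4 to one-dimensional convex calculations by exploiting the scaling identity $\dist(\beta \vh,\mathcal{C}_\lambda) = \beta \dist(\vh,\mathcal{C}_{\lambda/\beta})$ for $\beta>0$, which gives the clean rewriting $\Delta_p(\beta;n,\lambda) = \beta \sqrt{D_p(\lambda/\beta;n)}$.

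For Property~1, I will observe that $\alpha \mapsto \sqrt{\delta\alpha^2+1}$ is convex and non-negative while $\alpha \mapsto -\alpha \Delta_p(\beta;n,\lambda)$ is affine, so $\kappa(\alpha,\beta)$ is convex in $\alpha$ for every $\beta \geq 0$; conversely, $\beta \mapsto \beta \sqrt{\delta\alpha^2+1}$ is linear, and since $\vy \mapsto \dist(\vy,\mathcal{C}_\lambda)$ is convex (as the distance to a convex set) the function $\beta \mapsto \dist(\beta \vh,\mathcal{C}_\lambda)$ is convex as a composition with an affine map, so $\Delta_p(\beta;n,\lambda)$ is convex and $-\alpha \Delta_p(\beta;n,\lambda)$ is concave in $\beta$ for $\alpha\geq 0$. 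Convexity of $\kappa(\alpha)$ then follows from the fact that a pointwise supremum of convex functions is convex, and similarly concavity of $\underline{\kappa}_{A}(\beta)$ follows from an infimum of concave functions being concave. Property~2 is a direct invocation of Lemma~\ref{lem:Dp_det_prop}(7), and the strict inequality $D_p(t^{*})<\delta$ comes from $g(t^{*})=\delta$ together with $-\tfrac{t^{*}}{2} D'_p(t^{*})>0$ (since $D_p$ is strictly decreasing).

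For Property~3, I will substitute $\Delta_p(\beta;n,\lambda) = \beta \sqrt{D_p(\lambda/\beta;n)}$ into $\kappa(\alpha,\beta)$ and set the saddle-point KKT conditions. The stationarity in $\alpha$ yields $\delta \alpha = \sqrt{\delta\alpha^2+1}\, q(t)$ with $q(t) := \sqrt{D_p(t;n)}$ and $t := \lambda/\beta$, which, after squaring, gives exactly the closed form \eqref{eq:kappaminimizerexplicit}. Stationarity in $\beta$, after computing $\partial_\beta[\beta q(\lambda/\beta)] = q(t) - t q'(t)$ and using $g(t) = q(t)[q(t)-t q'(t)]$, reduces to the single equation $g(t)=\delta$; by Lemma~\ref{lem:Dp_det_prop}(7) its unique solution in $(0,\infty)$ is $t = t^{*}$, hence $\beta^{*}=\lambda/t^{*}$. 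Feasibility of $\beta^{*}\in[0,1]$ and $\alpha^{*}\in[0,A]$ is guaranteed exactly by the hypotheses $\lambda\leq t^{*}$ and $A>\alpha^{*}$, and uniqueness of the saddle then follows from Property~1 via standard convex--concave saddle-point theory.

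For Property~4, I first use convexity of $\kappa(\alpha)$ with minimum at $\alpha^{*}$ to conclude that $\kappa(\alpha)-\kappa(\alpha^{*})$ is non-decreasing in $|\alpha-\alpha^{*}|$ on each side of $\alpha^{*}$, so the infimum over $\{|\alpha-\alpha^{*}|\geq \epsilon\}\cap[0,A]$ is attained on $\{|\alpha-\alpha^{*}|=\epsilon\}\cap[0,A]$ (the hypothesis $\epsilon\leq \max(\alpha^{*},A-\alpha^{*})$ guarantees this intersection is non-empty). To bound the increment I will use the weak duality inequality $\kappa(\alpha)-\kappa(\alpha^{*}) \geq \kappa(\alpha,\beta^{*})-\kappa(\alpha^{*},\beta^{*})$ and compute $\kappa(\alpha,\beta^{*}) = (\lambda/t^{*})\psi(\alpha)$ with $\psi(\alpha) := \sqrt{\delta\alpha^{2}+1} - \alpha q(t^{*})$. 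The key computation, which I expect to be the main technical obstacle only in its bookkeeping, is $\psi''(\alpha) = \delta/(1+\delta\alpha^{2})^{3/2}$, which is strictly positive; Taylor's theorem with integral remainder around $\alpha^{*}$ (recalling $\psi'(\alpha^{*})=0$) and monotonicity of $\alpha \mapsto (1+\delta\alpha^{2})^{-3/2}$ on $[\alpha^{*}-\epsilon,\alpha^{*}+\epsilon]$ then yield $\psi(\alpha)-\psi(\alpha^{*}) \geq \tfrac{\epsilon^{2}}{2}\cdot \delta/(1+\delta(\alpha^{*}+\epsilon)^{2})^{3/2}$, and multiplying through by $\lambda/t^{*}$ reproduces precisely $\omega(\epsilon)$.
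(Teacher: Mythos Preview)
Your proposal is correct and follows essentially the same approach as the paper: Property~1 via the same convexity/concavity observations, Property~4 via the same strong-convexity bound on $\alpha\mapsto\kappa(\alpha,\beta^{*})$ with the identical second derivative $\beta^{*}\delta/(1+\delta\alpha^{2})^{3/2}$, and Property~3 via the same scaling identity $\Delta_p(\beta;n,\lambda)=\beta\sqrt{D_p(\lambda/\beta)}$ leading to the equation $g(t)=\delta$. The only substantive differences are organizational: for Property~3 the paper performs the sequential optimization $\min_{\alpha}$ then $\max_{\beta}$ (invoking Sion to justify the swap) and checks the boundary behavior of $\underline{\kappa}_{A}(\beta)$ explicitly, whereas you write the interior KKT system directly and appeal to convex--concave saddle-point uniqueness; both routes yield the same equations. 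Two small points you should fill in: for Property~2 the invocation of Lemma~\ref{lem:Dp_det_prop}(7) requires $\delta<D_p(0)$, which the paper verifies via $\delta=(m-1)/n\leq(n-2)/n\leq n/(n+1)\leq D_p(0)$; and you have not addressed the final clause $\omega(\epsilon)\leq 1/2$, which the paper obtains by the one-line estimate $\omega(\epsilon)\leq \tfrac{\delta(\alpha^{*})^{2}}{2(1+\delta(\alpha^{*})^{2})^{3/2}}\leq \tfrac12$.
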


\begin{proof}[Proof of Lemma~\ref{le:ArgMinKappa}]
\hfill
\begin{enumerate}
\item It is obvious that $\kappa$ is proper. One easily checks that $\alpha \mapsto \sqrt{\delta \alpha^{2}+1}$ is convex by checking the non-negativity of its second derivative, hence $\kappa$ is convex in $\alpha$. The concavity in $\beta$ follows from the convexity of $\beta \mapsto \Delta_{p}(\beta;\vh,\lambda)$ which is a distance to a convex set \cite[Example 3.16]{Boyd:2004uz}, and the fact that the expectation $\Delta_{p}(\beta,\lambda)$ of a convex function is convex.
As a result, $\kappa(\alpha)$ is convex and $\underline{\kappa}(\beta)$ is concave.
\item We have $\delta \bydef (m - 1) / n < 1$, and by Lemma~\ref{lem:Dp_det_prop}, Property 4, $D_{p}(0) \geq \tfrac{n}{n+1}$. Because we consider the underdetermined case, $1 \leq m < n$, we have $n \geq 2$ and
\[
\delta \leq (n - 2) / n \leq n / (n + 1) \leq D_p(0).
\]

\item Since $\kappa(\alpha,\beta)$ is convex-concave and proper, and the constraint sets in~\eqref{eq:kappaminimizer} are convex and compact, we can change the order of maximization and minimization \cite[Corollary 3.3]{Sion:1958jm}. 

\begin{itemize}
\item Consider the minimization over $\alpha$ first. 

For $\beta = 0$ and any $\lambda \geq 0$ we have $\Delta_{p}(\beta,\lambda) = 0$ hence $\underline{\kappa}_{A}(0) = \inf_{0 \leq \alpha \leq A} \kappa(\alpha,0) = 0$.

For $\beta>0$, observing that $\Delta_{p}(\beta,\lambda) = \beta \sqrt{D_{p}(\lambda/\beta)}$ we rewrite 
\[
\kappa(\alpha,\beta) = \beta(\sqrt{\delta\alpha^{2}+1}-\alpha \sqrt{D_{p}(\lambda/\beta)}).
\]
With $D_{p}^{-1}(y) \bydef \inf \set{t: D_{p}(t) \leq y}$ the reciprocal of the strictly decreasing function $D_{p}$, 
we have  
\[
D_{p}(\lambda/\beta) \leq \frac{A^{2}\delta^{2}}{1+\delta A^{2}}
\]
if and only if $0<\beta \leq \tilde{\beta} \bydef \tfrac{\lambda}{D_{p}^{-1}\left(A^{2}\delta^{2}/(1+\delta A^{2})\right)}$. Since $\tfrac{A^{2}\delta^{2}}{1+\delta A^{2}} < \delta$, this implies that for $0<\beta \leq \tilde{\beta}$ we can define
\begin{equation}
     \tilde{\alpha}(\beta) \bydef \sqrt{\frac{D_p(\lambda/\beta)}{\delta(\delta - D_{p}(\lambda/\beta))}}.
\end{equation}
and check that $\tilde{\alpha}(\beta) \leq A$. By studying the sign of $\parder{\kappa(\alpha,\beta)}{\alpha} = \beta\left(\tfrac{\delta\alpha}{\sqrt{\delta \alpha^{2}+1}}-\sqrt{D_{p}(\lambda/\beta)}\right)$, we get that $\alpha \mapsto\kappa(\alpha,\beta)$ has a unique minimizer on $[0,A]$ which is precisely $\alpha^*(\beta) = \tilde{\alpha}(\beta)$. It follows that for $0<\beta \leq \tilde{\beta}$ we have:
 \begin{equation}\label{eq:ArgMinAlphaGivenBeta}
   \underline{\kappa}_{A}(\beta) = \min_{0 \leq \alpha \leq A} \kappa(\alpha,\beta) = \kappa(\alpha^*(\beta), \beta) = \beta \sqrt{\frac{\delta - D_p(\lambda/\beta)}{\delta}}. 
\end{equation}

\item Consider now the maximization over $\beta$.  For $0 < \beta \leq \tilde{\beta}$ the expression of $\underline{\kappa}_{A}(\beta)$ is given by~\eqref{eq:ArgMinAlphaGivenBeta}.
The sign of $\underline{\kappa}'_{A}(\beta)$ is that of $\delta-g(\lambda/\beta)$ with $g(t) \bydef D_{p}(t)-\tfrac{t}{2}D'_{p}(t)$ as in Lemma~\ref{lem:Dp_det_prop}-Property~7. Hence, we have: $\underline{\kappa}'_{A}(\beta) >0$ if $t=\lambda/\beta>t^{*}$ (that is to say if $\beta < \beta^{*} \bydef \lambda/t^{*}$); $\underline{\kappa}'_{A}(\beta) <0$ if $\beta > \beta^{*}$; and $D_{p}(t^{*}) < \delta$.

We check that $A>\alpha^{*}$ implies $D_{p}(t^{*}) < A^{2}\delta^{2}/(1+\delta A^{2})$ i.e. $\beta^{*} < \tilde{\beta}$. Combined with the fact that $\underline{\kappa}(0) = 0$, this shows that the supremum of $\underline{\kappa}(\beta)$ over $[0,\tilde{\beta}]$ is indeed achieved uniquely at $\beta^{*}$. This also implies that $\underline{\kappa}(\beta)$ is strictly decreasing for $\beta^{*} < \beta \leq \tilde{\beta}$. Being concave, $\underline{\kappa}_{A}(\beta)$ must be also strictly decreasing for $\beta \geq \tilde{\beta}$, so the supremum over $[0,\infty)$ is indeed achieved at $\beta^{*}$.
Since $\lambda \leq t^{*}$ we further have $\beta^{*} \leq 1$ hence this is also the supremum over $\beta \in [0,1]$.
\end{itemize}
To summarize, the optimal $\beta$ is $\beta^{*} = \lambda/t^{*}$, and the corresponding optimal $\alpha$ is given as
\begin{equation}
    \alpha^*(\beta^*) = \tilde{\alpha}(\lambda/t^{*}) = \sqrt{\frac{D_p(t^*)}{\delta(\delta - D_p(t^*))}}.
\end{equation}

\item The assumption $\epsilon \leq \max(\alpha^{*},A-\alpha^{*})$ ensures that the set $\set{\alpha: \abs{\alpha-\alpha^{*}} \geq \epsilon,\ 0 \leq \alpha \leq A}$ is not empty. Since $\kappa(\alpha)$ is convex on $[0,\infty)$ with its minimum at $\alpha^{*}$, we have
    \begin{align*}
       \omega(\epsilon)
       &\bydef \inf_{\substack{\alpha : \abs{\alpha-\alpha^*} \geq \epsilon\\ 0 \leq \alpha \leq A}} \kappa(\alpha)-\kappa(\alpha^{*})
       = \min_{\substack{\alpha : \abs{\alpha - \alpha^*} = \epsilon\\ 0 \leq \alpha \leq A}} \kappa(\alpha) - \kappa(\alpha^*)
    \end{align*}

    Since $A>\alpha^{*}$ and $\lambda \leq t^{*}$, we have $0 < \beta^* < 1$. The second derivative of $\alpha \mapsto \kappa(\alpha, \beta^*)$ with respect to $\alpha$ reads:
    \[
        \parderr{\kappa(\alpha, \beta^*)}{\alpha} = \frac{\beta^* \delta}{(1 + \delta \alpha^2)^{3/2}} > 0.
    \]
    This implies that on $[0,A] \cap [\alpha^* - \epsilon, \alpha^* + \epsilon]$ the function $\alpha \mapsto \kappa(\alpha, \beta^*)$ is strongly convex with strong convexity modulus $\frac{\beta^* \delta}{(1 + \delta(\alpha^* + \epsilon)^2)^{3/2}}$. Since $\alpha \mapsto \kappa(\alpha,\beta^{*})$ is minimum at $\alpha^{*}$, it holds that
    \[
        \kappa(\alpha^* \pm \epsilon, \beta^*) \geq \kappa(\alpha^*, \beta^*) + \frac{\epsilon^2}{2} \frac{\beta^* \delta}{(1 + \delta(\alpha^* + \epsilon)^2)^{3/2}}
    \]

    Furthermore, from the definition of $\kappa(\alpha)$ and $\beta^*$, we have that $\kappa(\alpha) \geq \kappa(\alpha, \beta^*)$ for any $\alpha$, with equality for $\alpha = \alpha^*$. The claim therefore follows.
    
    Since $\lambda \leq t^{*}$ and $\epsilon \leq \alpha^{*}$ we have $\omega(\epsilon) \leq \tfrac{\delta (\alpha^{*})^{2}}{2(1+\delta (\alpha^{*})^{2})^{3/2}} = f\left((1+\delta(\alpha^{*})^{2})^{-1/2}\right)$ with $f(u) \bydef \tfrac{1}{2} (1/u^{2}-1)u^{3} = \tfrac{u-u^{3}}{2} \leq 1/2$ for $0 < u \leq 1$. Hence, $\omega(\epsilon) \leq 1/2$.
        \end{enumerate}
\end{proof}
Invoking a lemma from \cite{Hjort:1993tq} we show that $\argmin_\alpha \phi(\alpha)$ concentrates around $\argmin_{\alpha}
\kappa(\alpha)$.

\begin{lemma}[{\cite[Lemma 2]{Hjort:1993tq}}] \label{le:hjortpollard}
Let $f(t)$ be a random convex function on some
    open set $S \subset \R^p$, and let $t_f$ be (one of) its minimizer(s). Consider
    another function $g(t)$ (which we interpret as approximating $f$), such
    that it has a unique argmin $t_g$. Then for each $\epsilon > 0$, we have
    that:
    \begin{equation}
        \label{eq:hjortpollard}
        \prob[\norm{t_f - t_g} \geq \epsilon] \leq \prob[\sup_{\norm{s - t_g}
        \leq \epsilon} \norm{f(s) - g(s)} \geq \tfrac{1}{2} \inf_{\norm{s - t_g} = \epsilon} (g(s) - g(t_g))]
    \end{equation}
\end{lemma}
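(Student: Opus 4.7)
The plan is to prove the containment of events $\{\norm{t_f - t_g} \geq \epsilon\} \subseteq \{\sup_{\norm{s - t_g} \leq \epsilon} \norm{f(s) - g(s)} \geq \tfrac{1}{2} \inf_{\norm{s - t_g} = \epsilon} (g(s) - g(t_g))\}$ pointwise (i.e., for every realization of $f$), and then take probabilities of both sides. Everything boils down to exploiting the convexity of $f$ via a one-dimensional ``radial contraction'' argument.

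First I would fix any realization such that $\norm{t_f - t_g} \geq \epsilon$, and set $\lambda \bydef \epsilon / \norm{t_f - t_g} \in (0,1]$ and $s^{\star} \bydef (1-\lambda) t_g + \lambda t_f$, i.e.\ the point where the segment from $t_g$ to $t_f$ first hits the sphere of radius $\epsilon$ around $t_g$. Clearly $\norm{s^{\star} - t_g} = \epsilon$, so in particular $s^{\star}$ lies in the closed $\epsilon$-ball around $t_g$. Then, using convexity of $f$ together with $f(t_f) \leq f(t_g)$ (which holds because $t_f$ is a minimizer of $f$), one obtains the key estimate
\begin{equation*}
f(s^{\star}) \leq (1-\lambda) f(t_g) + \lambda f(t_f) \leq f(t_g).
\end{equation*}

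Next I would denote $\Delta \bydef \sup_{\norm{s - t_g} \leq \epsilon} \norm{f(s) - g(s)}$. Since both $s^{\star}$ and $t_g$ lie in the $\epsilon$-ball and $f(s^{\star}) \leq f(t_g)$, the telescoping bound
\begin{equation*}
g(s^{\star}) - g(t_g) = \bigl(g(s^{\star}) - f(s^{\star})\bigr) + \bigl(f(s^{\star}) - f(t_g)\bigr) + \bigl(f(t_g) - g(t_g)\bigr) \leq \Delta + 0 + \Delta = 2 \Delta
\end{equation*}
holds. Because $\norm{s^{\star} - t_g} = \epsilon$, the left-hand side dominates $\inf_{\norm{s - t_g} = \epsilon} (g(s) - g(t_g))$, so that $\Delta \geq \tfrac{1}{2} \inf_{\norm{s - t_g} = \epsilon} (g(s) - g(t_g))$. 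This is exactly the event on the right-hand side of~\eqref{eq:hjortpollard}, so we have shown the desired pointwise inclusion; taking $\prob[\,\cdot\,]$ of both sides concludes the proof.

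There is no real obstacle here: the proof is entirely deterministic up to the final step, and the convexity of $f$ does all the work through the radial contraction $t_f \mapsto s^{\star}$. The only mild care needed is to check that the segment from $t_g$ to $t_f$ lies in the open set $S$ so that $f$ is defined on it (which follows from convexity of $S$, implicit in the setup since $f$ is convex on $S$), and the uniqueness of $t_g$ is not actually used in the argument itself but only ensures that the right-hand side of~\eqref{eq:hjortpollard} is unambiguous.
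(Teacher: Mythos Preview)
The paper does not actually prove this lemma: it is stated with a citation to \cite[Lemma~2]{Hjort:1993tq} and used as a black box, so there is no in-paper proof to compare against. Your argument is correct and is in fact the standard proof of the Hjort--Pollard lemma: the radial contraction $s^{\star}=(1-\lambda)t_g+\lambda t_f$ onto the $\epsilon$-sphere, the convexity estimate $f(s^{\star})\leq f(t_g)$, and the three-term telescoping bound are exactly how the original reference proceeds.

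One very minor remark: your aside that ``convexity of $S$ is implicit since $f$ is convex on $S$'' is fine (convexity of a function presupposes a convex domain), but you also silently use $t_g\in S$ so that $f(t_g)$ and the segment $[t_g,t_f]$ make sense; this is not stated in the lemma but is clearly intended and harmless in the paper's application, where everything lives on an interval.
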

The role of $f(t)$ (resp. $t_{f}$ will be played by $\phi(\alpha)$ (resp $\alpha^{*}_{\phi}$), and the role of $g(t)$ (resp. $t_{g}$) by $\kappa(\alpha)$ (resp. $\alpha^{*}$).
\begin{lemma}\label{le:empiricalminmax}
        Let $A > \alpha^{*}$, $\lambda \leq t^{*}$, with $\alpha^{*}$ defined as in Lemma~\ref{le:ArgMinKappa}-Equation~\eqref{eq:kappaminimizer} and $t^{*} = t^{*}_{p}(\delta;n)$ as in Lemma~\ref{lem:Dp_det_prop}-Property 7. Consider the random function $\phi(\alpha)$ defined as in Corollary~\ref{cor:empiricalmaxbeta}, and
        \begin{equation}
        \alpha^*_\phi \bydef \argmin_{0 \leq \alpha \leq A} \phi(\alpha).
        \end{equation} 
        For $0<\epsilon\leq \max(\alpha^{*},A-\alpha^{*})$ 
       we have
        \begin{equation}
            \prob \left[ \abs{\alpha^*_\phi - \alpha^*} \geq \epsilon \right] 
            \leq\ \zeta\big(n,\tfrac{\omega(\epsilon)}{2}\big) 
        \end{equation}
        with $\zeta(n,\epsilon) = \zeta(n,\epsilon;A,\delta)$ defined in~\eqref{eq:DefZeta} and $\omega(\epsilon)$ defined in~\eqref{eq:DefOmega}.
\end{lemma}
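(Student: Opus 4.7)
The strategy is to apply the Hjort--Pollard comparison Lemma~\ref{le:hjortpollard} with $f = \phi$ and $g = \kappa$ on the interval $[0,A]$, and then convert the resulting probability bound on the $\sup$-deviation into $\zeta(n,\omega(\epsilon)/2)$ via Corollary~\ref{cor:empiricalmaxbeta}. Concretely, the plan has four steps.

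First I would verify that $\phi(\alpha) = \sup_{0\leq\beta\leq 1}\phi(\alpha,\beta;\vg,\vh)$ is convex on $[0,A]$. This holds because for each fixed $\beta \in [0,1]$ the function $\alpha \mapsto \phi(\alpha,\beta;\vg,\vh) = \beta\|\alpha\vg/\sqrt{n}-\ve_{1}\| - (\alpha/\sqrt{n})\dist(\beta\vh,\|\cdot\|_{p^{*}}\leq\lambda)$ is the sum of a norm of an affine function of $\alpha$ (convex) and a linear function of $\alpha$ (convex), and a pointwise supremum of convex functions is convex. Simultaneously, $\kappa(\alpha)$ is convex on $[0,\infty)$ by Lemma~\ref{le:ArgMinKappa}-Property~1, and its minimizer on $[0,A]$ is the unique point $\alpha^{*}$ by Lemma~\ref{le:ArgMinKappa}-Property~3 (using $A>\alpha^{*}$ and $\lambda\leq t^{*}$).

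Second, I apply Lemma~\ref{le:hjortpollard} with the random convex function $f = \phi$ (any minimizer $\alpha^{*}_\phi$) and the deterministic convex function $g = \kappa$ with unique minimizer $\alpha^{*}$, yielding
\[
  \prob\big[\,|\alpha^{*}_\phi-\alpha^{*}|\geq\epsilon\,\big]
  \leq
  \prob\Big[\,\sup_{|s-\alpha^{*}|\leq\epsilon}\!|\phi(s)-\kappa(s)| \geq \tfrac{1}{2}\inf_{|s-\alpha^{*}|=\epsilon}\big(\kappa(s)-\kappa(\alpha^{*})\big)\Big].
\]
The hypothesis $\epsilon\leq\max(\alpha^{*},A-\alpha^{*})$ ensures that the set $\{s\in[0,A]:|s-\alpha^{*}|=\epsilon\}$ is non-empty so both infimum and supremum are taken over well-defined subsets of $[0,A]$. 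Lemma~\ref{le:ArgMinKappa}-Property~4 then gives the deterministic lower bound $\inf_{|s-\alpha^{*}|=\epsilon}(\kappa(s)-\kappa(\alpha^{*}))\geq \omega(\epsilon)$, so the threshold on the right-hand side is at least $\omega(\epsilon)/2$.

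Third, I enlarge the supremum from $\{|s-\alpha^{*}|\leq\epsilon\}$ to all of $[0,A]$ (this only makes the event larger) and apply the uniform concentration bound~\eqref{eq:concentration_of_max_beta} from Corollary~\ref{cor:empiricalmaxbeta}:
\[
  \prob\Big[\,\sup_{0\leq\alpha\leq A}|\phi(\alpha)-\kappa(\alpha)|\geq \tfrac{\omega(\epsilon)}{2}\Big]
  \leq \zeta\big(n,\tfrac{\omega(\epsilon)}{2}\big),
\]
which gives the claimed bound. Before invoking the corollary I must check that $0<\omega(\epsilon)/2<2$; positivity is immediate since $\omega(\epsilon)>0$ for $\epsilon>0$, and the upper bound $\omega(\epsilon)\leq 1/2$ (hence $\omega(\epsilon)/2\leq 1/4<2$) is the last statement of Lemma~\ref{le:ArgMinKappa}-Property~4.

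I do not expect any genuine obstacle: the main subtlety is the bookkeeping at the edges, namely that Hjort--Pollard is stated for an open set in $\R^{p}$ whereas here the optimization lives on the compact interval $[0,A]$. This is handled by extending $\phi$ and $\kappa$ to $+\infty$ outside $[0,A]$ (preserving convexity and minimizers), or equivalently by rerunning the short convexity argument behind Hjort--Pollard directly on $[0,A]$: if $|\alpha^{*}_\phi-\alpha^{*}|\geq\epsilon$, convexity of $\phi$ and optimality of $\alpha^{*}_\phi$ produce a point $s\in[0,A]$ with $|s-\alpha^{*}|=\epsilon$ and $\phi(s)\leq\phi(\alpha^{*})$; combined with uniform $\omega(\epsilon)/2$-closeness of $\phi$ to $\kappa$ on the segment $[\alpha^{*},s]$, this contradicts $\kappa(s)-\kappa(\alpha^{*})\geq\omega(\epsilon)$.
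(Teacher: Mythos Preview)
Your proposal is correct and follows essentially the same approach as the paper: apply Hjort--Pollard (Lemma~\ref{le:hjortpollard}) with $f=\phi$, $g=\kappa$, lower-bound the right-hand threshold by $\omega(\epsilon)/2$ via Lemma~\ref{le:ArgMinKappa}-Property~4, enlarge the supremum to all of $[0,A]$, and invoke Corollary~\ref{cor:empiricalmaxbeta} after checking $\omega(\epsilon)/2\leq 1/4<2$. Your write-up is in fact more careful than the paper's, explicitly justifying the convexity of $\phi(\alpha)$ and the compact-interval adaptation of Hjort--Pollard, both of which the paper leaves implicit.
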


\begin{proof}
Since $0<\epsilon\leq \max(\alpha^{*},A-\alpha^{*})$ the set $\set{\alpha: 0 \leq \alpha \leq A,\ \abs{\alpha-\alpha^{*}} \leq \epsilon}$ is a non-empty subset of $[0,\ A]$, and 
\[
\sup_{\alpha: 0 \leq \alpha \leq A,\  \abs{\alpha - \alpha^{*}} \leq \epsilon} \abs{\phi(\alpha) - \kappa(\alpha)}
\leq 
\sup_{0 \leq \alpha \leq A} \abs{\phi(\alpha) - \kappa(\alpha)}.
\]
Since $\phi$ is a random convex function, we can apply Lemma~\ref{le:hjortpollard} to obtain
\begin{eqnarray*}
\prob \left[ \abs{\alpha^{*}_{\phi}-\alpha^{*}} \geq \epsilon\right]
& \leq &    
\prob \left[ \sup_{\alpha: 0 \leq \alpha \leq A,\ \abs{\alpha - \alpha^{*}} \leq \epsilon} \abs{\phi(\alpha) - \kappa(\alpha)} \geq  
\tfrac{\omega(\epsilon)}{2} 
\right]
    \leq
    \prob \left[ \sup_{0 \leq \alpha \leq A} \abs{\phi(\alpha) - \kappa(\alpha)} \geq \tfrac{\omega(\epsilon)}{2} \right]\\
    & \leq & \zeta(n,\omega(\epsilon)/2),
\end{eqnarray*}
where we used that $\omega(\epsilon) \leq \inf_{\alpha: 0 \leq \alpha \leq A,\ \abs{\alpha-\alpha^{*}}=\epsilon} \kappa(\alpha)-\kappa(\alpha^{*})$, and the last inequality follows from Corollary~\ref{cor:empiricalmaxbeta} -- which we can use since $\omega(\epsilon)/2 \leq 1/4 < 2$. 
\end{proof}
  
\begin{lemma}
    \label{lem:phi_eps_separation}
    Let $A > \alpha^{*}$, $\lambda \leq t^{*}$ with $\alpha^{*}$ defined as in Lemma~\ref{le:ArgMinKappa}-Equation~\eqref{eq:kappaminimizer} and $t^{*} = t^{*}_{p}(\delta;n)$ as in Lemma~\ref{lem:Dp_det_prop}-Property 7. 
    For $0 \leq \epsilon\leq \max(\alpha^{*},A-\alpha^{*})$, consider the optimal cost of the auxiliary optimization~\eqref{eq:aux_dual_swap} with an altered order of minimization and
    maximization and $\norm{\vz}_{2}$ further restricted to be $\epsilon$-away from $\alpha^{*}$
        \begin{equation}
        \label{eq:opt_for_phi_eps}
        \phi_{\epsilon}(\vg, \vh) \bydef \max_{\substack{\beta : 0 \leq \beta \leq 1\\\vw:\norm{\vw}_{p^*} \leq 1}} \min_{\substack{\vz : \norm{\vz} \leq A\\\norm{\vz} \notin (\alpha^* - \epsilon, \alpha^* + \epsilon)}}  \max_{\vu:\norm{\vu} = \beta} \frac{1}{\sqrt{n}} \norm{\vz} \vg^\T \vu - \frac{1}{\sqrt{n}} \norm{\vu} \vh^\T \vz 
    - \vu^\T \ve_1 + \frac{\lambda}{\sqrt{n}} \vw^\T \vz.
    \end{equation}
 With $\vg \sim \mathcal{N}(\vzero, \mI_m)$, $\vh \sim \mathcal{N}(\vzero, \mI_n)$, $1 \leq m < n$, $\delta = (m-1)/n$ we have
for any $0<\eta<2$: with probability at least $1-\zeta(n, \eta)$,
\begin{equation}
\begin{aligned}
  \phi_{0}(\vg,\vh) & < & \kappa(\alpha^{*}) + \eta,\\
  \phi_\epsilon(\vg, \vh) & > & \kappa(\alpha^{*}) + \omega(\epsilon) - \eta,
\end{aligned}
\end{equation}
with $\zeta(n,\epsilon) = \zeta(n,\epsilon;A,\delta)$ defined in~\eqref{eq:DefZeta} and $\omega(\epsilon)$ defined in~\eqref{eq:DefOmega}.
\end{lemma}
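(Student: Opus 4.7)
The plan is to establish both inequalities on a single high-probability event, namely
\[
\mathcal{E} \bydef \left\{\sup_{\substack{0 \leq \alpha \leq A\\ 0 \leq \beta \leq 1}} \abs{\phi(\alpha,\beta;\vg,\vh) - \kappa(\alpha,\beta)} < \eta\right\},
\]
which by Lemma~\ref{lem:concentration2} (equivalently, the pointwise version of Corollary~\ref{cor:empiricalmaxbeta}) has probability at least $1 - \zeta(n,\eta)$ for $0 < \eta < 2$. On $\mathcal{E}$, I will control $\phi_{0}$ via Sion's theorem and establish the lower bound on $\phi_{\epsilon}$ by a direct saddle-point evaluation, sidestepping the non-convexity of the restricted domain.

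For $\phi_{0}$, the constraint $\norm{\vz} \notin \emptyset$ is vacuous, so the domain of $\vz$ is the convex ball $\norm{\vz} \leq A$. Mimicking the derivation from \eqref{eq:aux_dual_swap_first_line} onwards in the proof of Theorem~\ref{thm:frob_of_l1}, the inner maximum over $\vu$ and the inner minimum over the direction of $\vz$ reduce the problem to
\[
\phi_{0}(\vg,\vh) = \max_{\substack{0 \leq \beta \leq 1\\\norm{\vw}_{p^{*}} \leq 1}}\ \min_{0 \leq \alpha \leq A}\ F(\alpha,\beta,\vw),\qquad F(\alpha,\beta,\vw) \bydef \beta \norm{\tfrac{\alpha \vg}{\sqrt{n}} - \ve_{1}} - \tfrac{\alpha}{\sqrt{n}} \norm{\beta \vh - \lambda \vw}.
\]
Since $F$ is convex in $\alpha$, jointly concave in $(\beta,\vw)$, and all sets are convex and compact, Sion's minimax theorem yields $\phi_{0} = \min_{0 \leq \alpha \leq A} \phi(\alpha;\vg,\vh) = \phi_{[0,A]}$. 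Because $A > \alpha^{*}$, we have $\alpha^{*} \in [0,A]$ and Lemma~\ref{le:ArgMinKappa} gives $\kappa_{[0,A]} = \kappa(\alpha^{*})$. On $\mathcal{E}$, Corollary~\ref{cor:empiricalmaxbeta} then delivers $\phi_{0} < \kappa(\alpha^{*}) + \eta$.

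For $\phi_{\epsilon}$ with $\epsilon > 0$, the same reduction to scalars works but the set $S_{\epsilon} \bydef \{\alpha \in [0,A] : \abs{\alpha - \alpha^{*}} \geq \epsilon\}$ is not convex, so the second Sion swap fails; we only have $\phi_{\epsilon} \leq \phi_{S_{\epsilon}}$, which is the wrong direction for a lower bound. To circumvent this, I will evaluate at the deterministic saddle candidate from Lemma~\ref{le:ArgMinKappa}: set $\beta^{*} \bydef \lambda/t^{*} \in [0,1]$ and $\vw^{*} \in \argmin_{\norm{\vw}_{p^{*}} \leq 1} \norm{\beta^{*} \vh - \lambda \vw}$. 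The crucial observation is that $\vw^{*}$ depends on $\vh$ but \emph{not} on $\alpha$, so $F(\alpha,\beta^{*},\vw^{*}) = \phi(\alpha,\beta^{*};\vg,\vh)$ for every $\alpha$, and therefore
\[
\phi_{\epsilon}(\vg,\vh) \geq \min_{\alpha \in S_{\epsilon}} F(\alpha,\beta^{*},\vw^{*}) = \min_{\alpha \in S_{\epsilon}} \phi(\alpha,\beta^{*};\vg,\vh).
\]
On $\mathcal{E}$ this is at least $\min_{\alpha \in S_{\epsilon}} \kappa(\alpha,\beta^{*}) - \eta$. The strong-convexity argument in the proof of Lemma~\ref{le:ArgMinKappa}-Property~4, applied to the convex function $\alpha \mapsto \kappa(\alpha,\beta^{*})$ whose minimizer on $[0,A]$ is $\alpha^{*}$, gives $\kappa(\alpha,\beta^{*}) \geq \kappa(\alpha^{*},\beta^{*}) + \omega(\epsilon) = \kappa(\alpha^{*}) + \omega(\epsilon)$ for all $\alpha \in S_{\epsilon}$ (using $\kappa(\alpha^{*}) = \kappa(\alpha^{*},\beta^{*})$ by optimality of $\beta^{*}$). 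Combining, $\phi_{\epsilon} > \kappa(\alpha^{*}) + \omega(\epsilon) - \eta$ on $\mathcal{E}$.

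The main obstacle is exactly the non-convexity of $S_{\epsilon}$, which forces the asymmetry between the two bounds: for $\phi_{0}$ the convex domain allows Sion and reduces things to $\phi_{[0,A]}$, directly controlled by uniform concentration; for $\phi_{\epsilon}$ one must instead lower-bound the max-min by plugging in the deterministic saddle pair $(\beta^{*},\vw^{*})$, and the success of this step hinges on the fortunate fact that the $\vw$-optimizer for fixed $\beta$ is independent of $\alpha$, so $F(\cdot,\beta^{*},\vw^{*})$ collapses to the scalar random function $\phi(\cdot,\beta^{*};\vg,\vh)$ on which Lemma~\ref{lem:concentration2} already provides the required uniform control.
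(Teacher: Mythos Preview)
Your argument is correct, and it handles the $\phi_{0}$ bound exactly as the paper does. For $\phi_{\epsilon}$, however, you take a genuinely different route from the paper.

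The paper does not bypass the non-convexity of $S_{\epsilon}$ by plugging in a saddle candidate; instead it splits $S_{\epsilon}$ into its two convex components $S_{\epsilon}^{-}=[0,\alpha^{*}-\epsilon]$ and $S_{\epsilon}^{+}=[\alpha^{*}+\epsilon,A]$, runs the full reduction (including the second Sion swap) on each piece separately to obtain $\phi_{S_{\epsilon}^{\pm}}(\vg,\vh)=\min_{\alpha\in S_{\epsilon}^{\pm}}\max_{0\leq\beta\leq1}\phi(\alpha,\beta;\vg,\vh)$, and then observes that $\phi_{\epsilon}=\min(\phi_{S_{\epsilon}^{-}},\phi_{S_{\epsilon}^{+}})=\phi_{S_{\epsilon}}$ in the notation of Corollary~\ref{cor:empiricalmaxbeta}. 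This is an \emph{identity}, not just a lower bound, and the conclusion then follows directly from that corollary together with $\kappa_{S_{\epsilon}}=\kappa(\alpha^{*})+\omega(\epsilon)$.

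Your approach avoids the split and the second Sion swap entirely: you lower-bound the outer maximization by evaluating at $(\beta^{*},\vw^{*})$ and exploit that the $\vw$-minimizer at fixed $\beta$ is $\alpha$-independent, so that $F(\cdot,\beta^{*},\vw^{*})=\phi(\cdot,\beta^{*};\vg,\vh)$ and the uniform concentration of Lemma~\ref{lem:concentration2} together with the strong-convexity estimate on $\alpha\mapsto\kappa(\alpha,\beta^{*})$ from Lemma~\ref{le:ArgMinKappa} finishes the job. This is a perfectly valid and somewhat more elementary route (no convex-piece bookkeeping, no reinvoking Sion). The paper's route buys the exact identification $\phi_{\epsilon}=\phi_{S_{\epsilon}}$, which makes the proof a one-line application of Corollary~\ref{cor:empiricalmaxbeta}; your route buys a direct argument that does not rely on recognizing the convex decomposition.
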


\begin{proof}
Since $0<\epsilon\leq \max(\alpha^{*},A-\alpha^{*})$ the set $S_{\epsilon} \bydef \set{\alpha: 0 \leq \alpha \leq A,\ \abs{\alpha-\alpha^{*}} \leq \epsilon}$ is a non-empty subset of $[0,\ A]$. Denote  $S^-_{\epsilon} \bydef [0, \alpha^* - \epsilon]$ and $S^+_{\epsilon} = [\alpha^* + \epsilon, A]$ its two convex components (at most one of them may be empty).
Let $\phi_{S^+_{\epsilon}}(\vg, \vh)$ the value of
\eqref{eq:opt_for_phi_eps}, but with $\norm{\vz}$ constrained to lie in $S^{+}_{\epsilon}$ (by convention, this is $+\infty$ when $S_{\epsilon}^{+} = \emptyset$). Similarly define $\phi_{S^-_{\epsilon}}(\vg, \vh)$. When $S^+_{\epsilon}$ is non-empty, since it is convex, we can
effect the same simplifications and min-max swaps as in the proof of Lemma
\ref{lem:concentration_column} (from \eqref{eq:aux_dual_swap_first_line} to \eqref{eq:opt_two_scalars}) to arrive at
\begin{equation*}
  \phi_{S^+_{\epsilon}}(\vg,\vh) = \min_{\alpha : \alpha^* + \epsilon \leq \alpha \leq A} \max_{\beta:0 \leq \beta \leq 1} \phi(\alpha, \beta; \vg, \vh)
\end{equation*}
and similary with $S^{-}_{\epsilon}$ we get when it is non-empty that
\begin{equation*}
  \phi_{S^-_{\epsilon}}(\vg,\vh) = \min_{\alpha : 0 \leq \alpha \leq \alpha^* - \epsilon} \max_{\beta:0 \leq \beta \leq 1} \phi(\alpha, \beta; \vg, \vh)
\end{equation*}
This shows that $\phi_{\epsilon}(\vg,\vh) = \min(\phi_{S-_{\epsilon}}(\vg,\vh),\phi_{S^+_{\epsilon}}(\vg,\vh)) = \phi_{S_{\epsilon}}(\vg,\vh)$ where the notation $\phi_{S_{\epsilon}}(\vg,\vh)$ matches that used in Corollary~\ref{cor:empiricalmaxbeta}.
Moreover by definition (see Lemma~\ref{le:ArgMinKappa}) we have 
\begin{equation*}
    \kappa_{S_\epsilon} \bydef \min_{\alpha \in S_{\epsilon}} \max_{0 \leq \beta \leq 1} \kappa(\alpha, \beta) = \kappa(\alpha^{*}) + \omega(\epsilon).
\end{equation*} 
By  Corollary \ref{cor:empiricalmaxbeta} we have, for $0<\eta<2$, with probability at least $1-\zeta(n, \eta)$: for all $S \subset [0,A]$, $\abs{\phi_{S}(\vg,\vh)-\kappa_{S}} < \eta$. Specializing to $S = [0,A]$ and $S = S_{\epsilon}$ and combining the above yields 
\begin{align*}
    \phi_{0}(\vg,\vh) & = \phi_{[0,A]}(\vg,\vh) < \kappa_{[0,A]} + \eta = \kappa(\alpha^{*}) + \eta\\
    \phi_{\epsilon}(\vg,\vh) & = \phi_{S_{\epsilon}}(\vg,\vh) > \kappa_{S_{\epsilon}}-\eta = \kappa(\alpha^*) + \omega(\epsilon) - \eta.
\end{align*}
\end{proof}

\begin{lemma}
    \label{lem:minmaxswap}
    Let $K > \alpha^{*}/\sqrt{n}$, $\lambda \leq t^{*}$ with $\alpha^{*}$ defined as in Lemma~\ref{le:ArgMinKappa}-Equation~\eqref{eq:kappaminimizer} and $t^{*} = t^{*}_{p}(\delta;n)$ as in Lemma~\ref{lem:Dp_det_prop}-Property 7.  Denote by $\wt{\vx}_K$ any optimal solution of \eqref{eq:po_bounded}.
    For $0 < \epsilon\leq \max(\alpha^{*},K\sqrt{n}-\alpha^{*})$ we have
    \[
        \prob \big[ \abs{\norm{\wt{\vx}_K} - \tfrac{\alpha^*}{\sqrt{n}}} \geq \tfrac{\epsilon}{\sqrt{n}}  \big]  \leq 4 \zeta\big(n, \tfrac{\omega(\epsilon)}{2}\big).
    \]
    with $\zeta(n,\epsilon)=\zeta(n,\epsilon;K\sqrt{n},\delta)$ defined in~\eqref{eq:DefZeta} and $\omega(\epsilon)$ defined in~\eqref{eq:DefOmega}.
\end{lemma}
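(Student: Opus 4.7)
The plan is to combine the Convex Gaussian Min-Max Theorem (Theorem~\ref{thm:cgmt}) with the two-sided separation of auxiliary optimization values provided by Lemma~\ref{lem:phi_eps_separation}. Let $\Phi(\mA)$ denote the optimal value of the (rescaled) principal optimization~\eqref{eq:po_bounded}, and let $\Phi_\epsilon(\mA)$ denote the same min-max value with the additional constraint $\abs{\sqrt{n}\norm{\vx} - \alpha^*} \geq \epsilon$ (equivalently, after the change of variables $\vz = \sqrt{n}\vx$, the constraint $\norm{\vz} \notin (\alpha^* - \epsilon,\ \alpha^* + \epsilon)$ as in~\eqref{eq:opt_for_phi_eps}).

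First I would observe that if the event $E \bydef \set{\abs{\sqrt{n}\norm{\wt{\vx}_K} - \alpha^*} \geq \epsilon}$ occurs, then $\wt{\vx}_K$ is itself feasible for the restricted problem, so $\Phi_\epsilon(\mA) \leq \Phi(\mA)$; combined with the trivial $\Phi_\epsilon(\mA) \geq \Phi(\mA)$ this forces $\Phi_\epsilon(\mA) = \Phi(\mA)$. Hence for any threshold $c \in \R$,
\begin{equation*}
E \ \subseteq\ \set{\Phi(\mA) > c}\ \cup\ \set{\Phi_\epsilon(\mA) \leq c}.
\end{equation*}

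The key step is then to apply the CGMT separately to each of these two events. The full problem $\Phi(\mA)$ is a bilinear-plus-convex-concave min-max over convex compact sets, so Theorem~\ref{thm:cgmt} Part~\ref{thmpart:cgmt_optval_concentrate} applies and yields $\prob[\Phi(\mA) > c] \leq 2 \prob[\phi_0(\vg,\vh) \geq c]$, where $\phi_0$ matches the unrestricted AO value of Lemma~\ref{lem:phi_eps_separation}. For the restricted problem, the constraint set for $\vz$ is the union $[0, \alpha^* - \epsilon] \cup [\alpha^* + \epsilon, K\sqrt{n}]$ of two disjoint convex pieces and is therefore no longer convex, but it is still compact and the objective is continuous, so Theorem~\ref{thm:cgmt} Part~\ref{thmpart:1} still applies and gives $\prob[\Phi_\epsilon(\mA) \leq c] \leq 2 \prob[\phi_\epsilon(\vg,\vh) \leq c]$ (up to a harmless infinitesimal shift to convert the strict inequality $<$ in the statement of Part~\ref{thmpart:1} into $\leq$). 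Setting $c \bydef \kappa(\alpha^*) + \omega(\epsilon)/2$ and $\eta \bydef \omega(\epsilon)/2$, which satisfies $0 < \eta \leq 1/4 < 2$ thanks to the bound $\omega(\epsilon) \leq 1/2$ from Lemma~\ref{le:ArgMinKappa}, Lemma~\ref{lem:phi_eps_separation} guarantees that $\phi_0(\vg,\vh) < c$ and $\phi_\epsilon(\vg,\vh) > c$, each with probability at least $1 - \zeta(n, \omega(\epsilon)/2)$. A union bound then yields
\begin{equation*}
\prob[E] \ \leq\ \prob[\Phi(\mA) > c] + \prob[\Phi_\epsilon(\mA) \leq c] \ \leq\ 4\, \zeta\bigl(n, \tfrac{\omega(\epsilon)}{2}\bigr),
\end{equation*}
which is precisely the claim.

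The only non-routine point will be the application of Theorem~\ref{thm:cgmt} Part~\ref{thmpart:1} to the restricted problem: one must check that the lack of convexity of the restricted $\vz$-set does not obstruct the Gaussian comparison, which it does not because Part~\ref{thmpart:1} only uses compactness of the constraint sets and continuity of the payoff $\psi$, never convexity. Everything else is bookkeeping: identifying $\phi_0$ and $\phi_\epsilon$ with the AO values that arise from~\eqref{eq:po_bounded} and~\eqref{eq:opt_for_phi_eps} respectively, and collecting the separation constants $\kappa(\alpha^*)$ and $\omega(\epsilon)$ already characterized in Lemma~\ref{le:ArgMinKappa}.
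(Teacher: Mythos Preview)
Your approach is essentially the paper's, and your single-threshold trick (choosing $c = \kappa(\alpha^*) + \omega(\epsilon)/2$ so that both bounds from Lemma~\ref{lem:phi_eps_separation} meet exactly at $c$) is in fact slightly cleaner than the paper's version, which uses two thresholds $c_1 > c_2$ and then passes to the limit $\eta \to \omega(\epsilon)/2$.

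However, there is a real step hiding in what you call ``bookkeeping: identifying $\phi_0$ and $\phi_\epsilon$ with the AO values.'' The auxiliary optimization that the CGMT actually produces is the primal form
\[
\phi^P_\epsilon \;=\; \min_{\vz \in \setS_\epsilon}\ \max_{\substack{\beta,\,\vw}}\ \max_{\norm{\vu}=\beta}\ C(\vz,\vu,\vw),
\]
whereas the quantity $\phi_\epsilon$ appearing in Lemma~\ref{lem:phi_eps_separation} has the order of $\min$ and $\max$ over $(\beta,\vw)$ swapped. These are \emph{not} equal in general; one needs the weak-duality inequality $\phi^P_\epsilon \geq \phi_\epsilon$ (which is what makes $\prob[\phi^P_\epsilon \leq c] \leq \prob[\phi_\epsilon \leq c]$ go the right way). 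Likewise, for the unrestricted problem one has to argue $\phi^P_0 \leq \phi_0$, which the paper obtains by first invoking Sion's theorem ($\phi^P_0 = \phi^D_0$ on convex compact sets with convex--concave $\psi$) and then the reverse swap $\phi^D_0 \leq \phi_0$. Both inequalities are easy, but they are the content of the ``min-max swap'' in the lemma's name, and your phrase ``$\phi_0$ matches the unrestricted AO value'' suggests equality rather than inequality. Once you insert these two one-line inequalities, your proof is complete and correct.
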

\begin{proof}
    Denote $\Phi$ the optimal cost of
    \eqref{eq:po_bounded} and $\Phi_\epsilon$ the corresponding cost when
    $\vz = \vx\sqrt{n}$ is further restricted to $\setS_{\epsilon} \bydef \set{\vz \ : \norm{\vz}_{2} \leq A\ \text{and}\ \abs{\norm{\vz}_2 - \alpha^\star} \geq \epsilon}$, with $A \bydef K\sqrt{n}$:
    \begin{equation}
 \label{eq:value_po_bounded}
  \Phi_{\epsilon} = \min_{\substack{\vx : \norm{\vx} \leq K,\\ \abs{\sqrt{n}\norm{\vx}-\alpha^{*}} \geq \epsilon}} \max_{\vu: \norm{\vu} \leq 1} \vu^\T \mA
 \vx + \lambda \norm{\vx}_p - \vu^\T \ve_1.
\end{equation}
    We now want to show that for $\epsilon>0$ we have with high probability $\Phi_{\epsilon} > \Phi = \Phi_{0}$, because this is equivalent
    to $\wt{\vz}_K = \wt{\vx}_K \sqrt{n} \in \setS'_{\epsilon} \bydef \set{\vz \ : \norm{\vz}_{2} \leq A\ \text{and}\ \abs{\norm{\vz}_2 - \alpha^\star} < \epsilon}$.

    By CGMT part~\ref{thmpart:1} and part~\ref{thmpart:cgmt_optval_concentrate}  (see Theorem~\ref{thm:cgmt} in Appendix~\ref{append:CGMT}), 
    denoting $\phi^{P}_{\epsilon}$ (resp. $\phi^{D}_{\epsilon}$) the optimum value of the ``primal'' (resp. ``dual'') auxiliary optimization problem associated to the principal optimization problem~\eqref{eq:value_po_bounded},
       we have for any $c \in \R$ 
    \begin{equation}\label{eq:minmaxswap_step1}
        \prob [\Phi_{\epsilon} < c] \leq 2 \prob [\phi^P_{\epsilon} \leq c] 
        \quad \text{and} \quad
        \prob [\Phi_{0} > c] \leq 2 \prob [\phi^D_{0} \geq c].
    \end{equation}
For the second one, we use additionally that $\phi^D_{0} = \phi^P_{0}$ since  we optimize over convex sets (being a ball, $\setS_{0}$ is convex) and the penalty $\psi(\vx,\vu) = \lambda \norm{\vx}_{p}-\vu^{\T}\ve_{1}$ is convex-concave (see e.g. \cite[Corollary 3.3]{Sion:1958jm}).

Let $C(\vz, \vu, \vw) = C(\vz,\vu,\vz; \vg,\vh)$ be the objective function in
    \eqref{eq:aux_dual_swap} and~\eqref{eq:opt_for_phi_eps}, so that~\eqref{eq:opt_for_phi_eps}  becomes 
    (with $A \bydef K\sqrt{n}$)
    \[
        \phi_{\epsilon} = \max_{\substack{\beta: 0 \leq \beta \leq 1\\\vw: \norm{\vw}_{p^*} \leq 1}} \min_{\substack{\vz : \norm{\vz} \leq A\\ \vz \in \setS_{\epsilon}}} \max_{\vu: \norm{\vu} = \beta} C(\vz, \vu, \vw),
    \]
    and the optimal cost of \eqref{eq:aux_dual_swap} reads $\phi = \phi_{0}$. With these notations, we have (noting that $\max \min \leq \min \max$ is always true)
    \begin{equation}
    \begin{aligned}
        \phi^P_{\epsilon} 
        &\bydef \min_{\substack{\vz: \norm{\vz} \leq A\\\vz \in \setS_{\epsilon}}} \max_{\substack{\vu: \norm{\vu} \leq 1, \\ \vw: \norm{\vw}_{p^*} \leq 1}} C(\vz, \vu, \vw) 
        = \min_{\substack{\vz: \norm{\vz} \leq A\\\vz \in \setS_{\epsilon}}} \max_{\substack{\beta: 0 \leq \beta \leq 1, \\ \vw: \norm{\vw}_{p^*} \leq 1}} \max_{\vu: \norm{\vu}_2 = \beta} C(\vz, \vu, \vw) \\
        &\geq
        \max_{\substack{\beta: 0 \leq \beta \leq 1, \\ \vw: \norm{\vw}_{p^*} \leq 1}} \min_{\substack{\vz: \norm{\vz} \leq A\\\vz \in \setS_{\epsilon}}} \max_{\vu: \norm{\vu}_2 = \beta} C(\vz, \vu, \vw) \ = \phi_{\epsilon}
    \end{aligned}
    \end{equation}
 and

    \begin{equation}
    \begin{aligned}
        \phi^D_{0}
        & \bydef  \max_{\substack{\vu: \norm{\vu} \leq 1, \\ \vw: \norm{\vw}_{p^*} \leq 1}} \min_{\vz: \norm{\vz} \leq A} C(\vz, \vu, \vw)
        = \max_{\substack{\beta: 0 \leq \beta \leq 1, \\ \vw: \norm{\vw}_{p^*} \leq 1}} \max_{\vu: \norm{\vu}_2 = \beta} \min_{\norm{\vz} \leq A} C(\vz, \vu, \vw) \\
        &\leq
        \max_{\substack{\beta: 0 \leq \beta \leq 1, \\ \vw: \norm{\vw}_{p^*} \leq 1}} \min_{\vz: \norm{\vz} \leq A} \max_{\vu: \norm{\vu}_2 = \beta} C(\vz, \vu, \vw) \ = \phi_{0}.
    \end{aligned}
    \end{equation}
    Denote $\underline{\phi}_{\epsilon} \bydef \kappa(\alpha^{*})+\omega(\epsilon)$ and $\bar{\phi} \bydef \kappa(\alpha^{*})$ and use the above with $c_{1} =  \underline{\phi}_{\epsilon}- \eta$ and $c_{2} =  \bar{\phi} + \eta$ where $\eta>0$ is arbitrary to get  
    \begin{equation}
    \begin{aligned}
        &\prob[\Phi_{\epsilon} 
        <  \underline{\phi}_{\epsilon} - \eta]
        &\leq &\ 2 \prob[\phi^P_{\epsilon} \leq  \underline{\phi}_{\epsilon} - \eta]
        &\leq&\ 2 \prob[\phi_{\epsilon} \leq  \underline{\phi}_{\epsilon} - \eta],\\
        &\prob[\Phi > \bar{\phi} + \eta]
        &\leq &\ 2 \prob[\phi^D \geq \bar{\phi} + \eta]
        &\leq&\ 2 \prob[\phi_{0} \geq \bar{\phi} + \eta].
    \end{aligned}
    \end{equation}
    Consider the event ${\cal E} = \set{\Phi_{\epsilon} \geq
    \underline{\phi}_{\epsilon} - \eta \text{ and } \Phi \leq \bar{\phi} + \eta}$.
     For $0< \eta < (\underline{\phi}_{\epsilon}-\bar{\phi})/2 = \omega(\epsilon)/2$ we have $c_{1} > c_{2}$ hence this
    event implies that $\wt{\vz}_{K} \in \setS'_{\epsilon}$, which is what we wanted to prove. For such $\eta$, since $\omega(\epsilon)/2 \leq 1/4 <2$ we can use Lemma~\ref{lem:phi_eps_separation} and a union bound to obtain that this event happens with probability at least $1 - 4 \zeta(n, \eta)$. Hence, for any $0<\eta<\omega(\epsilon)/2$ we have
    \(
    \prob(\wt{\vz}_{K} \notin \setS_{\epsilon}) \leq 4\zeta(n,\eta).
    \)   
    By continuity of $\eta \mapsto \zeta(n,\eta)$ we take the limit when $\eta$ tends to $\omega(\epsilon)/2$. 
\end{proof}

\begin{lemma}
    \label{lemma:bounded_equals_unbounded}
    Let $K > \alpha^{*}/\sqrt{n}$, $\lambda \leq t^{*}$  with $\alpha^{*}$ defined as in Lemma~\ref{le:ArgMinKappa}-Equation~\eqref{eq:kappaminimizer} and $t^{*} = t^{*}_{p}(\delta;n)$ as in Lemma~\ref{lem:Dp_det_prop}-Property 7.  Denote by $\wt{\vx}_K$ any optimal solution of the random bounded problem \eqref{eq:po_bounded} and $\tilde{\vx}$ any optimal solution of the random unbounded problem~\eqref{eq:lasso}.
    For $0 < \epsilon\leq \min(\alpha^{*},K\sqrt{n}-\alpha^{*})$ (NB: here the upper bound on $\epsilon$ is the $\min$, not the $\max$) 
    we have 
    \[
      \prob [ \tilde{\vx}_K \neq \tilde{\vx}] \leq 
      4 \zeta\big(n, \tfrac{\omega(\epsilon)}{2}\big),
    \]  
    and
    \[
        \prob \big[ \abs{\norm{\wt{\vx}} - \tfrac{\alpha^*}{\sqrt{n}}} \geq \tfrac{\epsilon}{\sqrt{n}}  \big]  \leq 4 \zeta\big(n, \tfrac{\omega(\epsilon)}{2}\big).
    \]
      with $\zeta(n,\epsilon)=\zeta(n,\epsilon;K\sqrt{n},\delta)$ defined in~\eqref{eq:DefZeta} and $\omega(\epsilon)$ defined in~\eqref{eq:DefOmega}.
\end{lemma}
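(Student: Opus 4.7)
The plan is to derive this as an essentially one-line corollary of Lemma~\ref{lem:minmaxswap}, exploiting the tighter hypothesis $\epsilon \leq \min(\alpha^*,K\sqrt{n}-\alpha^*)$ (rather than the $\max$) to force the bounded minimizer $\wt{\vx}_K$ to lie \emph{strictly} inside the ball $\set{\norm{\vx}\leq K}$ on a high-probability event. Once the ball constraint is inactive, $\wt{\vx}_K$ automatically minimizes the unbounded lasso~\eqref{eq:lasso}, and generic uniqueness of that minimizer gives $\wt{\vx}=\wt{\vx}_K$.

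First I would invoke Lemma~\ref{lem:minmaxswap} for the same $\epsilon$, which is admissible since $\min(\alpha^*,K\sqrt{n}-\alpha^*) \leq \max(\alpha^*,K\sqrt{n}-\alpha^*)$. Let
\[
\mathcal{E} \bydef \set{\abs{\norm{\wt{\vx}_K} - \tfrac{\alpha^*}{\sqrt{n}}} < \tfrac{\epsilon}{\sqrt{n}}},\qquad \prob(\mathcal{E}^c) \leq 4\zeta\big(n,\tfrac{\omega(\epsilon)}{2}\big).
\]
On $\mathcal{E}$, combining the strict inequality with the hypothesis $\epsilon \leq K\sqrt{n}-\alpha^*$ yields $\norm{\wt{\vx}_K} < \alpha^*/\sqrt{n} + \epsilon/\sqrt{n} \leq K$, so $\wt{\vx}_K$ lies in the open ball. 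Note that the replacement of $\max$ by $\min$ in the hypothesis is precisely what buys this strict interior containment; with the weaker hypothesis of Lemma~\ref{lem:minmaxswap} one could only conclude $\norm{\wt{\vx}_K} \leq K$ and lose the argument below.

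Next I would invoke a standard convex-analysis observation: the unbounded objective $\vx \mapsto \norm{\mA\vx-\ve_1} + \lambda\norm{\vx}_p$ is convex in $\vx$, and on $\mathcal{E}$ the point $\wt{\vx}_K$ minimizes it over an open neighborhood of itself, so it is a local---hence global---minimizer of the unbounded problem~\eqref{eq:lasso}. Invoking generic uniqueness of the lasso minimizer---implicit throughout this section and already used in the passage between~\eqref{eq:no_lasso} and~\eqref{eq:lasso} via Lemma~\ref{lem:lasso_satisfies_equality}---we conclude $\wt{\vx} = \wt{\vx}_K$ on $\mathcal{E}$, which gives the first claim. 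The second claim comes for free: on $\mathcal{E}$ we have $\norm{\wt{\vx}}=\norm{\wt{\vx}_K}$, hence $\set{\abs{\norm{\wt{\vx}}-\alpha^*/\sqrt{n}} \geq \epsilon/\sqrt{n}} \subseteq \mathcal{E}^c$ and inherits the same bound.

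The only delicate point---and what I would expect to be the main obstacle---is the almost-sure uniqueness of the lasso minimizer: without it, one must instead argue at the level of minimizer sets and show that the interior-point property makes the bounded and unbounded minimizer sets coincide on $\mathcal{E}$ (which is morally what ``$\wt{\vx}=\wt{\vx}_K$'' encodes in the statement). Apart from that technicality, no new probabilistic or analytic ingredients beyond Lemma~\ref{lem:minmaxswap} and the interior-point characterization of convex minimizers are needed.
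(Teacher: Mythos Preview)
Your proposal is correct and matches the paper's approach: invoke Lemma~\ref{lem:minmaxswap}, use $\epsilon \leq K\sqrt{n}-\alpha^{*}$ to force $\norm{\wt{\vx}_K} < K$ on the good event, and conclude that the ball constraint is inactive so the bounded and unbounded problems share minimizers. The paper handles the non-uniqueness issue you flag not by appealing to generic uniqueness but by the set-level workaround you anticipate in your last paragraph, made concrete via a line-segment argument: assuming $\tilde{\vx}\neq\tilde{\vx}_K$ yet $\norm{\tilde{\vx}_K}<K$, one picks the point on the segment from $\tilde{\vx}$ to $\tilde{\vx}_K$ with norm exactly $K$ and uses convexity of the lasso objective to contradict optimality of $\tilde{\vx}_K$.
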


\begin{proof}
To handle the case of non-unique solutions, $\tilde{\vx}$ (resp. $\tilde{\vx}_{K}$) may denote the convex set of solutions of the respective convex optimization problems. The property $\tilde{\vx} \neq \tilde{\vx}_K$ then means that the sets do not intersect, and inequalities such as $f(\tilde{\vx}) > c$ are meant to hold for all elements of the set $\tilde{\vx}$. 

We first prove, by contradiction, that if $\tilde{\vx} \neq \tilde{\vx}_K$, then necessarily $\norm{\tilde{\vx}_{K}} = K$.
Suppose the opposite:  despite the fact that $\tilde{\vx} \neq \tilde{\vx}_K$, we have $\norm{\tilde{\vx}_{K}} < K$. Since $\tilde{\vx} \neq \tilde{\vx}_{K}$ we have $\norm{\tilde{\vx}} > K$. Denoting the lasso objective in \eqref{eq:lasso} by $\theta(\vx)$, this means that $\theta(\tilde{\vx}) < \theta(\tilde{\vx}_K)$. By convexity of $\theta$ it follows that all points on the line segment $\vx_\nu = \nu \tilde{\vx}_K + (1 - \nu) \tilde{\vx}$, $0\leq \nu \leq 1$ satisfy 
    \begin{equation}
        \label{eq:lemma_bounded_is_ok_cvx}
        \theta(\vx_\nu) \leq \theta(\tilde{\vx}_K).         
    \end{equation}   
Since $\norm{\vx_{0}} > K$ and $\norm{\vx_{1}}<K$, by continuity there exists $\nu \in (0,1)$ such that $\norm{\vx_\nu} = K$. Further, by \eqref{eq:lemma_bounded_is_ok_cvx}, $\vx_\nu$ is optimizing the bounded problem \eqref{eq:po_bounded}, contradicting our assumption. 

The contraposition of what we just established is that, if $\norm{\tilde{\vx}_{K}} < K$ then $\tilde{\vx} = \tilde{\vx}_{K}$. 
In particular, since we assume $K\sqrt{n}> \alpha^{*}+\epsilon$, we have: 
if  $\abs{\sqrt{n}\norm{\tilde{\vx}_K}-\alpha^{*}} < \epsilon$ then $\norm{\tilde{\vx}_{K}} < K$ 
hence $\tilde{\vx} = \tilde{\vx}_{K}$ and $\abs{\sqrt{n}\norm{\tilde{\vx}}-\alpha^{*}} = \abs{\sqrt{n}\norm{\tilde{\vx}_K}-\alpha^{*}} < \epsilon$.
It follows that $\prob[ \tilde{\vx}_K \neq \tilde{\vx}] \leq \prob [ \abs{\sqrt{n}\norm{\tilde{\vx}_K}-\alpha^{*}} \geq \epsilon]$ and
\[
\prob [ \abs{\sqrt{n}\norm{\tilde{\vx}}-\alpha^{*}} \geq \epsilon] 
\leq
\prob [ \abs{\sqrt{n}\norm{\tilde{\vx}_K}-\alpha^{*}} \geq \epsilon].
\]
We conclude using Lemma~\ref{lem:minmaxswap}.
\end{proof}

\begin{lemma}
    \label{lem:sqrtn_lipschitz}
    With $\Delta_{p}$ defined as in Lemma~\ref{lem:concentration2}, the function $f_{\vh}(\beta) \bydef \abs{
    \Delta_p(\beta; \vh,\lambda) - \Delta_p(\beta;n,\lambda)}$ is $\max \set{\norm{\vh}/\sqrt{n},1}$-Lipschitz in $\beta$.
\end{lemma}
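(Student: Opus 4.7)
The plan is to establish the Lipschitz property for each of the two terms entering $f_{\vh}$ separately and then combine them, taking advantage of a monotonicity argument so that the two Lipschitz constants combine via a $\max$ rather than a sum.

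First, observe that since the Euclidean distance $\vx \mapsto \dist(\vx, \mathcal{C})$ to any closed set is $1$-Lipschitz, and $\beta \mapsto \beta \vh$ is $\norm{\vh}$-Lipschitz, the composition $\beta \mapsto \dist(\beta \vh, \mathcal{C}_{\lambda, p^{*}})$ is $\norm{\vh}$-Lipschitz. Dividing by $\sqrt{n}$ shows that $\beta \mapsto \Delta_{p}(\beta;\vh,\lambda)$ is $\norm{\vh}/\sqrt{n}$-Lipschitz. Applying this pointwise and then taking expectations with Jensen's inequality yields that $\beta \mapsto \Delta_{p}(\beta;n,\lambda) = \E[\Delta_{p}(\beta;\vh,\lambda)]$ is $\E[\norm{\vh}]/\sqrt{n}$-Lipschitz, and by Jensen again $\E[\norm{\vh}]/\sqrt{n} \leq \sqrt{\E\norm{\vh}^{2}/n} = 1$.

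The key additional ingredient is monotonicity. For $\lambda \geq 0$ the set $\mathcal{C}_{\lambda,p^{*}}$ contains the origin, so $\Delta_{p}(0;\vh,\lambda) = 0$; combined with convexity of $\beta \mapsto \Delta_{p}(\beta;\vh,\lambda)$ (which follows from convexity of the set and the perspective construction used in the proof of Lemma~\ref{lem:Dp_det_prop}) and nonnegativity, this implies that $\beta \mapsto \Delta_{p}(\beta;\vh,\lambda)$ is non-decreasing on $[0,\infty)$: indeed any nonnegative convex function vanishing at $0$ is non-decreasing on the positive half-line. Taking expectations preserves this, so $\beta \mapsto \Delta_{p}(\beta;n,\lambda)$ is non-decreasing as well.

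Putting the pieces together, set $\phi(\beta) \bydef \Delta_{p}(\beta;\vh,\lambda) - \Delta_{p}(\beta;n,\lambda)$, so that $f_{\vh} = \abs{\phi}$. For $\beta_{2} > \beta_{1} \geq 0$, monotonicity gives
\[
0 \leq \Delta_{p}(\beta_{2};\vh,\lambda) - \Delta_{p}(\beta_{1};\vh,\lambda) \leq \tfrac{\norm{\vh}}{\sqrt{n}} (\beta_{2}-\beta_{1}),
\qquad
0 \leq \Delta_{p}(\beta_{2};n,\lambda) - \Delta_{p}(\beta_{1};n,\lambda) \leq (\beta_{2}-\beta_{1}).
\]
Since both increments are nonnegative and upper-bounded, the absolute value of their difference is at most the larger of the two upper bounds, so $\abs{\phi(\beta_{2}) - \phi(\beta_{1})} \leq \max\set{\norm{\vh}/\sqrt{n},1}\,\abs{\beta_{2}-\beta_{1}}$. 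The reverse triangle inequality $\abs{\,\abs{\phi(\beta_{1})} - \abs{\phi(\beta_{2})}\,} \leq \abs{\phi(\beta_{1})-\phi(\beta_{2})}$ then yields the same Lipschitz bound for $f_{\vh}$. The only subtle step is the monotonicity argument that replaces the naive triangle-inequality constant $\norm{\vh}/\sqrt{n} + 1$ by the sharper $\max\set{\norm{\vh}/\sqrt{n},1}$; everything else is routine.
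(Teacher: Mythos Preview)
Your proof is correct and follows the same overall strategy as the paper: bound the Lipschitz constant of $\beta\mapsto\Delta_p(\beta;\vh,\lambda)$ by $\norm{\vh}/\sqrt{n}$ via the $1$-Lipschitz property of the distance function, bound the Lipschitz constant of $\beta\mapsto\Delta_p(\beta;n,\lambda)$ by $1$ via $\E\norm{\vh}\leq\sqrt{n}$, and then combine.

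The genuine difference is in the combination step. The paper simply asserts that ``the Lipschitz constant of the difference of two Lipschitz functions does not exceed the largest of the two Lipschitz constants,'' which is false as a general statement (take $f(\beta)=\beta$ and $g(\beta)=-\beta$). You instead supply the missing ingredient: both $\Delta_p(\cdot;\vh,\lambda)$ and $\Delta_p(\cdot;n,\lambda)$ are \emph{non-decreasing} on $[0,\infty)$ (being nonnegative, convex, and vanishing at $0$), so for $\beta_2>\beta_1$ both increments are nonnegative, and the absolute value of their difference is bounded by the larger increment, hence by $\max\{\norm{\vh}/\sqrt{n},1\}\,(\beta_2-\beta_1)$. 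This monotonicity argument is what actually makes the $\max$ (rather than the sum) valid here; your proof is thus not just an alternative but a repair of the paper's argument. One minor remark: the convexity of $\beta\mapsto\dist(\beta\vh,\mathcal{C})$ follows already from composing the convex function $\vx\mapsto\dist(\vx,\mathcal{C})$ with the linear map $\beta\mapsto\beta\vh$; the perspective construction you cite from Lemma~\ref{lem:Dp_det_prop} is not needed for this direction.
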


\begin{proof}
    We omit the dependency in $\lambda$ for brevity and write by definition 
    \[
        \abs{\Delta_p(\beta_1; \vh) - \Delta_p(\beta_2; \vh)}
        = \tfrac{1}{\sqrt{n}}\ \abs{\dist(\beta_1 \vh, {\cal C}) - \dist(\beta_2 \vh, {\cal C})}
        \leq \tfrac{1}{\sqrt{n}} \ \norm{\beta_1 \vh - \beta_2 \vh} \leq \tfrac{\norm{\vh}}{\sqrt{n}} \abs{\beta_1 - \beta_2},
    \]
    where we used the fact that the Euclidean distance to a convex set is
    1-Lipschitz with respect to the Euclidean metric. Further, we have

    \begin{eqnarray*}
        \abs{\Delta_p(\beta_1) - \Delta_p(\beta_2)} 
        &=& \abs{ \E (\Delta_p(\beta_1; \vh)  - \Delta_p(\beta_2; \vh))}
        \leq \E \abs{ (\Delta_p(\beta_1; \vh)  - \Delta_p(\beta_2; \vh))}\\
        & \leq& \tfrac{\abs{\beta_1 - \beta_2}}{\sqrt{n}}\ \E \norm{\vh}
        \leq \abs{\beta_1 - \beta_2}.
    \end{eqnarray*}
    To conclude, note that the Lipschitz constant of the difference of two Lipschitz functions does not exceed the largest of the two Lipschitz constants, and that taking the absolute value does not change it.
\end{proof}

\bibliographystyle{IEEEtranSA}

\bibliography{pseudo,pseudo_supp}

\end{document}